\documentclass[11pt,english]{article}
\usepackage{color}
\usepackage{amsmath}
\usepackage{amssymb}
\usepackage[margin=1in]{geometry}
\usepackage{subfigure}
\usepackage{harvard}
\usepackage{lscape}
\usepackage{amsthm}
\usepackage{indentfirst}
\usepackage[bottom]{footmisc}
\usepackage{amsfonts}
\usepackage{rotating}
\usepackage[]{authblk}
\usepackage{setspace}

\newtheorem{theorem}{Theorem}
\newtheorem*{theorem*}{Theorem}
\newtheorem{corollary}{Corollary}
\newtheorem{definition}{Definition}
\newtheorem{definition*}{Definition}
\newtheorem*{setting*}{Setting}
\newtheorem{lemma}{Lemma}
\newtheorem{lemma*}{Lemma}
\newtheorem{proposition}{Proposition}

\newcommand{\msf}[1]{\mathsf{#1}}

\newcommand{\mbb}[1]{\mathbb{#1}}

\newcommand{\Exp}{\mathbb{E}}

\newcommand{\ord}{\lceil n\beta \rceil}

\newcommand{\trsp}{^\top}

\title{Performance-based regularization\\ in \\mean-CVaR portfolio optimization}
\author{Noureddine El Karoui\thanks{Department of Statistics, University of California, Berkeley.
Email: nkaroui@stat.berkeley.edu}~~~
Andrew E.B. Lim\thanks{Department of Industrial of Engineering \& Operations Research, University of California, Berkeley.
Email: lim@ieor.berkeley.edu}
~~~Gah-Yi Vahn\thanks{Corresponding author. Department of Industrial of Engineering \& Operations Research, University of California, Berkeley.
Email: gyvahn@ieor.berkeley.edu~ }}
\date{\today}

\begin{document}
\maketitle


\textbf{Abstract}
We introduce performance-based regularization (PBR), a new approach to addressing estimation risk in data-driven
optimization, to mean-CVaR portfolio optimization. We assume the available log-return data is iid, and detail the
approach for two cases: nonparametric and parametric (the log-return  distribution  belongs  in  the elliptical  family).
The nonparametric PBR method penalizes portfolios with large variability in mean and CVaR estimations.
The parametric PBR method solves the empirical  Markowitz problem instead of the empirical  mean-CVaR problem,  as the solutions  of  the Markowitz and mean-CVaR problems are equivalent when the log-return distribution is elliptical.
We derive the asymptotic behavior of the nonparametric PBR solution,
which leads to insight into the effect of penalization, and justification of the parametric PBR method.
We also show via simulations that the PBR methods produce efficient frontiers that are, on average, closer to the population
efficient frontier than the empirical approach to the mean-CVaR problem, with less variability.
\strut
\\
\\
\textbf{Keywords: }performance-based regularization; portfolio optimization; Conditional Value-at-Risk; coherent risk measure; portfolio estimation error
\vspace{0.3cm}

\strut

\setstretch{1}

\newpage
\section{Introduction}
In recent years, there has been a growing interest in Conditional Value-at-Risk (CVaR) as a financial risk measure.
This interest is based on two key advantages of CVaR over Value-at-Risk (VaR), the risk measure of choice in the
financial industry over the last twenty years. Firstly, $CVaR(\beta)$, the conditional expectation
of losses in the top $100(1-\beta)\%$ ($\beta = 0.95, 0.99$ are typical values used in industry),
is more informative about the tail end of the loss distribution than $VaR(\beta)$, which is only the
\textit{threshold} for losses in the top $100(1-\beta)\%$. Secondly, CVaR satisfies the four coherence axioms of
\citeasnoun{Artzner-1998} [\citeasnoun{Acerbi-Tasche-2001a}], whereas VaR fails the subadditivity requirement.

Portfolio optimization with CVaR as a risk measure is first studied by \citeasnoun{Rockafellar-and-Uryasev-2000},
who show that empirical CVaR minimization can be formulated as a linear program. Subsequent works include CVaR
optimization for a portfolio of credit instruments [\citeasnoun{andersson2001credit}] and derivatives
[\citeasnoun{alexander2006derivatives}], and portfolio optimization based on extensions of CVaR [\citeasnoun{mansini2007}].
However, most discussions of CVaR in portfolio optimization to date are concerned with formulation and tractability of the
problem, and assume full knowledge of the distribution of the portfolio loss. In practice, one cannot
ignore the fact that the loss distribution is not known and must be estimated from historical data, constructed from
expert knowledge, or a combination of both. Naive estimation of the loss distribution can pose serious problems ---
\citeasnoun{Lim-Shant-Vahn-2010} demonstrates how fragile the solution to the empirical mean-CVaR problem is, even in
the ideal situation of having iid Gaussian log-return data.

The issue of estimation errors in portfolio optimization is not, however, new knowledge. The estimation issue for the
classical Markowitz (mean-variance) problem has been raised as early as 1980 [\citeasnoun{jobson-korkie}].
There have since been many suggestions for mitigating this issue for the Markowitz problem; two main
approaches are robust optimization [\citeasnoun{goldfarb-iyengar}] and what we call ``standard regularization''
[\citeasnoun{chopra1993}, \citeasnoun{frost-savarino}, \citeasnoun{jagannathan2003risk}, \citeasnoun{demiguel2009generalized}].
The robust optimization approach is to take the source of uncertainty (e.g. the asset log-returns, or its distribution),
specify an uncertainty set about the source, and minimize the worst-case return-risk problem over this uncertainty set.
The standard regularization approach is to solve the empirical mean-variance problem, but with a constraint on the size
of the solution, as measured by $L_2$ or a more generalized norm. The term ``regularization'' is adopted from statistics
and machine learning, where it refers to controlling for the size of the decision variable for better out-of-sample
performance [\citeasnoun{elem-stat-learn}]. Both robust optimization and standard regularization approaches have
been studied for the mean-CVaR problem; \citeasnoun{gotoh-shinozaki-robust} and \citeasnoun{zhu2009worst} show
implementations of the robust optimization approach when the source of uncertainty is, respectively, the log-return
vector and the log-return distribution, and \citeasnoun{gotoh-takeda} demonstrates implementation of standard regularization.

In this paper, we propose performance-based regularization (PBR), a new approach to addressing estimation
risk in data-driven optimization, and illustrate this method for the mean-CVaR portfolio optimization problem.
We demonstrate PBR for two situations: the investor has nonparametric or parametric (specifically, the elliptical family
of distributions describe the log-returns) information on the log-returns.

The nonparametric PBR method penalizes portfolios with large variability in mean and CVaR
estimations. Specifically, we penalize the sample variances of the mean and CVaR estimators. The resulting
problem is a combinatorial optimization problem, however we show that its convex relaxation, a quadratically-constrained
quadratic program, is tight. The problem can be interpreted as a chance-constrained program that picks portfolios for
which approximate probabilities of deviations of the mean and CVaR estimations from their true values are constrained.

The parametric PBR method solves the empirical Markowitz problem instead of the empirical mean-CVaR
problem if the underlying log-return distribution is in the elliptical family (which includes Gaussian and $t$ distributions).
This is based on the observation that CVaR of a portfolio is a weighted sum of the portfolio mean and the portfolio variance
if the log-return distribution is in the elliptical family, resulting in the equivalence of the population efficient
frontiers\footnote{By ``population'' we mean having a perfect market knowledge.} of the Markowitz and mean-CVaR problems.
As we are striving to reach the population frontier with greater stability, it makes intuitive sense to use the empirical
Markowitz solution in lieu of the empirical mean-CVaR solution for this model.

The PBR methods are anticipated to enhance the performance by yielding solutions that are, on average, closer to
achieving the original objective (minimize the \emph{true} CVaR subject to \emph{true} return equal to some level).
As such, the PBR approach is fundamentally different from robust optimization, in that robust
optimization deals with the source of uncertainty to minimize the worst-case performance, whereas PBR deals with the
performance uncertainty to increase the average performance. Comparing to the statistics/machine learning literature,
PBR for the nonparametric case can be seen as an extension of standard regularization,
in that nonparametric PBR also constrains the decision variable, however does so indirectly
through penalizing the variability of mean and CVaR estimations.

Details of the nonparametric PBR method can be found in Sec.~\ref{subsec:penalty} and the parametric PBR
method in Sec.~\ref{subsec:Mark}. In Sec.~\ref{sec:theory}, we provide theoretical results for the PBR methods
after deriving the Central Limit Theorem for the nonparametric PBR solution.
In Sec.~\ref{sec:numerical}, we evaluate the PBR methods against the straight-forward approach of solving the empirical mean-CVaR problem
for three different log-return models via simulation experiments. We find that on average, the sample efficient frontiers of the PBR solutions
are closer to the population efficient frontier than those of the straight-forward approach.

\section{Mean-CVaR portfolio optimization}
\textbf{Notations.} Throughout the paper, we denote convergence in probability by $\overset{P}{\rightarrow}$ and in distribution by $\Rightarrow$.
The notation $X \overset{d}{=}Y$ for two random variables $X$ and $Y$ means they have the same distribution, and the symbol $X\sim \mathcal{D}$ is used to indicate that
the random variable $X$ follows some standard distribution $\mathcal{D}$.

\subsection{Setup}\label{subsec:demo}
An investor is to choose a portfolio $w\in\mathbb{R}^p$ on $p$ different assets. Her wealth is normalized to 1, so $w\trsp {1}_p=1$, where $1_p$ denotes $p\times 1$ vector of ones.
The log-returns of the $p$ assets is denoted by $X$, a $p\times 1$ random vector,
which follows some absolutely continuous distribution $F$ with twice continuously differentiable pdf
and finite mean $\mu$ and covariance $\Sigma$.
The investor wants to pick a portfolio that minimizes the CVaR of the portfolio loss at level $100(1-\beta)\%$, for some $\beta\in(0.5,1)$, while reaching an expected return $R$.
That is, she wants to solve the following problem:
\begin{equation}\tag{CVaR-pop}\label{eq:population problem}
\begin{array}{clll}
w_0=\underset{w}{\mathrm{argmin}} & CVaR(-w\trsp {X};\beta) && \\
s.t. &  w\trsp {\mu} = R &  \\
& w\trsp 1_p   =1,&
\end{array}
\end{equation}
where
\begin{equation}\label{def:CVaR RU}
CVaR(-w\trsp {X};\beta) := \underset{\alpha}{\min}~~\alpha+\frac{1}{1-\beta}\mbb{E}(-w\trsp {X}- \alpha)^+,
\end{equation}
as in \citeasnoun{Rockafellar-and-Uryasev-2000}.

In reality, the investor does not know the distribution $F$. We assume the investor observes $n$ iid realizations of asset returns, $\mathsf{X} = [X_1,\ldots,X_n] \in \mathbb{R}^{p\times n} $.
Then the most straight-forward thing is to solve the following problem, where plugged-in estimators replace the true CVaR and return values:
\begin{equation}\tag{CVaR-emp}\label{eq:empricial problem}
\begin{array}{clll}
\hat{w}_n = \underset{w}{\mathrm{argmin}} & \widehat{CVaR}_n(-w\trsp \mathsf{X};\beta) && \\
s.t. &  w\trsp \hat{\mu}_n = R &  \\
& w\trsp 1_p   =1&
\end{array}
\end{equation}
where
\begin{equation}\label{eq:CVAR est}
\widehat{CVaR}_n(-w\trsp \mathsf{X};\beta):= \underset{\alpha\in\mathbb{R}}{\min}~~
\alpha+\displaystyle \frac{1}{n(1-\beta)}\sum_{i=1}^{n}(-{w}\trsp X_{i}-\alpha)^{+},
\end{equation}
is a sample average estimator for $CVaR(-w\trsp {X};\beta)$ and $\hat{\mu}_n=n^{-1}\sum_{i=1}^nX_i$ is the sample mean of the
observed asset log-returns.

\subsection{Estimation risk of the empirical solution}\label{subsec:demo}
Asymptotically, as the number of observations $n$ goes to infinity (with $p$ constant), $\hat{w}_n$ converges in probability to $w_{0}$ [see Sec.~\ref{subsec:consistency} for details]. In practice, however, the investor has
a limited number of relevant observations. If, for example, there are $n=250$ iid daily observations, and the investor wishes to control the top $5\%$
of the losses, then there are only $250\times 0.05 = 12.5$ points to estimate the portfolio CVaR at level $\beta = 0.95$. For stock log-returns,
$n=250$ iid daily observations is rather optimistic; there is ample empirical evidence that suggests daily log-returns are
non-stationary over this period of time [\citeasnoun{mcneil2005QRM}]. Even for time scales with more evidence for stationarity (e.g. bi-weekly/montly), the stationarity tends to last for no more than 5 years [\citeasnoun{mcneil2005QRM}].

As a result, solving (\ref{eq:empricial problem}) using real data results in highly unreliable solutions. Let us illustrate this point, assuming an ideal market scenario.
There are $p=10$ stocks, with daily returns following a Gaussian distribution\footnote{the parameters are the sample mean and covariance matrix of data from 500 daily returns of 10 different US stocks from Jan 2009-- Jan 2011}:
$X\sim\mathcal{N}(\mu_{sim},\Sigma_{sim})$, and the investor has $n=250$ iid observations of $X$. In the following, we conduct an experiment similar to those found in \citeasnoun{Lim-Shant-Vahn-2010}, to evaluate the performance and reliability of solving (\ref{eq:empricial problem}) under this ideal scenario. Briefly, the experimental procedure is as follows:
\begin{itemize}
  \item Simulate 250 historical observations from $\mathcal{N}(\mu_{sim},\Sigma_{sim})$.
  \item Solve (\ref{eq:empricial problem}) with $\beta=0.95$ and some return level $R$ to find an instance of $\hat{w}_n$.
  \item Plot the realized return $\hat{w}_n\trsp \mu$ versus realized risk ${CVaR}(-\hat{w}_n\trsp X;\beta)$; this corresponds to one grey point in Fig. (\ref{Fig:empirical scatter}).
  \item Repeat for different values of $R$ to obtain a sample efficient frontier.
  \item Repeat many times to get a distribution of the sample efficient frontier.
\end{itemize}

The result of the experiment is summarized in Fig.~(\ref{Fig:empirical scatter}). The green curve corresponds to the population efficient frontier.
Each of the grey dots corresponds to a solution instance of (\ref{eq:empricial problem}). There are two noteworthy observations:
the solutions $\hat{w}_n$ are sub-optimal, and they are highly variable.
For instance, for a daily return of $0.1\%$, the CVaR ranges from $1.3\%$ to $4\%$.

\begin{figure}[h]
\centering
\includegraphics[width=0.6\paperwidth]{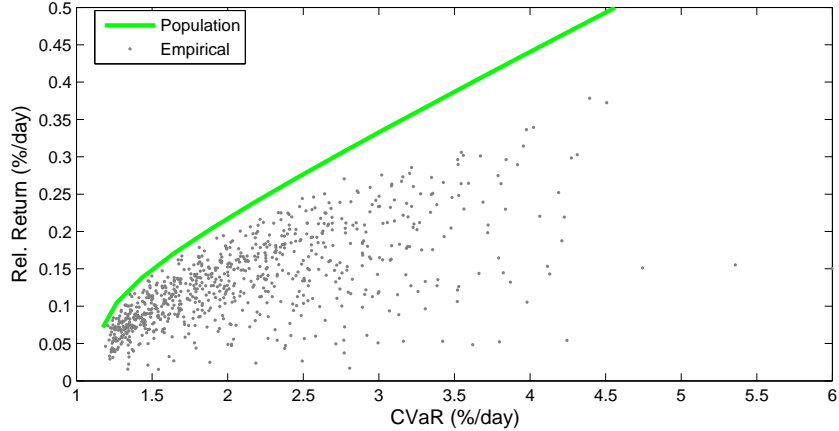}
\caption{\small Distribution of realized daily return (\%) vs. daily risk (\%) of empirical solution $\hat{w}$.
Green line represent the population frontier, i.e.~the efficient frontier corresponding to solving (\ref{eq:population problem}).}
\label{Fig:empirical scatter}
\end{figure}

In the following section, we introduce \emph{performance-based regularization} (PBR) as an approach to improve upon (\ref{eq:empricial problem}).
The PBR approach is so-called because its goal is to improve upon $\hat{w}_n$ in terms of its performance, i.e.
closeness to the population efficient frontier, ideally with less variability.
We describe PBR for two cases: the investor has nonparametric or parametric knowledge of the market.

\section{Performance-based regularization}\label{sec:penalized problem}
\subsection{Nonparametric case}\label{subsec:penalty}
In the nonparametric case, we assume the asset log-returns $X$ follows some distribution $P$ with finite mean $\mu$ and covariance $\Sigma$, and the investor has $n$ iid observations: $\mathsf{X} = [X_1,\ldots,X_n]\in\mbb{R}^{p\times n}$.
The nonparametric PBR approach to (\ref{eq:population problem}) is to solve the following problem:
\begin{eqnarray}\label{eq:penalized problem}
     \begin{array}{clll}
      \underset{w}{\min} & \widehat{CVaR}_n(-w\trsp \mathsf{X};\beta) &&\\
      s.t. &  w\trsp \hat{\mu}_n = R& \\
         & w\trsp 1_p  = 1& \\
         & P_1(w)\leq U_1& \\
         & P_2(w) \leq U_2&
\end{array}
\end{eqnarray}
where $P_1$ and $P_2$ are penalty functions that characterize the uncertainty associated with
$w\trsp \hat{\mu}_n$ and $\widehat{CVaR}_n(-w\trsp \mathsf{X};\beta)$ respectively.
The idea is to penalize decisions $w$ for which the uncertainty about the true values $w\trsp \mu$ and $CVaR(-w\trsp {X};\beta)$ is large.

What, then, are appropriate penalty functions? Recall that we are trying to find solutions that yield efficient frontiers that are
closer to the population efficient frontier, ideally with smaller variability.
Thus the variances of $w\trsp \hat{\mu}_n$ and $\widehat{CVaR}_n(-w\trsp \mathsf{X};\beta)$ make appropriate penalty functions,
as they characterize the deviation from the respective population values. The variance of $w\trsp \hat{\mu}_n$ is given by
$$Var(w\trsp \hat{\mu}_n) = \frac{1}{n^2}\sum_{i=1}^n Var(w\trsp X_i)=\frac{1}{n} w\trsp \Sigma w,$$
and the variance of $\widehat{CVaR}_n(-w\trsp \mathsf{X};\beta)$ is approximately equal to $\gamma_0^2/n(1-\beta)^2
=Var[\max(-w\trsp X - \alpha_\beta)]$, where
$$\alpha_\beta = \inf \{\alpha: P(-w\trsp X\geq \alpha) \leq 1-\beta\}$$
is the Value-at-Risk (VaR) of the portfolio $w$ at level $\beta$, due to the following lemma.
\begin{lemma}\label{lem:CV ast}
Suppose $\mathsf{X}=[X_1,\ldots,X_n]\overset{iid}{\sim} F$, where $F$ is absolutely continuous with twice continuously differentiable pdf.
Then
\begin{equation}
\frac{\sqrt{n}(1-\beta)}{\gamma_0}(\widehat{CVaR}_n(-w\trsp \mathsf{X};\beta)-CVaR(-w\trsp X;\beta)) \Rightarrow \mathcal{N}(0,1).
\end{equation}
\end{lemma}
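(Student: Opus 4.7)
The plan is to treat $\widehat{CVaR}_n$ as an M-estimator --- the minimized value of the convex sample criterion
\begin{equation*}
\hat g_n(\alpha) := \alpha + \frac{1}{n(1-\beta)}\sum_{i=1}^n (Y_i-\alpha)^+,\qquad Y_i := -w\trsp X_i,
\end{equation*}
with population analog $g(\alpha) := \alpha + (1-\beta)^{-1}\Exp(Y-\alpha)^+$. By Rockafellar--Uryasev, $g$ is minimized at $\alpha_\beta$ with $g(\alpha_\beta) = CVaR(-w\trsp X;\beta)$, and $\hat g_n$ at some $\hat\alpha_n$ with $\hat g_n(\hat\alpha_n) = \widehat{CVaR}_n(-w\trsp\mathsf{X};\beta)$. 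The strategy is to show
\begin{equation*}
\hat g_n(\hat\alpha_n) = \hat g_n(\alpha_\beta) + o_P(n^{-1/2}),
\end{equation*}
after which the ordinary univariate CLT applied to the iid sum $\hat g_n(\alpha_\beta)$ delivers the claim: the summand $\alpha_\beta + (1-\beta)^{-1}(Y-\alpha_\beta)^+$ has variance exactly $\gamma_0^2/(1-\beta)^2$.

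To prove the reduction I would decompose
\begin{equation*}
\hat g_n(\hat\alpha_n) - \hat g_n(\alpha_\beta) = \bigl([\hat g_n - g](\hat\alpha_n) - [\hat g_n - g](\alpha_\beta)\bigr) + \bigl(g(\hat\alpha_n) - g(\alpha_\beta)\bigr).
\end{equation*}
The deterministic piece is handled by Taylor expansion around the minimum of $g$: twice-continuous differentiability of the pdf of $w\trsp X$ gives $g'(\alpha_\beta) = 1 - P(Y>\alpha_\beta)/(1-\beta) = 0$ and $g''(\alpha_\beta) = f_Y(\alpha_\beta)/(1-\beta) > 0$, so $g(\hat\alpha_n) - g(\alpha_\beta) = O_P\bigl((\hat\alpha_n - \alpha_\beta)^2\bigr)$. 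Since $\hat\alpha_n$ is essentially an empirical $\beta$-quantile of a smoothly-distributed sample, the standard Bahadur sample-quantile result gives $\hat\alpha_n - \alpha_\beta = O_P(n^{-1/2})$, so this piece is $O_P(n^{-1})$.

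The main obstacle is the empirical-process piece. I would control it by stochastic equicontinuity of $\sqrt n (\hat g_n - g)$ around $\alpha_\beta$: the class $\{y \mapsto (y-\alpha)^+ : \alpha \text{ in a neighborhood of }\alpha_\beta\}$ is a uniformly bounded family of monotone functions, hence Donsker, so combined with $\hat\alpha_n\overset{P}{\rightarrow}\alpha_\beta$ one obtains $[\hat g_n - g](\hat\alpha_n) - [\hat g_n - g](\alpha_\beta) = o_P(n^{-1/2})$, which completes the proof. Equivalently, one can invoke the functional delta method on the Hadamard-differentiable map $F\mapsto CVaR(F;\beta)$, whose influence function at $F$ is $(1-\beta)^{-1}\bigl[(Y-\alpha_\beta)^+ - \Exp(Y-\alpha_\beta)^+\bigr]$ and again has variance $\gamma_0^2/(1-\beta)^2$. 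Either route finishes the argument; the real work is verifying the equicontinuity/Hadamard-differentiability step, while the CLT and Taylor expansion are routine.
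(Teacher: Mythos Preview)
Your argument is correct in outline and reaches the same limit, but the route is genuinely different from the paper's. The paper does not run an M-estimation/stochastic-equicontinuity argument at all: it introduces an order-statistic-based estimator $\widehat{CV2}_n(\mathsf{L};\beta)=(n-\lceil n\beta\rceil+1)^{-1}\sum_i L_i\mathbb{I}(L_i\ge L_{(\lceil n\beta\rceil)})$, cites Chen (2008) for its asymptotic normality, then shows by elementary piecewise-linear analysis that $\widehat{CVaR}_n$ differs from $\widehat{CV2}_n$ only by $o_P(n^{-1/2})$ terms, so the CLT transfers via Slutsky. In other words, the paper outsources the hard step to an existing result and keeps the connecting argument elementary; your proposal is self-contained and more in the spirit of the empirical-process machinery used later in the paper for $\hat w_n$ and $\hat w_n^v$, at the cost of invoking Donsker-class arguments.

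One small gap to flag: the class $\{y\mapsto (y-\alpha)^+:\alpha\text{ near }\alpha_\beta\}$ is \emph{not} a ``uniformly bounded family of monotone functions'' unless $Y$ is bounded, so that justification for Donsker fails as stated. The conclusion still holds, since this class is Lipschitz in the parameter $\alpha$ with envelope $|Y|+C$ having finite second moment (the paper's setup assumes finite covariance $\Sigma$), which is enough for the bracketing/Donsker property you need. Fix that justification and your argument goes through.
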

\begin{proof}
See Appendix \ref{app:note on cvar}.
\end{proof}

Of course, we do not know the true variances, so we contend with
\emph{sample} variances of the estimators $w\trsp \hat{\mu}_n$ and $\widehat{CVaR}_n(-w\trsp \mathsf{X};\beta)$. That is, we consider
the following penalty functions:
\begin{eqnarray*}
P_1(w) &=& \displaystyle\frac{1}{n}w\trsp \hat\Sigma_nw,~\text{where}~\hat\Sigma_n = Cov(\mathsf{X}),\\
P_2(w) &=& \displaystyle\frac{1}{n(1-\beta)^2}{z}\trsp \Omega_n{z},~\text{where} \\
\Omega_n &=& \frac{1}{n-1}[I_n-n^{-1}1_n1_n\trsp],~I_n = n\times n~ \text{identity matrix},~\text{and} \\
z_i &=& \max(0, -w\trsp X_i-\alpha)~\text{for}~ i=1,\ldots, n.
\end{eqnarray*}

For the rest of this paper, we investigate the nonparametric PBR method with sample variance penalty functions.
Of course, this is just one particular choice, and it opens up the question of how different penalty functions affect the solution performance,
and whether there are such things as ``optimal'' penalty functions. These are difficult questions worthy of further research,
and we do not investigate them in this paper. Nevertheless,
we derive the asymptotic behavior of the solution of nonparametric PBR method in Sec.~\ref{sec:theory}, which gives us some insight into
i) how one could compare the effects of different penalty functions and ii) the first-order effect of many typical penalty functions.

The nonparametric PBR method with sample variance of return and CVaR estimators as penalties is:
\begin{align}
\begin{array}{rrl}
(\hat{\alpha}^v_n,\hat{w}^v_n,\hat{z}^v_{n})~~=~~\underset{\alpha,w,{z}}{\textrm{argmin}}&& \alpha+\displaystyle\frac{1}{n(1-\beta)}\sum_{i=1}^nz_i \\
\end{array}\notag\\
\begin{array}{crll}
      &s.t.~~~w\trsp \hat{\mu}_n &=& R  \\
      &    w\trsp 1_p  &=& 1  \\
      &    \displaystyle\frac{1}{n}w\trsp \hat\Sigma_nw  &\leq& U_1 \\
      &    \displaystyle\frac{1}{n(1-\beta)^2}{z}\trsp \Omega_n{z} &\leq&  U_2 \\
      &    z_i =  \max(0, -w\trsp X_i-\alpha),& i=&1,\ldots, n.  \\
\end{array}\tag{CVaR-pen}\label{eq:penalized problem-asympvar}
\end{align}

At first glance, (\ref{eq:penalized problem-asympvar}) is a combinatorial optimization problem due to the cutoff variables $z_i$, $i=1,\ldots,n$.
However, it turns out that the convex relaxation of (\ref{eq:penalized problem-asympvar}), a quadratically-constrained quadratic program (QCQP), is
tight, thus we can solve (\ref{eq:penalized problem-asympvar}) efficiently. Before stating the result, let us first introduce the convex relaxation of (\ref{eq:penalized problem-asympvar}):

\begin{align}
\underset{\alpha,w,{z}}{\min}&~~~~~~~ \alpha+\displaystyle\frac{1}{n(1-\beta)}\sum_{i=1}^nz_i \notag\\
&\begin{array}{crll}
 &s.t.~~~  w\trsp \hat{\mu}_n &= R  &(\nu_1) \\
         & w\trsp 1_p  &= 1 &(\nu_2)\\
         & \displaystyle\frac{1}{n}w\trsp \hat{\Sigma}_nw &\leq U_1 &(\lambda_1)\\
         & \displaystyle\frac{1}{n(1-\beta)^2}{z}\trsp \Omega_n{z} &\leq U_2 &(\lambda_2)\\
         & z_i  \geq 0& i=1,\ldots, n &({\eta}_1)\\
         & z_i  \geq -w\trsp X_i-\alpha, &i=1,\ldots, n &({\eta}_2)
\end{array}\tag{CVaR-relax}\label{eq:penalized problem-QCQP}
\end{align}
and its dual (where the dual variables correspond to the primal constraints as indicated above):
\begin{equation}\tag{CVaR-relax-d}\label{eq:dual of reg}
\begin{array}{clll}
      \underset{\nu_1,\nu_2,\lambda_1,\lambda_2,{\eta}_1,{\eta}_2}{\max}
      & g(\nu_1,\nu_2,{\eta}_1,{\eta}_2,\lambda_1,\lambda_2)&&\\
      s.t. & {\eta}_2\trsp 1_n=1  &&\\
         & \lambda_1\geq0,\lambda_2\geq0 &&\\
         & {\eta}_1\geq{0},{\eta}_2\geq{0} &&
\end{array}
\end{equation}
where
\begin{eqnarray*}
g(\nu_1,\nu_2,\lambda_1,\lambda_2,{\eta}_1,{\eta}_2) &=&  -\frac{n}{2\lambda_1}(\nu_1\hat{\mu}_n+\nu_21_p-\mathsf{X}{\eta}_2)\trsp \hat{\Sigma}_n^{-1}(\nu_1\hat{\mu}_n+\nu_21_p-\mathsf{X}{\eta}_2)\\
&& - \frac{n(1-\beta)^2}{2\lambda_2}({\eta}_1+{\eta}_2)\trsp \Omega_n^\dagger({\eta}_1+{\eta}_2) + R\nu_1+\nu_2-U_1\lambda_1-U_2\lambda_2,
\end{eqnarray*}
and $\Omega_n^\dagger$ is the Moore-Penrose pseudo inverse of the singular matrix $\Omega_n$.

We now show (\ref{eq:penalized problem-asympvar}) can be solved efficiently by its convex relaxation:
\begin{theorem}\label{theorem:QCQP}
Let $(\alpha^*,w^*,{z}^*,\lambda_1^*,\lambda_2^*,{\eta}^*_1,{\eta}^*_2)$ be the primal-dual optimal point of (\ref{eq:penalized problem-QCQP}) and (\ref{eq:dual of reg}).
If ${\eta}^*_2\neq 1_n/n$, then $(\alpha^*,w^*,{z}^*)$ is an optimal point of (\ref{eq:penalized problem-asympvar}).
Otherwise, if ${\eta}^*_2= 1_n/n$, we can find the optimal solution to (\ref{eq:penalized problem-QCQP}) by solving (\ref{eq:dual of reg}) with an additional constraint ${\eta}_1\trsp 1_n\geq\delta$, where $\delta$ is a constant $0<\delta\ll1$.
\end{theorem}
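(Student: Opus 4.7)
The plan is to study the KKT conditions of the convex relaxation (\ref{eq:penalized problem-QCQP}) and show that any primal--dual optimum automatically enforces $z_i^* = \max(0, -w^{*\trsp} X_i - \alpha^*)$ for every $i$, except in the degenerate situation $\eta_2^* = 1_n/n$. Once this identity holds, $(\alpha^*, w^*, z^*)$ is feasible for (\ref{eq:penalized problem-asympvar}) with the same objective value as the relaxation --- which is a lower bound --- and is therefore optimal.

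First I would form the Lagrangian of (\ref{eq:penalized problem-QCQP}) with the multipliers named in the statement. Stationarity in $\alpha$ gives $\eta_2^{*\trsp} 1_n = 1$; stationarity in $z_i$ gives
\begin{equation*}
\eta_{1,i}^* + \eta_{2,i}^* \;=\; \frac{1}{n(1-\beta)} + \frac{2\lambda_2^*}{n(1-\beta)^2}(\Omega_n z^*)_i ;
\end{equation*}
and stationarity in $w$ produces the closed form for $w^*$ whose substitution reproduces the dual objective $g$ in (\ref{eq:dual of reg}). Introduce the set of ``slack'' indices $I_* = \{i : z_i^* > \max(0, -w^{*\trsp}X_i - \alpha^*)\}$ at which the relaxation fails to be tight. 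Complementary slackness applied to both $z_i \ge 0$ and $z_i \ge -w\trsp X_i - \alpha$ forces $\eta_{1,i}^* = \eta_{2,i}^* = 0$ for every $i \in I_*$, so the $z_i$-stationarity equation requires $\lambda_2^* > 0$ and $z_i^* - \bar{z}^* = -(n-1)(1-\beta)/(2\lambda_2^*)$ for each such $i$. All slack values therefore coincide and sit strictly below the sample mean $\bar{z}^*$, consistent with the fact that locally shrinking $z_{i_0}^*$ for $i_0\in I_*$ would violate the binding variance constraint $z^\trsp \Omega_n z \le n(1-\beta)^2 U_2$.

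The main obstacle is to contradict $I_* \neq \emptyset$ under the hypothesis $\eta_2^* \neq 1_n/n$. I would build a first-order feasible direction $(\delta\alpha, \delta w, \delta z)$ in which $\delta z$ is a linear combination of $e_{i_0}$ (for some $i_0 \in I_*$) and the centered part $P\eta_2^* := \eta_2^* - 1_n/n$, with coefficients chosen so that (i) $1_n^\trsp \delta z = 0$ (sample mean of $z$ preserved), (ii) $(\delta z)^\trsp \Omega_n z^* = 0$ (variance preserved to first order, using the explicit value of $(\Omega_n z^*)_{i_0}$ coming from the stationarity analysis), and (iii) the strict active inequalities at $i_0$ remain feasible for small $\epsilon$. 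A null-space perturbation $\delta w$ preserves the affine equalities $w\trsp\hat{\mu}_n=R$ and $w\trsp 1_p=1$, and a small $\delta\alpha$ handles the constraints $z_j \ge -w\trsp X_j - \alpha$ at active indices outside $I_*$. Because $P\eta_2^* \neq 0$ under the hypothesis, the coefficients can be tuned to make the first-order change in the objective, namely $\delta\alpha + (n(1-\beta))^{-1} 1_n^\trsp \delta z$, strictly negative, contradicting optimality of $(\alpha^*, w^*, z^*)$; hence $I_* = \emptyset$ and the relaxation is tight. In the degenerate case $\eta_2^* = 1_n/n$ the direction $P\eta_2^*$ vanishes and the construction collapses, which is expected because $\Omega_n$ annihilates $1_n$ and so the variance term in (\ref{eq:dual of reg}) is blind to shifts of $\eta_2$ along $\ker(\Omega_n) = \mathrm{span}(1_n)$; the cut $\eta_1^\trsp 1_n \ge \delta$ excludes this degenerate face, and continuity of the primal--dual map for QCQPs as $\delta \downarrow 0$ lets us recover a tight primal optimum via the preceding argument applied at each $\delta > 0$. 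The delicate point is verifying that the proposed perturbation is simultaneously feasible to first order for the quadratic variance inequality, the affine equalities, and the active inequalities outside $I_*$; the hypothesis $\eta_2^* \neq 1_n/n$ enters precisely through the nonvanishing of $P\eta_2^*$, which is what makes a strict descent direction realizable.
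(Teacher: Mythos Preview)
Your approach is genuinely different from the paper's, and the descent-direction construction you sketch does not go through as written. Note that $1_n^\top P\eta_2^* = 0$ while $1_n^\top e_{i_0} = 1$; thus your condition (i), $1_n^\top \delta z = 0$, already forces the $e_{i_0}$-coefficient in $\delta z$ to vanish, leaving $\delta z$ proportional to $P\eta_2^*$ alone. You then have no freedom left to ``tune'' coefficients after imposing (ii), and in any case the objective change reduces to $\delta\alpha$ (since $1_n^\top \delta z = 0$), whose sign you have not controlled --- your remark that $\delta\alpha$ is used to ``handle the constraints $z_j \ge -w^\top X_j - \alpha$'' at active indices suggests $\delta\alpha > 0$, the wrong sign for descent. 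So the heart of the argument, that $\eta_2^* \neq 1_n/n$ yields a strict descent direction, is not established.

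The paper avoids this primal construction entirely. Its key step is a structural lemma: examining the KKT system for $z$ alone, one shows that if the slack set $V(z^*)$ is nonempty then the zero set $S_1(z^*)=\{i:z_i^*=0\}$ must be empty (the two stationarity identities at an index in $V$ and an index in $S_1$ are incompatible). Hence $z_i^*>0$ for \emph{all} $i$, so complementary slackness gives $\eta_1^* = 0$ globally, not just on $I_*$. One then looks at the dual: for fixed $\eta_2$, the inner maximization $\max_{\eta_1\ge 0}\;-(\eta_1+\eta_2)^\top\Omega_n^\dagger(\eta_1+\eta_2)$ is achieved at any $\eta_1 = a1_n - \eta_2$ with $a \ge \max_i \eta_{2,i}$; consequently $\eta_1 = 0$ is optimal for this subproblem iff $\eta_2 = 1_n/n$. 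If $\eta_2^* \neq 1_n/n$ this contradicts $\eta_1^* = 0$, so $V(z^*) = \emptyset$ and the relaxation is tight. The missing idea in your argument is precisely this dichotomy (``$V\neq\emptyset \Rightarrow S_1=\emptyset$''), which upgrades the local information $\eta_{1,i}^*=0$ on $I_*$ to the global statement $\eta_1^*=0$, after which the contradiction is read off the dual rather than manufactured in the primal.
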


\begin{proof}
See Appendix \ref{app: qcqp proof}.
\end{proof}

\textbf{Remark 1 -- Bias introduced by penalty functions.}
\\Note that if the penalties induce active constraints (i.e. $U_1,U_2$ are small enough),
$\hat{w}_n^v$ does not converge to $w_0$ as $n\rightarrow\infty$, i.e. the penalty constraints introduce bias.
This is not a problem, however, because we are concerned with finite sample
performance, not asymptotic consistency. In Sec.~\ref{sec:numerical}, we see that
the bias introduced by the penalized solution is actually in the direction that improves performance in the return-risk space.

\textbf{Remark 2 -- Interpretation as chance-programming.}
\\Both $w\trsp \hat{\mu}_n$ and $\widehat{CVaR}_n(-w\trsp \mathsf{X};\beta)$ are asymptotically normally distributed, so
constraining their variances results in the reduction of the corresponding confidence intervals at some fixed level $\epsilon$.
Hence penalizing their variances can be interpreted as chance-programming [\citeasnoun{chance1958}]. Analytically, the chance constraint
on $|w\trsp \hat{\mu}_n-w\trsp \mu|$ can be transformed to a penalty constraint in the following manner:
\begin{eqnarray*}
P\left(|w\trsp \hat{\mu}_n-w\trsp \mu|\leq t \right) &\geq& 1-\epsilon  \\
\approx 2\Phi\left(\frac{t}{\sqrt{w\trsp \Sigma w/n}}\right)-1 &\geq& 1-\epsilon~~ \text{for large}~ n \\
\iff~~~~~~ \frac{1}{n}w\trsp \Sigma w &\leq& \left(\frac{t}{\Phi^{-1}(1-\epsilon/2)}\right)^2.
\end{eqnarray*}
\noindent That is, for a fixed level $\epsilon$, there is a one-to-one mapping between the parameter $U_1$ of the penalty constraint $w\trsp \hat{\Sigma}_n w/n\leq U_1$
and the parameter $t$ of the chance constraint. The (asymptotic) variance penalty on $\widehat{CVaR}_n(-w\trsp \mathsf{X};\beta)$
has a similar interpretation as a chance constraint.

However, the penalty method can be interpreted as chance programming only if we choose the variance of the respective estimators as the penalty functions.
Although we focus on the sample variance penalty function in this paper,
we assert that the penalty method need not be restricted to this particular choice.

\subsection{Parametric case}\label{subsec:Mark}
In the parametric case, we assume the asset log-returns follow an elliptical distribution; i.e. the level sets of the
distribution density function form ellipsoids. An elliptical distribution has a stochastic representation as follows
[see \citeasnoun{anderson1958introduction} or \citeasnoun{muirhead1982aspects}]:
\begin{equation}\label{eq:elliptical model}
X \overset{d}{=} \mu + Y\Sigma^{1/2}U
\end{equation}
where $\mu$ is the mean vector, $U$ is a $p\times 1$ random vector uniformly distributed on the $p$-dimensional sphere of
radius 1 (i.e. $U \overset{d}{=} Z_p/||Z_p||_2$, $Z_p\sim\mathcal{N}(0,I_p)$), and $Y$ is a non-negative random variable
independent of $U$. A special case is the Gaussian model:
choosing $Y = \chi_p$, we get $X\sim\mathcal{N}(\mu,\Sigma)$. The elliptical family of distributions can thus be thought
of as a generalization of the Gaussian family, and may be more reasonable for financial modeling because the non-random mixing
of covariances can capture non-trivial tail dependence and heavier tails [\citeasnoun{mcneil2005QRM}]. In particular,
$t$-distributions also belong in the elliptical family.

The parametric PBR method is to solve the empirical Markowitz problem instead of (\ref{eq:empricial problem})
if $X$ belongs in the elliptical family:
\begin{equation}\tag{Mark-emp}\label{eq:Markowitz}
\begin{array}{crll}
\hat{w}^M_{n}=\underset{w}{\text{argmin}} & w\trsp \hat{\Sigma}_nw && \\
s.t. &  w\trsp \hat{\mu}_n = &R &  \\
& w\trsp 1_p   =&1.&
\end{array}
\end{equation}
The method is based on Lemma~\ref{lem:equiv}, which shows that the
solutions of (\ref{eq:population problem}) and the population Markowitz problem [which is the same as (\ref{eq:Markowitz})
except with ($\Sigma,\mu$) replacing ($\hat{\Sigma}_n,\hat{\mu}_n$)] are equivalent if $X$ is elliptically distributed. Lemma~\ref{lem:equiv}
is an extension of results mentioned elsewhere [\citeasnoun{Rockafellar-and-Uryasev-2000}, \citeasnoun{DeGiorgi-2002}] that show the equivalence
of the solutions of (\ref{eq:population problem}) and the population Markowitz problem when $X$ is Gaussian.
However, to our knowledge, the implication that we can solve (\ref{eq:Markowitz}) in lieu of (\ref{eq:empricial problem})
to obtain a better-performing solution has not been asserted.

\begin{lemma}\label{lem:equiv}
Suppose $X\sim Ellip(\mu,\Sigma,Y)$ as in (\ref{eq:elliptical model}) and $Y>0$. Then the solution of the
population mean-CVaR problem (\ref{eq:population problem}) and the population Markowitz problem are equivalent.
\end{lemma}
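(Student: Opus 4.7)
The plan is to exploit the stochastic representation of $X$ in (\ref{eq:elliptical model}) to reduce the portfolio CVaR to an expression that is affine in $w\trsp\mu$ and linear in $\sqrt{w\trsp\Sigma w}$. On the feasible set $\{w\trsp\mu = R\}$ the linear term is fixed, and the problem collapses to minimizing $w\trsp\Sigma w$, which is exactly the population Markowitz criterion.

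First I would establish a one-dimensional distributional identity for the portfolio loss. Because $U$ is uniform on the unit sphere, it is rotationally invariant, so for every $a\in\mathbb{R}^p$ one has $a\trsp U \overset{d}{=} \|a\|_2\,V$, where $V := e_1\trsp U$ has a fixed univariate distribution not depending on $a$. Applying this with $a = \Sigma^{1/2}w$ and using independence of $Y$ and $U$ in (\ref{eq:elliptical model}),
\begin{equation*}
-w\trsp X \overset{d}{=} -w\trsp \mu + \sqrt{w\trsp \Sigma w}\,W, \qquad W := -YV \overset{d}{=} YV,
\end{equation*}
where the distribution of $W$ depends only on $Y$, not on $w$ (and is symmetric about $0$).

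Second I would invoke the translation invariance and positive homogeneity of CVaR, both of which follow directly by a change of variables $\alpha \mapsto \alpha' + b$ or $\alpha \mapsto c\alpha'$ in (\ref{def:CVaR RU}), to deduce
\begin{equation*}
CVaR(-w\trsp X;\beta) = -w\trsp \mu + c_\beta \sqrt{w\trsp \Sigma w}, \qquad c_\beta := CVaR(W;\beta),
\end{equation*}
where $c_\beta$ is a scalar depending only on $Y$ and $\beta$. Under the hypotheses $Y>0$ and $\beta>1/2$, $W$ is a non-degenerate symmetric random variable, so $VaR_\beta(W)>0$ and hence $c_\beta \geq VaR_\beta(W) > 0$.

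Finally, using $w\trsp\mu = R$ to eliminate the linear term, (\ref{eq:population problem}) becomes $\min_w\,[-R + c_\beta\sqrt{w\trsp\Sigma w}]$ subject to $w\trsp\mu = R$ and $w\trsp 1_p = 1$; since $c_\beta>0$ and $\sqrt{\cdot}$ is strictly increasing on $[0,\infty)$, this problem shares its argmin with the population Markowitz problem. The only delicate point is the strict positivity of $c_\beta$, which is where the hypotheses $Y>0$ and $\beta>1/2$ are actually used; the rest is a direct combination of the rotational invariance of $U$ with the coherence properties of CVaR.
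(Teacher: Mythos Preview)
Your proof is correct and follows the same high-level strategy as the paper: both reduce $CVaR(-w\trsp X;\beta)$ to the form $-w\trsp\mu + c\sqrt{w\trsp\Sigma w}$ with $c$ independent of $w$, then observe that on the feasible set the problem collapses to Markowitz. The execution differs, however. The paper explicitly identifies the law of $v\trsp U$ (via $(v\trsp U)^2\sim Beta(1/2,(p-1)/2)$), computes the VaR $\alpha_\beta$ directly, and then evaluates the conditional expectation defining CVaR. You instead short-circuit all of that by invoking translation invariance and positive homogeneity of CVaR, which follow immediately from the Rockafellar--Uryasev formula (\ref{def:CVaR RU}); this is cleaner and avoids the distributional bookkeeping. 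Your argument also makes explicit why $c_\beta>0$ (using $Y>0$ and $\beta>1/2$ to force $VaR_\beta(W)>0$), a point the paper leaves implicit but which is needed for the monotonicity step to go through. The paper's route, on the other hand, yields the explicit form of the constant $G(1-\beta;YI(1/2)\sqrt{B})$, which is later reused in Eq.~(\ref{eq:risk asympt}) for the delta-method variance of the realized CVaR.
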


\begin{proof}
The proof is straightforward: we show $CVaR(-w\trsp X;\beta)$ is a weighted sum of the portfolio mean $w\trsp \mu$ and portfolio
std $\sqrt{w\trsp \Sigma w}$.

First, the portfolio loss is:
$$L(w) := -w\trsp X \overset{d}{=} -w\trsp \mu+Yv\trsp U\sqrt{w\trsp \Sigma w},$$
where $v\trsp = w\trsp\Sigma^{1/2}/\sqrt{w\trsp\Sigma w}$, with $||v||_2=1$.
Before we compute $CVaR(-w\trsp X;\beta)=CVaR(L(w);\beta)$, we need to compute $\alpha_\beta$, the VaR of $L(w)$ at level $\beta$ [equivalently, the $(1-\beta)$-quantile of $L(w)$].
Since $L(w)$ is a continuous random variable, $\alpha_\beta = F^{-1}_{L(w)}(1-\beta)$, where $F_{L(w)}^{-1}$ is the inverse cdf of $L(w)$. Now
\begin{eqnarray*}
F_{L(w)}(x) = P(L(w) \leq x)
= P\left(Yv\trsp U \geq \frac{-x-w\trsp \mu}{\sqrt{w\trsp \Sigma w}}\right),
\end{eqnarray*}
so to compute $\alpha_\beta$, we need the distribution of $Yv\trsp U$. Since $v$ has norm 1, $v\trsp Z_p \overset{d}{=} Z_1$, where $Z_1\sim\mathcal{N}(0,1)$, and
since $U\overset{d}{=}Z_p/||Z_p||_2$,
$$
v\trsp U \overset{d}{=} \frac{Z_1}{\sqrt{Z_1^2+\chi^2_{p-1}}}\;,
$$
where $\chi^2_{p-1}$ is independent of $Z_1$. Thus $(v\trsp U)^2 \sim Beta(1/2,(p-1)/2)$, and by the symmetry of the normal, we have
$$
P(Y v\trsp U\geq x)=P(Y I(1/2) \sqrt{B}\geq x)\;,
$$
where $B\sim Beta(1/2,(p-1)/2)$ and $I(1/2)\sim Bernoulli(1/2)$, independent of the rest.
This quantity clearly does not depend on our choice of $w$, hence the solution
to the equation
\begin{eqnarray*}
F_{L(w)}(x)= 1-\beta
\end{eqnarray*}
is given by
$$\alpha_\beta = -w\trsp \mu+q(1-\beta;Y I(1/2)  \sqrt{B})\sqrt{w\trsp \Sigma w},$$
where $q$ is a function that does not depend on $w$, and is unique since $L(w)$ is a continuous random variable.

Thus CVaR at level $\beta$ is given by
\begin{eqnarray}\label{eq:CVAR ellip}
CVaR(L(w);\beta) &=& \frac{1}{1-\beta}\Exp[L(w)I(L(w)\geq \alpha_\beta)]\nonumber\\
&=&-w\trsp \mu+G(1-\beta;YI(1/2)  \sqrt{B})\sqrt{w\trsp \Sigma w},
\end{eqnarray}
\noindent where $G$ does not depend on $w$. Hence minimizing $CVaR(L(w);\beta)$ subject
to $w\trsp \mu = R$ and $w\trsp 1_p = 1$ is equivalent to minimizing $w\trsp \Sigma w$ subject to the same constraints,
which is precisely the population Markowitz problem.
\end{proof}

\section{Theory}\label{sec:theory}
We have thus far introduced nonparametric and parametric PBR methods to improve upon the empirical mean-CVaR problem
(\ref{eq:empricial problem}).
While we evaluate these methods in Sec.~\ref{sec:numerical} via simulation experiments, it is still desirable to obtain some theoretical
understanding of $\hat{w}_n$, $\hat{w}^v_n$ and $\hat{w}^{M}_n$.

The solution to the empirical Markowitz problem $\hat{w}^{M}_n$ has an explicit form and its asymptotic behavior
has been studied elsewhere [for $X\sim\mathcal{N}(\mu,\Sigma)$, see \citeasnoun{jobson-korkie}, and for $X\sim Elliptical$, see
\citeasnoun{el2009high}].
So we focus on deriving the asymptotic behavior of $\hat{w}_n$ and $\hat{w}^v_n$ ---
specifically, we show that they follow the Central Limit Theorem (CLT).
Application of the delta method from classical statistics [see for example, Chapter 3 of \citeasnoun{vandervaart2000}]
then allows us to conclude that the corresponding sample efficient frontiers
also follow the CLT. From these results, we can get some insight into the effect of the penalty functions in the nonparametric PBR method,
and (indirectly) justify the parametric PBR method when the log-returns are Gaussian.

\textbf{Notations.} In this section, we make use of stochastic little-o and big-O notations:
for a given sequence of random variables $R_n$, $X_n = o_P(R_n)$ means
$X_n = Y_nR_n$ where $Y_n\overset{P}{\rightarrow}0$, and $X_n = O_P(R_n)$ means
$X_n = Y_nR_n$ where $Y_n=O_P(1)$, i.e. for every $\varepsilon>0$ there exists a constant $M$ such that $\underset{n}{\sup}~P(|Y_n|>M)<\varepsilon$.

\textbf{Measurability Issues.} We also encounter quantities that may not be measurable (e.g. supremum over uncountable families of measurable functions).
We note that whenever the ``probability'' of such quantities are written down, we actually mean the outer probability.
For further details, see Appendix C of \citeasnoun{pollard1984convergence}.

\subsection{Preliminaries}\label{subsec:Preliminaries}
The quantities $\hat{w}_n$ and $\hat{w}^v_n$ are solutions to non-trivial optimization problems so
they cannot be written down analytically, and it seems characterizing their asymptotic distributions would be difficult.
However, we are not at a complete loss. In statistics, an M-estimator\footnote{``M'' stands for Minimization (or Maximization). For readers unfamiliar with M-estimation, maximum likelihood estimation falls in this category.}
is an estimator that minimizes an empirical function of the type
\begin{eqnarray}\label{eq:M-estimation}
\theta\mapsto M_n(\theta):=\frac{1}{n}\sum_{i=1}^n m_{\theta}(X_i),
\end{eqnarray}
where $X_1,\ldots,X_n$ are iid observations, over some parameter space $\Theta$. The solution $\hat{\theta}_n$ is then
a reasonable estimator of the minimizer $\theta_0$ of the true mean $M(\theta) = \mbb{E}[m_{\theta}(X_1)]$.
It is well-known that $\hat{\theta}_n$ obeys the Central Limit Theorem (i.e.~is asymptotically normally distributed)
under some regularity conditions. Intuitively, assuming $\theta$ is one-dimensional and $M_n$ is sufficiently smooth, the CLT result is based on Taylor expansion of the first-order condition $dM_n(\hat{\theta}_n)/d\theta = 0$ about $\theta_0$:
\begin{eqnarray*}
0 &=& \frac{dM_n(\hat{\theta}_n)}{d\theta} =  \frac{dM_n(\theta_0)}{d\theta}+(\hat{\theta}_n-\theta_0)\frac{d^2M_n(\theta_0)}{d\theta^2}+O_P(|\hat{\theta}_n-\theta_0|^2).
\end{eqnarray*}
Under reasonable assumptions that $d^2M_n(\theta_0)/d\theta^2$ obeys the Weak Law of Large Numbers and $\hat{\theta}_n$ is a consistent estimator of $\theta_0$
(i.e.~$|\hat{\theta}_n-\theta_0|\overset{P}{\rightarrow}0$), we have
\begin{equation*}
\sqrt{n}(\hat{\theta}_n-\theta_0)
= -\frac{1}{\mbb{E}[\frac{d^2M_n(\theta_0)}{d\theta^2}]}\frac{1}{\sqrt{n}}\sum_{i=1}^n\frac{dm_{\theta_0}(X_i)}{d\theta}+o_P(1),
\end{equation*}
with the latter expression obeying the standard CLT as it is a normalized sum of iid random variables.

So we ask, can we transform (\ref{eq:empricial problem}) and (\ref{eq:penalized problem-asympvar}) to a problem for which we can use the M-estimation results?

The first step towards transforming (\ref{eq:empricial problem}) and (\ref{eq:penalized problem-asympvar}) is to make them into
constraint-free optimization problems. This is achievable, albeit with some thoughts, and we defer the details to Sec.~\ref{subsec:consistency}.
Next, we need to show $\hat{w}_n$ and $\hat{w}^v_n$ are consistent, i.e.~they converge in probability to
the corresponding population solutions. The proof of consistency is also provided in Sec.~\ref{subsec:consistency}.

Once (\ref{eq:empricial problem}) is transformed to a global optimization problem,
it is equivalent to an M-estimation problem in that the objective is a sample average of iid random variables of the form Eq.~(\ref{eq:M-estimation}).
Thus we conclude $\hat{w}_n$ is
asymptotically normally distributed with mean $w_0$ and covariance matrix
$\Sigma_{w_0}$, which we can compute.

However, (\ref{eq:penalized problem-asympvar}) after transformation into a global problem is not quite an M-estimation problem, because, after some algebra, the objective is of the form (see Sec.~\ref{subsec:consistency} for details):
\begin{equation}\label{eq:U-statistic}
\theta\mapsto M_n(\theta)=\frac{1}{n(n-1)}\sum_{i\neq j} m^U_{\theta}(X_i,X_j),
\end{equation}
where $m^U(\cdot,\cdot)$ is a permutation-symmetric function, and the sum is over all possible pairs $(i,j)$ for $1\leq i,j \leq n$, resulting in a sample average of identically distributed but non-independent terms.

For fixed $\theta$, statistics of the form Eq.~(\ref{eq:U-statistic}) are known as U-statistics, and we believe the solution $\hat{w}^v_n$
is still well-behaved because U-statistics can be decomposed into a term of the form $M^1_n(\theta)=\sum_{i=1}^nm^1_\theta(X_i)$
(known as its Haj\'{e}k projection or first term in its Hoeffding decomposition; see \citeasnoun{hoeffding1948class}) and a remainder
which converges to zero in probability at rate $\sqrt{n}$. Thus we intuit that the asymptotic behavior of $\hat{w}^v_n$ is equivalent to
the minimizer of $M^1_n(\theta)$, the latter for which we can apply the standard M-estimation result.
We make this intuition rigorous in Sec.~\ref{subsec:CLT for U}.
In Sec.~\ref{subsec:CLT for CVAR}, we provide details of the asymptotic distributions of $\hat{w}_n$ and $\hat{w}^v_n$
when $X\sim\mathcal{N}(\mu,\Sigma)$, and provide a justification of the parametric PBR method.

\subsection{Consistency of $\hat{w}_n$ and $\hat{w}^v_n$}\label{subsec:consistency}
In this subsection we show consistency of $\hat{w}^v_n=\hat{w}^v_n(\lambda_1,\lambda_2)$.
The result goes through for $\hat{w}_n$ by setting $\lambda_1=\lambda_2=0$.

\subsubsection{Transformation into global optimization}
The penalized CVaR portfolio optimization problem with dualized mean and sample/asymptotic variance penalty constraints is
\begin{equation}\tag{CVaR-dual}\label{eq:penalized problem-dual}
     \begin{array}{ll}
      \underset{(\alpha,w)\in\mbb{R}\times\mbb{R}^{p}}{\min} & M_n(\alpha,w;\lambda_1,\lambda_2) \\
      s.t. &~~~w\trsp 1_p = 1,
\end{array}
\end{equation}
where
\begin{equation}\label{eq:M_n}
M_{n}(\theta;\lambda_1,\lambda_2)= \frac{1}{n}\sum_{i=1}^nm_{\theta}(X_i)+\frac{\lambda_1}{n}w\trsp \hat{\Sigma}_nw+\frac{\lambda_2}{n-1}\sum_{i=1}^n\left( z_\theta(X_i)- \frac{1}{n}\sum_{j=1}^nz_\theta(X_j)\right)^2,
\end{equation}
\begin{equation}\label{eq:m_theta}
m_{\theta}(x) = \alpha+\frac{1}{1-\beta}z_\theta(x)-\lambda_0w\trsp x,
\end{equation}
and $\lambda_0>0$, $\lambda_1,\lambda_2\geq 0$ are pre-determined constants.

We dualize the mean constraint $w'\hat{\mu}_n = R$ because it makes the analysis of the corresponding solution much easier.
While dualizing the mean constraint adds a sample average of iid terms to the objective,
leaving it as a constraint results in a solution that has a non-trivial dependence on the underlying randomness.

Now eliminating the non-random constraint $w\trsp 1_p = 1$ is straight-forward; one possible way is to re-parameterize
$w$ as $w=w_1+Lv$, where $L=[0_{(p-1)\times 1},I_{(p-1)\times(p-1)}]\trsp$, $v = [w_2,\ldots,w_p]\trsp$
and $w_1=[1-v\trsp 1_{(p-1)},0_{1\times (p-1)}]\trsp$.
The transformed problem is thus
\begin{equation}
\underset{\theta\in\mbb{R}^p}{\min}~M_n(\theta;\lambda_1,\lambda_2),
\end{equation}
where $\theta=(\alpha,v)\in\mbb{R}\times\mbb{R}^{p-1}$ is free of constraints, and the corresponding population problem is
\begin{equation}\label{eq:M}
\underset{\theta\in\mbb{R}^p}{\min}~M(\theta;\lambda_1,\lambda_2)=\mbb{E}[M_n(\theta;\lambda_1,\lambda_2)].
\end{equation}

In what follows, we assume $M(\theta;\lambda_1,\lambda_2)$ has a unique minimizer $\theta_0(\lambda_1,\lambda_2)$.
We also let $\hat{\theta}_n(\lambda_1,\lambda_2)$ be a near-minimizer of $M_n(\theta;\lambda_1,\lambda_2)$, i.e.
\begin{equation}\label{eq:nearmin}
M_n(\hat{\theta}_n;\lambda_1,\lambda_2) < \inf_{\theta\in\mbb{R}^p}M_n(\theta;\lambda_1,\lambda_2)+o_P(1).
\end{equation}


\subsubsection{Transformation of the objective to a U-statistic}
Let $\theta=(\alpha,v)\in\mbb{R}\times\mathbb{R}^{p-1}$ and $z_\theta(x) := (-x\trsp (w_1+Lv)-\alpha)^+.$
With simple algebra, we can re-write the objective Eq.~(\ref{eq:M_n}) as a U-statistic:
\begin{equation}\label{eq:M_n U}
M_{n}(\theta;\lambda_1,\lambda_2) = \frac{1}{{n\choose 2}}\sum_{\substack{1\leq i,j \leq n \\ i\neq j}} m^U_{(\theta;\lambda_1,\lambda_2)}(X_{i},X_{j}),
\end{equation}
where
\begin{equation}\label{eq:mU}
m^U_{(\theta;\lambda_1,\lambda_2)}(x_i,x_j):=
\frac{1}{2}\left[m_{\theta}(x_i)+m_{\theta}(x_j)\right]
+ \frac{\lambda_1}{2}[(w_1+Lv)\trsp (x_i-x_j)]^2 +\frac{\lambda_2}{2}(z_\theta(x_i)-z_\theta(x_j))^2.
\end{equation}

\subsubsection{Consistency of $\hat{\theta}_{n}(\lambda_1,\lambda_2)$}
Let us now prove consistency of $\hat{\theta}_{n}(\lambda_1,\lambda_2)$ for fixed $\lambda_1,\lambda_2\geq 0$.
The intuition behind the proof is as follows: if $M(\theta;\lambda_1,\lambda_2)$ is well-behaved such that for every $\varepsilon>0$
there exists $\eta>0$ such that $||\hat{\theta}_n(\lambda_1,\lambda_2) - \theta_0(\lambda_1,\lambda_2)||_2>\varepsilon
\implies M(\hat{\theta}_n;\lambda_1,\lambda_2)-M(\theta_0;\lambda_1,\lambda_2)>\eta$,
then consistency follows from showing that the probability of the event
$\{M(\hat{\theta}_n;\lambda_1,\lambda_2)-M(\theta_0;\lambda_1,\lambda_2)>\eta\}$
goes to zero for all $\varepsilon>0$. In the proof, we show that
$0\leq M(\hat{\theta}_n;\lambda_1,\lambda_2)-M(\theta_0;\lambda_1,\lambda_2) \leq
-(M_n(\hat{\theta}_n;\lambda_1,\lambda_2)-M(\hat{\theta}_n;\lambda_1,\lambda_2))+o_P(1)$,
hence the result follows by proving Uniform Law of Large Numbers (ULLN) for $M_n(\theta;\lambda_1,\lambda_2)$:
\begin{equation}\label{eq:ULLN}
\sup_{\theta\in\mbb{R}^{p}}|M_n(\theta;\lambda_1,\lambda_2)-M(\theta;\lambda_1,\lambda_2)| \overset{P}{\rightarrow}0.
\end{equation}



ULLN has been extensively studied in the statistics and empirical processes literature and one of the standard
approaches to showing ULLN is through bracketing numbers. Given two functions $l,u$, the bracket $[l,u]$
is the set of all functions $g$ with $l\leq g\leq u$. An $\varepsilon$-bracket in $L_r(P)$ is a bracket $[l,u]$ with $\mbb{E}_{P}(u-l)^r<\varepsilon^{r}$, and the
bracketing number $N_{[~]}(\varepsilon,\mathcal{F},L_r(P))$ is the minimum number of $\varepsilon$-brackets
needed to cover $\mathcal{F}$. Having a finite bracketing number $N_{[~]}(\varepsilon,\mathcal{F},L_r(P))<\infty$
for every $\varepsilon>0$ means one can find a finite approximation to
$\mathcal{F}$ with $\varepsilon$-accuracy for all $\varepsilon>0$, and ULLN holds for such $\mathcal{F}$
[Theorem 19.4 \citeasnoun{vandervaart2000}].

There are certainly known sufficient conditions for finite bracketing numbers. For our problem, if we can replace
$\mbb{R}^{p}$ with a compact set,
we can show ${F}$ is a Lipschitz class of functions (defined in the next paragraph), which is known to have finite
$N_{[~]}(\varepsilon,\mathcal{F},L_r(P))$ for every $\varepsilon>0$. Now for all practical purposes, we need only
consider a compact subset of $\Theta$, $[-K,K]^{p}$ where $K$ is appropriately large enough, because the elements of $\theta=(\alpha,v)$
are only meaningful if bounded in size ($\alpha$ is the Value-at-Risk of the portfolio $w=w_1+Lv$). Hence for the rest of this section
we assume a $K$ exists such that $\hat{\theta}_n \in [-K,K]^{p}$ for all $n$ and $\theta_0\in[-K,K]^{p}$.

\begin{definition}[\textbf{Lipschitz class}]
Consider a class of measurable functions $\mathcal{F} = \{f_\theta: \theta\in\Theta\}$, $f_\theta:\mathcal{X}\rightarrow\mbb{R}$, under some probability measure $P$. We say $\mathcal{F}$ is a \emph{Lipschitz class} about $\theta_0\in\Theta$ if $\theta\mapsto f_\theta(x)$ is differentiable at $\theta_0$ for P-almost every $x$ with derivative $\dot{f}_{\theta_0}(x)$ and such that, for every $\theta_1$ and $\theta_2$ in a neighborhood of $\theta_0$, there exists a measurable function $\dot{f}$ with $\mathbb{E}[\dot{f}^2(X_1)]<\infty$ such that
\begin{eqnarray*}
|f_{\theta_1}(x)-f_{\theta_2}(x)|\leq \dot{f}(x)||\theta_1-\theta_2||_2.
\end{eqnarray*}
\end{definition}
Example 19.7 of \citeasnoun{vandervaart2000} shows that if $\mathcal{F} = \{f_\theta: \theta\in\Theta\}$ is a class of measurable functions with bounded $\Theta\subset\mbb{R}^d$ and $\mathcal{F}$ is Lipschitz about $\theta_0\in\Theta$ then for every $0<\varepsilon<diam(\Theta)$, there exists $C$ such that
\begin{equation}\label{eq:vdv 19.7}
N_{[~]}(\varepsilon\sqrt{\mbb{E}(|\dot{f}(X)|^2)},\mathcal{F},L_2(P))\leq  C\left(\frac{diam(\Theta)}{\varepsilon}\right)^d,
\end{equation}
i.e. has a finite bracketing number for all $\varepsilon>0$. This result is needed in proving consistency in the following.

\begin{theorem}\label{theo:consistency unreg}
For fixed $\lambda_1,\lambda_2\geq 0$, let $\hat{\theta}_n(\lambda_1,\lambda_2)$
be a near-minimizer of $M_{n}(\theta;\lambda_1,\lambda_2)$
as in Eq.~(\ref{eq:nearmin}), and let $\theta_0(\lambda_1,\lambda_2)$ be the unique minimizer of $M(\theta;\lambda_1,\lambda_2)$.
Also let
$$\mathcal{F}_1 = \{m_{\theta}:\theta\in [-K,K]^{p}\},~~~
\mathcal{F}_2 = \{m^U_{(\theta;\lambda_1,\lambda_2)}:\theta\in [-K,K]^{p}\},$$
where $m_{\theta}$ and $m^U_{(\theta;\lambda_1,\lambda_2)}$ are defined in Eqs.~(\ref{eq:m_theta}) and (\ref{eq:mU}).
Suppose the following:
\\\indent Assumption 1. $\theta\mapsto M(\theta;\lambda_1,\lambda_2)$ is continuous and
$\liminf_{|\theta|\rightarrow\pm\infty} M(\theta;\lambda_1,\lambda_2) > M(\theta_0;\lambda_1,\lambda_2)$.
\\\indent Assumption 2. $X_1,\ldots,X_n$ are iid continuous random vectors with finite fourth moment.
\\\noindent Then
$$||\hat{\theta}_n(\lambda_1,\lambda_2)-{\theta}_0(\lambda_1,\lambda_2)||_2\overset{P}{\rightarrow}0.$$
\end{theorem}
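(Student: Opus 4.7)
I would follow the classical argmin consistency program in four steps: (i) reduce to the compact parameter set $[-K,K]^p$, as already justified in the paragraph preceding the theorem; (ii) show well-separation of the population minimizer $\theta_0$; (iii) establish the uniform law of large numbers (ULLN) $\sup_{\theta\in[-K,K]^p}|M_n(\theta;\lambda_1,\lambda_2)-M(\theta;\lambda_1,\lambda_2)|\overset{P}{\rightarrow}0$; and (iv) close the argument via the inequality chain
\[0\leq M(\hat{\theta}_n)-M(\theta_0)\leq[M(\hat{\theta}_n)-M_n(\hat{\theta}_n)]+[M_n(\hat{\theta}_n)-M_n(\theta_0)]+[M_n(\theta_0)-M(\theta_0)],\]
whose first and third summands vanish by ULLN and whose middle summand is $o_P(1)$ by the near-minimizer property~(\ref{eq:nearmin}).

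\textbf{Well-separation of $\theta_0$.} Assumption~1 gives continuity of $M$, uniqueness of $\theta_0$, and $\liminf_{|\theta|\to\infty}M(\theta)>M(\theta_0)$. A standard compactness argument then yields, for every $\varepsilon>0$, some $\eta>0$ with $\inf_{\|\theta-\theta_0\|_2\geq\varepsilon}M(\theta)\geq M(\theta_0)+\eta$. Once the chain above together with ULLN gives $M(\hat{\theta}_n)\to M(\theta_0)$ in probability, this well-separation turns the convergence of objective values into $\|\hat{\theta}_n-\theta_0\|_2\overset{P}{\rightarrow}0$.

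\textbf{ULLN, the main obstacle.} The real difficulty is that $M_n$ is a U-statistic of order~2 with kernel $m^U_{(\theta;\lambda_1,\lambda_2)}$ rather than a sample average of iid random variables. My plan is to exhibit $\mathcal{F}_2=\{m^U_{(\theta;\lambda_1,\lambda_2)}:\theta\in[-K,K]^p\}$ as Lipschitz in $\theta$ so that the bracketing bound~(\ref{eq:vdv 19.7}) applies. Direct inspection of~(\ref{eq:mU}) gives
\[|m^U_{\theta_1}(x_i,x_j)-m^U_{\theta_2}(x_i,x_j)|\leq C_K\bigl(1+\|x_i\|_2^2+\|x_j\|_2^2\bigr)\,\|\theta_1-\theta_2\|_2\]
on $[-K,K]^p$: the hinge $z_\theta(x)$ is Lipschitz in $\theta$ with constant of order $1+\|x\|_2$, and the two quadratic regularization terms have gradients of order $1+\|x_i\|_2+\|x_j\|_2$ since $w_1+Lv$ stays bounded on the compact parameter set. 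Assumption~2 (finite fourth moment) makes this envelope square-integrable under $P\otimes P$, so $\mathcal{F}_2$ has finite bracketing numbers $N_{[\,]}(\varepsilon,\mathcal{F}_2,L_2(P\otimes P))<\infty$ for every $\varepsilon>0$.

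\textbf{From brackets to uniform convergence of the U-statistic.} To promote this bracketing bound to a ULLN for $M_n$ I would use the Hoeffding decomposition
\[M_n(\theta)-M(\theta)=\frac{2}{n}\sum_{i=1}^n h_\theta(X_i)+R_n(\theta),\quad h_\theta(x)=\mathbb{E}[m^U_\theta(x,X_2)]-M(\theta),\]
where $R_n(\theta)$ is a degenerate U-statistic of order~2. The linear part satisfies the standard Glivenko--Cantelli theorem: $h_\theta$ inherits a square-integrable Lipschitz envelope from $\mathcal{F}_2$ via Jensen's inequality, so the class $\{h_\theta:\theta\in[-K,K]^p\}$ again has finite bracketing numbers. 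The degenerate remainder $R_n(\theta)$ converges to zero uniformly in probability because degenerate U-statistics of order~2 have sup-variance of order $1/n^2$ and one can chain the brackets of $\mathcal{F}_2$ against this rate; alternatively one may invoke a Glivenko--Cantelli theorem for U-statistics, whose hypothesis is exactly the bracketing bound just established. This uniform control of the degenerate remainder is the genuinely delicate step; everything else reduces to moment calculations and the compactness/well-separation lemma.
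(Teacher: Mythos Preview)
Your proposal is correct and follows essentially the same route as the paper: well-separation from Assumption~1, the three-term inequality chain, and ULLN over $[-K,K]^p$ obtained by showing $\mathcal{F}_2$ is a Lipschitz class (the paper's envelope in Eq.~(\ref{eq:dotm^U}) is precisely your $C_K(1+\|x_i\|^2+\|x_j\|^2)$). The only cosmetic difference is that where you sketch a Hoeffding-decomposition argument for the U-statistic ULLN, the paper simply invokes the off-the-shelf result you mention as your alternative---Corollary~3.5 of \citeasnoun{arcones1993limit}---which delivers the uniform law directly from $N_{[\,]}(\varepsilon,\mathcal{F}_2,L_2(P\times P))<\infty$.
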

\begin{proof}
See Appendix \ref{App:const proof}.
\end{proof}

\subsection{Central Limit Theorem for $\hat{\theta}_n(\lambda_1,\lambda_2)$}\label{subsec:CLT for U}
We are now ready to show the CLT for $\hat{\theta}_n(\lambda_1,\lambda_2)$. The CLT for $\hat{\theta}_n(0,0)$ is a straightforward
application of known M-estimation results for Lipschitz class of objective functions [e.g. Theorem 5.23 of \citeasnoun{vandervaart2000}].

The CLT for $\hat{\theta}_n(\lambda_1,\lambda_2)$ when $\lambda_1,\lambda_2$ are not both zero does not follow straight-forwardly
from M-estimation
results because $M_n(\theta;\lambda_1,\lambda_2)$ is a sample average of identically distributed but non-independent terms.
However, statistics of the form $M_n(\theta;\lambda_1,\lambda_2)$ are known as U-statistics, and we can decompose them into a sum of iid random
variables and a component which is $o_P(1/\sqrt{n})$ [\citeasnoun{hoeffding1948class}]:
\begin{equation}\label{eq:Hoeff}
M_n(\theta;\lambda_1,\lambda_2) = \frac{1}{n}\sum_{i=1}^nm^1_{(\theta;\lambda_1,\lambda_2)}(X_i)+E_n(\theta;\lambda_1,\lambda_2),
\end{equation}
where $m^1_{(\theta;\lambda_1,\lambda_2)}(X_i) = 2\mbb{E}_{X_j}[m_{(\theta;\lambda_1,\lambda_2)}^U(X_i,X_j)]
-\mbb{E}_{X_1,X_2}[m^U_{(\theta;\lambda_1,\lambda_2)}(X_1,X_2)]$ and $E_n(\theta;\lambda_1,\lambda_2)=o_P(1/\sqrt{n})$.
Hence we suspect $|R^U_n(\hat{\theta}_n;\lambda_1,\lambda_2)|\overset{P}{\rightarrow}0$, where
\begin{eqnarray*}
R^U_n({\theta};\lambda_1,\lambda_2) = \sqrt{n}({\theta}-\theta_0)-[\nabla^2_{\theta_0}\mbb{E}m^1_{(\theta;\lambda_1,\lambda_2)}(X_i)]^{-1}
\frac{1}{\sqrt{n}}\sum_{i=1}^nm^1_{(\theta;\lambda_1,\lambda_2)}(X_i).
\end{eqnarray*}
Now $\hat{\theta}_n$ changes with every $n$ so we need uniform probabilistic
convergence of $R^U_n(\theta;\lambda_1,\lambda_2)$, and implicitly of $E_n(\theta;\lambda_1,\lambda_2)$.
For this we need to show convergence of particular stochastic processes; an empirical process and a U-process.

\begin{definition} Let $X_1,\ldots,X_n$ be iid random vectors from $\mathcal{X}$. For a measurable function $f:\mathcal{X}\rightarrow\mbb{R}$, the \emph{empirical process at $f$} is
$$\mbb{G}_nf : = \frac{1}{\sqrt{n}}\sum_{i=1}^n[f(X_i)-\mbb{E}f(X_1)],$$
and for a measurable function $g:\mathcal{X}\times\mathcal{X}\rightarrow\mbb{R}$, the \emph{U-process at $g$} is
$$\mbb{U}_ng : = \frac{\sqrt{n}}{{n\choose 2}}\sum_{i\neq j}[g(X_i,X_j)-\mbb{E}_{X_1,X_2}g(X_1,X_2)].$$
\end{definition}

To show convergence of quantities such as $\sup_{t\in T}|X_{n}(t)|$ for some stochastic process $\{X_n(t):t\in T\}$, we need to introduce the
notion of weak convergence of stochastic processes. If $X_n(\cdot,\omega)$ is a bounded function for every $\omega\in\Omega$, then we can consider $X_{n}(\cdot,\omega)$ to be a point in the function space $\ell^\infty({T})$, the space of bounded functions on $T$ which is equipped with the supremum norm.
Hence, showing the convergence of $\sup_{t\in T}|X_{n}(t)|$ is equivalent to showing weak convergence of $X_n$ in this function space.

\begin{definition}[\textbf{Weak convergence of a stochastic process}]
A sequence of $X_n:\Omega_n\mapsto\ell^\infty(T)$ converges weakly to a tight random element
\footnote{A random element is a generalization of a random variable. Let $(\Omega,\mathcal{G},P)$ be a probability space and $\mbb{D}$ a metric space.
Then the $\mathcal{G}$-measurable map $X:\Omega\mapsto \mbb{D}$ is called a random element.} $X$ iff both of the following conditions hold:
\begin{enumerate}
\item Finite approximation: the sequence $(X_n(t_1),\ldots,X_n(t_k))$ converges in distribution in $\mbb{R}^k$ for every finite set of points $t_1,\ldots,t_k$ in $T$.
\item Maximal inequality: for every $\varepsilon,\eta>0$ there exists a partition of $T$ into finitely many sets $T_1,\ldots,T_k$ such that
\begin{equation*}
\underset{n\rightarrow\infty}{\lim\sup}~P\left[\underset{i}{\sup}\underset{s,t\in T_i}{\sup}|X_{n}(s)-X_{n}(t)|\geq\varepsilon\right]\leq\eta.
\end{equation*}
\end{enumerate}
\end{definition}
The point at the end of this is, as taking the supremum is a continuous map in the topology of $\ell^\infty(T)$, weak convergence of $X_n(\cdot)$ to $X(\cdot)$ would allow us to conclude $\sup_{t\in T}|X_{n}(t)|\rightarrow \sup_{t\in T}|X(t)|$.

Regarding empirical processes, we say a class of measurable functions $\mathcal{F}$ is \emph{P-Donsker} if $\{\mbb{G}_nf:f\in\mathcal{F}\}$ converges weakly to a tight random element in $\ell^\infty(\mathcal{F})$. This property is related to the bracketing numbers introduced in Sec.~\ref{subsec:consistency}:
a class $\mathcal{F}$ is P-Donsker if $\varepsilon \log[N_{[~]}(\varepsilon,\mathcal{F},L_2(P))]\rightarrow0$
as $\varepsilon\rightarrow0$ [due to Donsker; see Theorem 19.5 of \citeasnoun{vandervaart2000}]. Many sufficient conditions for
the weak convergence of $\{\mbb{U}_nf:f\in\mathcal{F}\}$ are provided in \citeasnoun{arcones1993limit}, and we make use
of one in our proof of CLT for $\hat{\theta}_n(\lambda_1,\lambda_2)$ below.

\begin{theorem}\label{theo:Uasympt}
Fix $\lambda_1,\lambda_2\geq0$, $\lambda_1,\lambda_2$ not both zero and assume the same setting as Theorem~\ref{theo:consistency unreg}.
Also let
$$\dot{m}^U_{(\theta_0;\lambda_1,\lambda_2)}(x) = \nabla_\theta m^U_{(\theta_0;\lambda_1,\lambda_2)}(x) |_{\theta=\theta_0(\lambda_1,\lambda_2)},~\text{for}~x\in\mbb{R}^p,$$
and further assume
\\\indent Assumption 3. $\mbb{E}_{X_1,X_2}[m^U_{(\theta_0;\lambda_1,\lambda_2)}(X_1,X_2)^2]<\infty$.
\\\indent Assumption 4. $\theta\mapsto M(\theta;\lambda_1,\lambda_2)$ admits a second-order Taylor expansion at its point of minimum $\theta_0(\lambda_1,\lambda_2)$ with
nonsingular symmetric second derivative matrix $V_{\theta_0(\lambda_1,\lambda_2)}$.
\\\noindent Then
\begin{eqnarray*}
\sqrt{n}(\hat{\theta}_n(\lambda_1,\lambda_2)-\theta_0(\lambda_1,\lambda_2))
&=&
-V_{\theta_0(\lambda_1,\lambda_2)}^{-1}\frac{1}{\sqrt{n}}\sum_{i=1}^n\dot{m}^1_{(\theta_0;\lambda_1,\lambda_2)}(X_i)+o_{p}(1)
\end{eqnarray*}
where
$$\dot{m}^1_{(\theta;\lambda_1,\lambda_2)}(X_i) = 2\mbb{E}_{X_2}[\dot{m}^U_{(\theta;\lambda_1,\lambda_2)}(X_1,X_2)]
-\mbb{E}_{X_1,X_2}[\dot{m}^U_{(\theta;;\lambda_1,\lambda_2)}(X_1,X_2)]$$
is the first-order term in the Hoeffding decomposition of $M_n(\theta;\lambda_1,\lambda_2)$.
\end{theorem}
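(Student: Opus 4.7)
The strategy is to reduce the U-statistic objective to a standard M-estimator objective via the Hoeffding decomposition, and then invoke the classical argmin-CLT argument of van der Vaart (Theorem 5.23). Concretely, writing
\begin{equation*}
M_n(\theta;\lambda_1,\lambda_2) = M(\theta_0;\lambda_1,\lambda_2) + \frac{1}{n}\sum_{i=1}^n \bigl[m^1_{(\theta;\lambda_1,\lambda_2)}(X_i) - \mbb{E}m^1_{(\theta;\lambda_1,\lambda_2)}(X_1)\bigr] + E_n(\theta;\lambda_1,\lambda_2),
\end{equation*}
where $E_n$ is the degenerate second-order part of the Hoeffding decomposition, one gets a linear M-estimator objective plus a remainder that is $o_P(1/\sqrt{n})$ pointwise (with variance $O(1/n^2)$ under Assumption~3). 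If this $o_P(1/\sqrt{n})$ bound can be promoted to a locally uniform one around $\theta_0$, then $\hat{\theta}_n$ behaves asymptotically like the minimizer of $\theta\mapsto \mbb{E}m^1_{(\theta;\lambda_1,\lambda_2)}(X_1)$ plus an empirical-process fluctuation, which yields exactly the claimed linear representation with curvature $V_{\theta_0}$ and influence function $\dot m^1_{(\theta_0;\lambda_1,\lambda_2)}$.

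The main steps I would carry out are: (i) Establish the parametric rate $\hat{\theta}_n - \theta_0 = O_P(1/\sqrt{n})$ by combining the consistency from Theorem~\ref{theo:consistency unreg} with the quadratic lower bound on $M(\cdot) - M(\theta_0)$ from Assumption~4 and stochastic equicontinuity of $M_n - M$ on shrinking neighborhoods (the standard peeling / maximal-inequality argument applied to the Lipschitz class $\mathcal{F}_2$). (ii) Expand $M_n$ around $\theta_0$ as in the display above and perform a second-order Taylor expansion of $\mbb{E}m^1_{(\theta;\lambda_1,\lambda_2)}(X_1)$ using Assumption~4; note that although $z_\theta(x)$ is only piecewise differentiable in $\theta$, the expectation $M(\theta;\lambda_1,\lambda_2)$ is $C^2$ because $X$ is absolutely continuous (Assumption~2), so $V_{\theta_0}$ is well defined. (iii) Substitute $\theta = \hat{\theta}_n$, use the near-minimizer property Eq.~(\ref{eq:nearmin}), and solve the resulting quadratic inequality for $\sqrt{n}(\hat{\theta}_n - \theta_0)$ to obtain
\begin{equation*}
\sqrt{n}(\hat{\theta}_n - \theta_0) = -V_{\theta_0}^{-1}\frac{1}{\sqrt{n}}\sum_{i=1}^n \dot m^1_{(\theta_0;\lambda_1,\lambda_2)}(X_i) + o_P(1),
\end{equation*}
provided the cross-terms involving $E_n$ and the first-order empirical process are $o_P(1)$ when evaluated at $\hat{\theta}_n$.

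The main obstacle is controlling the Hoeffding remainder $E_n$ uniformly over a $\sqrt{n}$-neighborhood of $\theta_0$, i.e.\ showing
\begin{equation*}
\sup_{\|\theta-\theta_0\|\leq \delta_n}\bigl| E_n(\theta;\lambda_1,\lambda_2) - E_n(\theta_0;\lambda_1,\lambda_2)\bigr| = o_P(1/n)
\end{equation*}
for $\delta_n \downarrow 0$ at rate $1/\sqrt{n}$. This is a statement about the fluctuations of a degenerate U-process indexed by $\theta$, and is not covered by ordinary empirical-process theory. I would handle it by invoking the U-process limit theorems of Arcones--Giné (\citeasnoun{arcones1993limit}): the class $\mathcal{F}_2$ of kernels $m^U_{(\theta;\lambda_1,\lambda_2)}$, restricted to $\theta\in[-K,K]^p$, is a Lipschitz class with square-integrable envelope (by Assumption~3 together with the finite fourth moment of $X$), so Eq.~(\ref{eq:vdv 19.7}) gives polynomial bracketing numbers for both $\mathcal{F}_2$ and the class of differences, which is enough to apply their degenerate U-process maximal inequality and obtain the required uniform $o_P(1/n)$ bound. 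A smaller but real nuisance is that $m^U$ is only piecewise smooth in $\theta$; continuity of the distribution of $X$ plus dominated convergence ensure this causes no issue at the level of expectations and variances, so the Lipschitz-class machinery applies intact.
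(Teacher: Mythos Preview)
Your plan is correct and matches the paper's proof essentially step for step: both reduce the U-statistic criterion to an M-estimation problem via the Hoeffding decomposition, invoke the argmin argument of van der Vaart's Theorem~5.23 (with the ``complete the square'' comparison between $\hat h_n=\sqrt{n}(\hat\theta_n-\theta_0)$ and $\tilde h_n=-V_{\theta_0}^{-1}\mbb{G}_n\dot m^1_{\theta_0}$), establish the $\sqrt{n}$-rate by extending the peeling argument of Theorem~5.52/Corollary~5.53 to U-statistics, and control the degenerate remainder uniformly via the U-process results of \citeasnoun{arcones1993limit}. The only cosmetic difference is that the paper applies the Hoeffding split to the local increment $f_h=\sqrt{n}(m^U_{\theta_0+h/\sqrt{n}}-m^U_{\theta_0})-h\trsp\dot m^U_{\theta_0}$ and shows $\sup_{\|h\|\leq 1}|\mbb{U}_n f_h|\to 0$, whereas you decompose $M_n(\theta)$ globally and then localize; unwinding the scalings, your requirement $n\bigl[E_n(\theta)-E_n(\theta_0)\bigr]=o_P(1)$ on a $1/\sqrt{n}$-neighborhood is exactly the paper's condition $(\ast)$, and both are verified by the same Arcones--Gin\'e maximal inequality applied to the Lipschitz class $\mathcal{F}_2$. (One slip: in your displayed decomposition the leading term should be $M(\theta)$, not $M(\theta_0)$, since $\mbb{E}\,m^1_{(\theta;\lambda_1,\lambda_2)}(X_1)=M(\theta;\lambda_1,\lambda_2)$; your subsequent Taylor expansion of $\mbb{E}\,m^1$ shows you have the right object in mind.)
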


\textbf{Remark -- Implication on the choice of penalty functions.}
\\We have just shown that asymptotically, the sample variance penalty functions affect the solution performance only through its Haj\'{e}k projection.
This observation can generalize to many typical penalty functions (e.g. different statistics of mean and CVaR estimators), and as such, the implication
is that of all possible penalty functions to consider, one may focus on a subclass of functions that can be expressed as a sample average of iid terms.

\begin{corollary}\label{cor:asymptotic reg soln}
Assume the same setting as Theorem \ref{theo:Uasympt}. Then
\begin{eqnarray}\label{eq:asympt normality empr soln}
\sqrt{n}(\hat\theta_n(\lambda_1,\lambda_2) - \theta_0(\lambda_1,\lambda_2)) \Rightarrow
\mathcal{N}\left(0,\Sigma_{\theta_0}(\lambda_1,\lambda_2)\right),
\end{eqnarray}
where $\Sigma_{\theta_0}(\lambda_1,\lambda_2) = A^{-1}BA^{-1}$,
\begin{eqnarray*}
A
= A_{\theta_0}(\lambda_1,\lambda_2) &=&  \nabla^2_\theta~\mathbb{E}[m^1_{(\theta;\lambda_1,\lambda_2)}(X_1)]\bigg|_{\theta=\theta_0}\\
&=& \nabla^2_\theta~[\frac{1}{1-\beta}\mbb{E}z_\theta(X_1)+\lambda_1w\trsp\Sigma w+\lambda_2Var(z_\theta(X_1))]\bigg|_{\theta=\theta_0}\\
B = B_{\theta_0}(\lambda_1,\lambda_2) &=&
\mathbb{E}[\nabla_{\theta_0}{m}^1_{(\theta;\lambda_1,\lambda_2)}(X_1)\nabla_{\theta_0}{m}^1_{(\theta;\lambda_1,\lambda_2)}(X_1)\trsp]
\end{eqnarray*}
where
\begin{eqnarray*}
&&\nabla_{\theta}~m^1_{(\theta;\lambda_1,\lambda_2)}(x)\\
&&=\left[\begin{array}{c}
      1-\frac{1}{1-\beta}\mbb{I} +2\lambda_2\mbb{E}[(z_\theta(X)-\mbb{E}z_\theta(X))(-\mbb{I}+\mbb{E}\mbb{I})] \\
     -\frac{1}{1-\beta}L\trsp X\mbb{I}-\lambda_0L\trsp X+2\lambda_1L\trsp(X-\mu)(X-\mu)\trsp w
     +2\lambda_2 \mbb{E}[(z_\theta(X)-\mbb{E}z_\theta(X))(-L\trsp X\mbb{I}+\mbb{E}L\trsp X\mbb{I})]
     \\
    \end{array}  \right],
\end{eqnarray*}
and $\mbb{I} = \mbb{I}(z_\theta(X)\geq 0)$.
\end{corollary}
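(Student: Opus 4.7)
The corollary is the distributional consequence of the Bahadur-type representation established in Theorem~\ref{theo:Uasympt}. The plan is to apply the multivariate Central Limit Theorem to the iid sum appearing there, identify the limiting covariance via the Hoeffding projection, and then do the routine (but delicate) bookkeeping to write $A$, $B$, and $\nabla_\theta m^1$ explicitly.

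First I would check that $\mbb{E}[\dot{m}^1_{(\theta_0;\lambda_1,\lambda_2)}(X_1)]=0$. Because the Hoeffding projection preserves the expectation, $M(\theta;\lambda_1,\lambda_2)=\mbb{E}[m^U_{(\theta;\lambda_1,\lambda_2)}(X_1,X_2)]=\mbb{E}[m^1_{(\theta;\lambda_1,\lambda_2)}(X_1)]$, and since $\theta_0(\lambda_1,\lambda_2)$ is an interior minimizer, the first-order condition gives $\nabla_\theta M(\theta_0)=\mbb{E}[\dot{m}^1(X_1)]=0$. Assumption~3 together with the Lipschitz/dominating-function arguments already used in Theorem~\ref{theo:Uasympt} yields $\mbb{E}\|\dot m^1(X_1)\|^2<\infty$, so the multivariate CLT gives
$$\frac{1}{\sqrt n}\sum_{i=1}^n \dot m^1_{(\theta_0;\lambda_1,\lambda_2)}(X_i)\;\Rightarrow\;\mathcal{N}(0,B),\qquad B=\mbb{E}[\dot m^1(X_1)\dot m^1(X_1)\trsp].$$
Combining with the representation from Theorem~\ref{theo:Uasympt} and Slutsky's theorem gives $\sqrt n(\hat\theta_n-\theta_0)\Rightarrow\mathcal N(0,V_{\theta_0}^{-1}BV_{\theta_0}^{-1})$. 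To match the stated sandwich form it remains to identify $V_{\theta_0}=A$, which is immediate from Assumption~4 and the identity $M(\theta)=\mbb{E}[m^1_\theta(X_1)]$: $V_{\theta_0}=\nabla^2_\theta M(\theta)\big|_{\theta_0}=\nabla^2_\theta\mbb{E}[m^1_\theta(X_1)]\big|_{\theta_0}=A$.

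Next I would carry out the explicit computation of $m^1$ and $\nabla_\theta m^1$. Splitting $m^U$ into its three additive pieces and computing $2\mbb{E}_{X_2}[m^U(x,X_2)]-\mbb{E}[m^U]$ term-by-term, standard manipulations give
$$m^1_\theta(x)\;=\;m_\theta(x)+\lambda_1\bigl(w\trsp(x-\mu)\bigr)^2+\lambda_2\bigl(z_\theta(x)-\mbb{E}z_\theta(X)\bigr)^2,$$
with $w=w_1+Lv$. Differentiating in $\theta=(\alpha,v)$ and using the almost-everywhere identities $\partial_\alpha z_\theta(x)=-\mbb{I}$ and $\nabla_v z_\theta(x)=-L\trsp x\,\mbb{I}$, where $\mbb{I}=\mbb{I}(z_\theta(x)\geq 0)$, produces the expression listed in the corollary. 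The explicit form of $A$ then drops out by another differentiation of $\mbb{E}[m^1_\theta(X)]$: the $\alpha$-piece and the terms linear in $w$ vanish under the second derivative, leaving exactly $\nabla^2_\theta\bigl[\tfrac{1}{1-\beta}\mbb{E}z_\theta(X)+\lambda_1 w\trsp\Sigma w+\lambda_2\,\mathrm{Var}(z_\theta(X))\bigr]$.

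The main technical obstacle is justifying the interchange of differentiation and expectation in the presence of the non-smooth hinge $z_\theta(x)=(-x\trsp w-\alpha)^+$. Because $X$ is absolutely continuous (Assumption~2), the hyperplane $\{-x\trsp w=\alpha\}$ is a $P$-null set for every $\theta$ in a neighbourhood of $\theta_0$, so $z_\theta$ is differentiable $P$-a.s.\ and a dominating function of linear growth (available under the finite-fourth-moment assumption) permits interchanging $\nabla_\theta$ with $\mbb{E}$. This is precisely the regularity already invoked to obtain the second-order Taylor expansion in Assumption~4, so no new work is required; we only need to read off the coefficients. Once this is in place, nonsingularity of $A$ is Assumption~4 and finiteness of $B$ follows from Assumption~3, yielding (\ref{eq:asympt normality empr soln}).
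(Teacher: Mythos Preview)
Your proposal is correct and follows exactly the route the paper intends: the corollary is obtained from the Bahadur-type expansion of Theorem~\ref{theo:Uasympt} by applying the multivariate CLT to the iid sum $n^{-1/2}\sum_i \dot m^1_{(\theta_0;\lambda_1,\lambda_2)}(X_i)$ (centered by the first-order condition), invoking Slutsky, and identifying $V_{\theta_0}$ with $A$ via $M(\theta)=\mbb{E}[m^1_\theta(X_1)]$. The paper gives no proof of the corollary beyond this, so your explicit computation of the H\'{a}jek projection $m^1_\theta(x)=m_\theta(x)+\lambda_1\bigl(w\trsp(x-\mu)\bigr)^2+\lambda_2\bigl(z_\theta(x)-\mbb{E}z_\theta(X)\bigr)^2$ and the subsequent differentiation are in fact more detailed than what appears in the text.
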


\textbf{Remarks.}
\begin{enumerate}
\item For asymptotics of $\hat{w}_n(\lambda_1,\lambda_2)$ we have
\begin{eqnarray}\label{eq:hatwn}
\sqrt{n}(\hat{w}_n(\lambda_1,\lambda_2) - w_{0}(\lambda_1,\lambda_2)) \Rightarrow
\mathcal{N}\left(0,\Sigma_{w_{0}}(\lambda_1,\lambda_2)\right),
\end{eqnarray}
where $\Sigma_{w_{0}}(\lambda_1,\lambda_2)=({0}_{p}~ L)\Sigma_{\theta_0}(\lambda_1,\lambda_2)({0}_{p}~ L)^\top$.
\item Setting $\lambda_1,\lambda_2=0$, we get back the unpenalized mean-CVaR problem.

\item \textbf{Asymptotic distribution of the efficient frontier.}
\\With Eq.~(\ref{eq:hatwn}), we can state the distribution of the true efficient frontier ---
that is, the distribution of $\hat{w}_n(\lambda_1,\lambda_2)\trsp\mu$ and $g(\hat{w}_n(\lambda_1,\lambda_2)):=
CVaR(-\hat{w}_n(\lambda_1,\lambda_2)\trsp X_{n+1};\beta)$,
where $X_{n+1}\sim F$, independent
of $X_1\ldots,X_n$.
For the portfolio mean, we have
\begin{eqnarray*}
\sqrt{n}(\hat{w}_n(\lambda_1,\lambda_2)\trsp\mu-w_0(\lambda_1,\lambda_2)\trsp\mu)\Rightarrow \mathcal{N}(0,\mu^\top\Sigma_{w_{0}}((\lambda_1,\lambda_2))\mu)
\end{eqnarray*}
and for the true CVaR, by the delta Method
\begin{equation}\label{eq:w'mu asympt}
\sqrt{n}(g(\hat{w}_n(\lambda_1,\lambda_2))-g(w_0(\lambda_1,\lambda_2)))\Rightarrow
\mathcal{N}\left\{0,g'(w_0(\lambda_1,\lambda_2))^\top\Sigma_{w_{0}}(\lambda_1,\lambda_2) g'(w_0(\lambda_1,\lambda_2))\right\}.
\end{equation}

The asymptotic distribution of $g(\hat{w}_n(\lambda_1,\lambda_2))$ clearly depends on the distribution of the assets $X$.
In the case when $X\sim ~Ellip(\mu,\Sigma,Y)$, $g(w) = -w^\top\mu+G\sqrt{w\trsp \Sigma w}$ according to our previous calculations
in Eq.~(\ref{eq:CVAR ellip}). Hence
\begin{equation}\label{eq:risk asympt}
\sqrt{n}(g(\hat{w}_n)-g(w_0))
\Rightarrow \mathcal{N}\left(0,\left(-\mu+G\frac{\Sigma w_0}{\sqrt{w_0\Sigma w_0}}
\right)^\top\Sigma_{w_0} \left(
-\mu+G\frac{\Sigma w_0}{\sqrt{w_0\Sigma w_0}}\right)\right).
\end{equation}
\end{enumerate}

\subsection{Example. Asymptotic analysis for $X\sim \mathcal{N}(\mu,\Sigma)$}\label{subsec:CLT for CVAR}
In the following, we provide the detailed computation of $\Sigma_{\theta_0}(0,0)$ for the unpenalized solution $\hat{\theta}_n(0,0)$
when $X\sim \mathcal{N}(\mu,\Sigma)$.

\begin{lemma}Suppose $X\sim \mathcal{N}(\mu,\Sigma)$. Then
\begin{eqnarray*}
z_{\theta_0}(X)&=&-w_{0}\trsp X-\alpha_{0}\sim\sigma_{0}\mathcal{N}(-\Phi^{-1}(\beta),1),~\text{and}\\
p_0&=&f_{-w_{0}\trsp X}(0)=\frac{1}{\sqrt{2\pi}\sigma_{0}}\exp\left\{-\frac{1}{2\sigma^2_{0}}(\Phi^{-1}(\beta))^2\right\},
\end{eqnarray*}
where $\sigma_{0}=\sqrt{w_{0}\trsp \Sigma w_{0}}$. Then $\Sigma_{\theta_0}(0,0) = A_0^{-1}B_0A_{0}^{-1}$,
where $A_0$, $B_0$ are symmetric matrices with
\begin{eqnarray*}
\begin{array}{lll}
A_0(1,1) & =\displaystyle\frac{p_0}{1-\beta} &  \\
A_0(j,l)&=\displaystyle\frac{p_0}{(1-\beta)}\mbb{E}[L_j\trsp X L_l\trsp X|z_{\theta_0}(X)=0]&\text{for}~ 2\leq j,l\leq p\\
A_0(1,j)&=\displaystyle-\frac{p_0}{(1-\beta)}\mbb{E}[L_j\trsp X|z_{\theta_0}(X)=0] &\text{for}~ 2\leq  j \leq p,
\end{array}
\end{eqnarray*}
where $L_j$ is the $j$-th column of $L$, and
\begin{eqnarray*}
\begin{array}{lll}
B_0(1,1)&=\displaystyle\frac{\beta}{1-\beta}&\\
B_0(j,l)&=\displaystyle\lambda_0^2(L_j\trsp\Sigma L_l+L_j\trsp\mu L_l\trsp\mu)+\frac{1}{(1-\beta)}\left(\frac{1}{1-\beta}+2\lambda_0\right)
\Exp[L_j\trsp X L_l\trsp X \mbb{I}(z_{\theta_0}(X)\geq 0)]& \text{for}~ 2\leq j,l\leq p\\
B_0(1,j)&= 0&\text{for}~2\leq j\leq p.
\end{array}
\end{eqnarray*}
\end{lemma}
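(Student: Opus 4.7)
The plan is to specialize the formulas for $A$ and $B$ in Corollary~\ref{cor:asymptotic reg soln} (with $\lambda_1=\lambda_2=0$) to the Gaussian case and verify each entry by direct computation. I would proceed in four stages.

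First I would establish the distributional identities. Since $-w_0\trsp X\sim\mathcal{N}(-w_0\trsp\mu,\sigma_0^2)$ with $\sigma_0^2=w_0\trsp\Sigma w_0$, the Rockafellar--Uryasev first-order condition for $\alpha$ identifies $\alpha_0$ as the VaR of $-w_0\trsp X$ at level $\beta$, giving $\alpha_0=-w_0\trsp\mu+\sigma_0\Phi^{-1}(\beta)$. Substituting into $z_{\theta_0}(X)=-w_0\trsp X-\alpha_0$ yields the claimed law $\sigma_0\mathcal{N}(-\Phi^{-1}(\beta),1)$, and evaluating the Gaussian density of $-w_0\trsp X$ at the appropriate point gives $p_0$.

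Second, I would compute $A_0=\nabla^2_\theta M(\theta;0,0)\big|_{\theta_0}$ for $M(\theta;0,0)=\alpha+(1-\beta)^{-1}\mbb{E}(-w\trsp X-\alpha)^+-\lambda_0 w\trsp\mu$. The $\alpha$-derivatives are straightforward: differentiating once gives $1-(1-\beta)^{-1}P(-w\trsp X>\alpha)$, and differentiating again gives $(1-\beta)^{-1}f_{-w\trsp X}(\alpha)$, which at $\theta_0$ is $p_0/(1-\beta)$. For the mixed and pure $v_j$-derivatives, I would use the co-area formula (equivalently, interchange differentiation and integration via the smoothness of the Gaussian density of $-w\trsp X$) to express the $v_j$-derivative of $P(-w\trsp X>\alpha)$, and the $v_l$-derivative of $\mbb{E}[L_j\trsp X\,\mbb{I}(-w\trsp X-\alpha>0)]$, as conditional expectations on the level set $\{z_{\theta_0}(X)=0\}$ weighted by $p_0$. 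This delivers the displayed formulas for $A_0(1,j)$ and $A_0(j,l)$.

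Third, I would compute $B_0=\mbb{E}[\dot m^1_{\theta_0}(X)\dot m^1_{\theta_0}(X)\trsp]$ using the gradient formula from Corollary~\ref{cor:asymptotic reg soln}: the $\alpha$-component is $1-(1-\beta)^{-1}\mbb{I}$ and each $v_j$-component is $-(1-\beta)^{-1}L_j\trsp X\,\mbb{I}-\lambda_0 L_j\trsp X$. For $B_0(1,1)$, squaring and using $\mbb{I}^2=\mbb{I}$ together with $\mbb{E}\mbb{I}=P(z_{\theta_0}(X)>0)=1-\beta$ yields $\beta/(1-\beta)$. For $B_0(j,l)$, expanding the cross-product separates into (i)~a pure $\lambda_0^2\mbb{E}[L_j\trsp X L_l\trsp X]=\lambda_0^2(L_j\trsp\Sigma L_l+L_j\trsp\mu L_l\trsp\mu)$ term, and (ii)~cross terms that collect to $(1-\beta)^{-1}\big((1-\beta)^{-1}+2\lambda_0\big)\mbb{E}[L_j\trsp X L_l\trsp X\,\mbb{I}]$. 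For $B_0(1,j)$, I would substitute the first-order condition $\mbb{E}[\dot m_{\theta_0}]=0$, which supplies the relation $(1-\beta)^{-1}\mbb{E}[L_j\trsp X\,\mbb{I}]=-\lambda_0 L_j\trsp\mu$; combined with $\mbb{E}\mbb{I}=1-\beta$ and $\mbb{I}^2=\mbb{I}$, the expanded expression cancels to zero.

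The main technical obstacle is the rigorous differentiation of $\theta\mapsto\mbb{E}[(-w\trsp X-\alpha)^+]$, whose integrand is only piecewise smooth. I would address this via the co-area formula, or equivalently by invoking Stein's identity $\mbb{E}[X g(X)]=\mu\mbb{E}g(X)+\Sigma\mbb{E}\nabla g(X)$ in distributional form, to legitimately express derivatives through the level set $\{-w_0\trsp X=\alpha_0\}$. Stein's identity is also the cleanest tool for evaluating the $\mbb{E}[X\mbb{I}]$-type expectations that appear in $B_0(1,j)$ and for making the final cancellation transparent.
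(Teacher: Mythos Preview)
Your proposal is correct and follows exactly the route the paper takes: the paper's proof of this lemma is the single sentence ``This is a straight-forward application of Corollary~\ref{cor:asymptotic reg soln} for the case $X\sim\mathcal{N}(\mu,\Sigma)$,'' and your four stages simply unpack that application. The only cosmetic difference is in how the conditional-expectation terms are evaluated: you propose the co-area formula or Stein's identity to differentiate through the indicator and express boundary contributions on $\{z_{\theta_0}(X)=0\}$, whereas the paper (in its supporting Lemma~\ref{lemma:key stats}) computes these same quantities by writing down the conditional law $(L_j^\top X\mid z_{\theta_0}(X))$ via the standard bivariate-Gaussian conditioning formula and integrating directly. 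Both devices yield the same $p_0$-weighted conditional expectations in $A_0$ and the same truncated-moment terms in $B_0$; your approach is perhaps slightly more transferable beyond the Gaussian case, while the paper's is more elementary here since all conditional laws are explicit.
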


\begin{proof}
This is a straight-forward application of Corollary \ref{cor:asymptotic reg soln} for the case $X\sim\mathcal{N}(\mu,\Sigma)$.
\end{proof}

Let us now compare the asymptotic results derived above with simulations with finite number of observations. Consider 5 assets,
a range of observations ($n=250,500,1000,2000$) and $X\sim\mathcal{N}(\mu_{sim},\Sigma_{sim})$, where the model parameters are the same as the model parameters of the first five assets used in Sec.~\ref{subsec:demo}.
For simulations, we solve the mean-CVaR problem with dualized mean constraint:
\begin{equation*}\label{eq:dual meanCVAR}
\begin{array}{ll}
\underset{w}{\min} & \widehat{CVaR}_n(-w\trsp \mathsf{X};\beta)-\lambda_0w\trsp \hat{\mu}_n  \\
s.t. & w\trsp 1_p = 1,
\end{array}
\end{equation*}
and follow steps similar to Sec.~\ref{subsec:demo}.

\begin{figure}[t!]
\centering
\includegraphics[width=0.7\paperwidth]{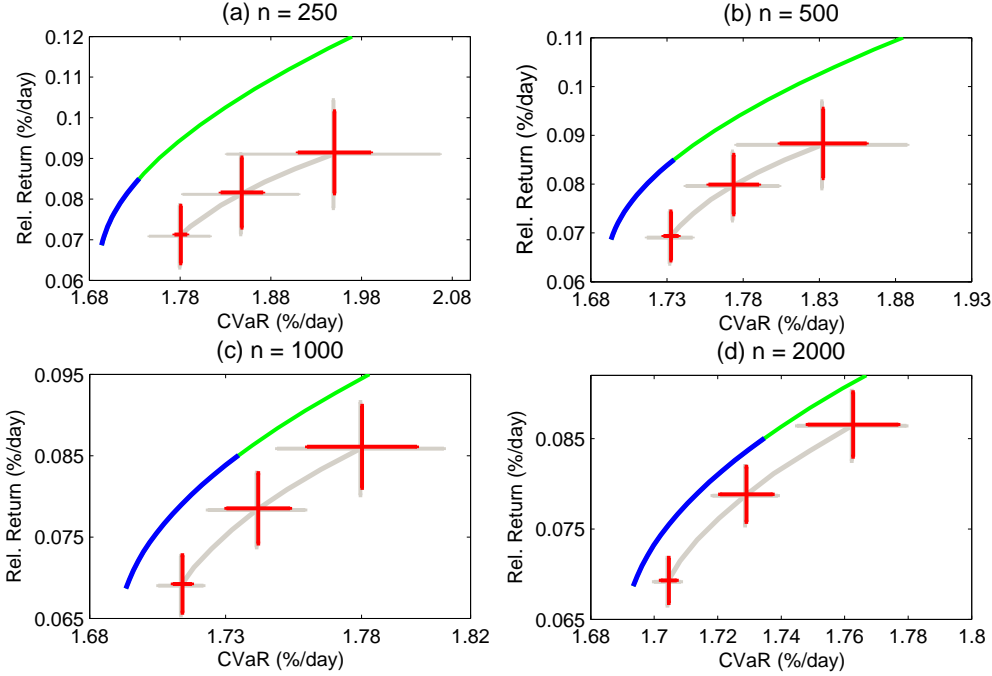}
 \caption{\small{Comparison of theoretical (red) and simulated (grey) distributions of the empirical efficient frontier
 when $X\sim\mathcal{N}(\mu_{sim},\Sigma_{sim})$ for increasing number of observations
 $n = [250,500,1000,2000]$. The error bars indicate $\pm 1/2$ std variabilities in the mean and CVaR.
 Green is the population efficient frontier, and blue indicates the portion that corresponds to the return range
 considered for the simulations. Observe that the asymptotic variance calculated theoretically (red bars) approach
 the simulated variance (grey bars) with increasing $n$.}}\label{fig:TheoSim}
\end{figure}

In Fig.~\ref{fig:TheoSim}, we summarize the empirical frontiers by plotting their averages and indicating $\pm 1/2$ standard deviation error bars,
in both true mean (vertical) and true risk estimations (horizontal) in grey. The population frontier is also plotted, and is shown in green,
and the theoretical $\pm 1/2$ standard deviations of mean and risk estimations are juxtaposed with the empirical error bars in red.
We make a couple of observations:
\begin{enumerate}
\item With increasing $n$, the theoretical error bars approach the simulated ones, as expected.
\item The theory seems to better predict the mean estimation error (vertical) better than the risk estimation error (horizontal).
With finite $n$, the mean estimation error, which is computed using Eq.~(\ref{eq:w'mu asympt}), depends only on one
approximate quantity $\Sigma_{w_0}(0,0)$, whereas the risk estimation error, computed using Eq.~(\ref{eq:risk asympt}),
depends on $\Sigma_{w_0}(0,0)$ and $w_0$. Although $\hat{w}_n$ is a consistent estimator of $w_0$ asymptotically,
with finite $n$ the difference does play a role, as shown by the relative inaccuracy of the horizontal error bars
compared to the vertical ones. The finite sample bias also explains the gap in the positions of the population and
simulated efficient frontiers.
\end{enumerate}

Let us now derive asymptotic properties of the penalized solution $\hat{\theta}_n(\lambda_1,\lambda_2)$, $\lambda_1,\lambda_2\geq 0$,
when $X\sim \mathcal{N}(\mu,\Sigma)$. First, we show that when $X\sim \mathcal{N}(\mu,\Sigma)$, penalizing variance of CVaR estimation is
redundant if one penalizes the sample variance of the mean.

\begin{lemma}
Suppose $X\sim\mathcal{N}(\mu,\Sigma)$ and let $z_{\theta}(X) = -\alpha-w\trsp X$. Then $z_{\theta}(X)\sim \mathcal{N}(\mu_1,\sigma_1^2)$
where $\mu_1 = -\sigma_1\Phi^{-1}(\beta)$, $\sigma_1^2 = w\trsp \Sigma w$, and
$$Var[\max(z_{\theta}(X),0)] = C(\beta)\sigma_1^2,$$
where $C(\beta)$ is a constant that only depends on $\beta$. Thus penalizing the sample variance of CVaR via
$P_2(w) = \widehat{Var}_n[z_{\theta(w)}(X),0)]  \leq U_2$ is redundant
if one penalizes the sample variance of the mean via $P_1(w) = w\trsp \hat{\Sigma}_n w = \hat{\sigma}_{1,n}^2 \leq U_1$.
\end{lemma}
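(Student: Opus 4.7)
The plan is to reduce both quantities of interest to a common scalar, $\sigma_1^2 = w^\top\Sigma w$, and then observe that the two sample-variance penalties estimate the same geometric object on the $w$-space (up to a multiplicative constant depending only on $\beta$).

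First, I would identify the distribution of $z_\theta(X)$. Under $X\sim\mathcal{N}(\mu,\Sigma)$, $-w^\top X \sim \mathcal{N}(-w^\top\mu,\sigma_1^2)$, so the VaR at level $\beta$ equals $\alpha_\beta = -w^\top\mu + \sigma_1\Phi^{-1}(\beta)$. Substituting this choice of $\alpha$ into $z_\theta(X)=-\alpha-w^\top X$ gives mean $-\sigma_1\Phi^{-1}(\beta)=\mu_1$ and variance $\sigma_1^2$, matching the claim. Setting $W:=(-w^\top X+w^\top\mu)/\sigma_1\sim\mathcal{N}(0,1)$ yields the representation
\[
z_\theta(X) \;=\; \sigma_1\bigl(W-\Phi^{-1}(\beta)\bigr),
\]
which isolates the only portfolio-dependent quantity as the scalar $\sigma_1$.

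Second, I would exploit this scaling to compute the variance of the positive part. Since $\sigma_1\ge 0$, we have $\max(z_\theta(X),0)=\sigma_1\max\bigl(W-\Phi^{-1}(\beta),0\bigr)$, and pulling $\sigma_1^2$ out of the variance gives
\[
\mathrm{Var}\bigl[\max(z_\theta(X),0)\bigr] \;=\; \sigma_1^2\,\mathrm{Var}\bigl[\max(W-\Phi^{-1}(\beta),0)\bigr].
\]
Defining $C(\beta):=\mathrm{Var}[\max(W-\Phi^{-1}(\beta),0)]$, a scalar depending on $\beta$ alone (and easily expressible in closed form via $\phi$ and $\Phi$), delivers the identity $\mathrm{Var}[\max(z_\theta(X),0)] = C(\beta)\,w^\top\Sigma w$.

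Finally, I would deduce the redundancy claim by applying the Law of Large Numbers: the sample variance $\widehat{\mathrm{Var}}_n[z]=z^\top\Omega_n z$ converges in probability to $C(\beta)\,w^\top\Sigma w$, while $w^\top\hat\Sigma_nw$ converges to $w^\top\Sigma w$. Both penalty functions are therefore proxies for the same quantity $w^\top\Sigma w$ (the first directly, the second scaled by the known constant $C(\beta)$), so the constraint $P_2(w)\le U_2$ is, up to $o_P(1)$, the constraint $w^\top\Sigma w \le U_2\,n(1-\beta)^2/C(\beta)$; paired with $P_1(w)\le U_1$, one of the two caps is always tighter and renders the other inactive. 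The only mild obstacle is articulating ``redundant'' precisely in finite samples — the two sample versions need not coincide exactly — which I would handle by noting that both consistently estimate linear-in-$w^\top\Sigma w$ functionals, so the feasible sets they carve out agree asymptotically and can be made to coincide in finite samples by suitable calibration of $U_1,U_2$.
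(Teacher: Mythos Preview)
Your argument is correct and follows essentially the same route as the paper: the paper's proof is a one-line ``straight-forward calculations show'' that produces the factorization $\mathrm{Var}[\max(z_\theta(X),0)] = C(\beta)\sigma_1^2$ with the explicit constant $C(\beta)=\bigl([\Phi^{-1}(\beta)]^2+1\bigr)(1-\beta)-3\Phi^{-1}(\beta)f_{Z_0}[\Phi^{-1}(\beta)]$, which is exactly what your scaling representation $z_\theta(X)=\sigma_1(W-\Phi^{-1}(\beta))$ delivers without the explicit computation. Your treatment of the redundancy claim via consistency is more careful than the paper's, which simply states the conclusion.
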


\begin{proof}Straight-forward calculations show
\begin{eqnarray*}
Var[\max(z_{\theta}(X),0)] = \left\{([\Phi^{-1}(\beta)]^2+1)(1-\beta)-3\Phi^{-1}(\beta)f_{Z_0}[\Phi^{-1}(\beta)]\right\}\sigma_1^2,
\end{eqnarray*}
where $f_{Z_0}$ is the pdf of the standard normal random variable $Z_0$.
\end{proof}

The implication now is that when $X\sim \mathcal{N}(\mu,\Sigma)$, we need only consider $\lambda_1\geq 0,\lambda_2=0$ to characterize
the asymptotic properties of the penalized solution, which we describe below.

\begin{lemma}\label{lem:pen ast}Suppose $X\sim \mathcal{N}(\mu,\Sigma)$. Then
$$\Sigma_{\theta_0}(\lambda_1,0) = A_1^{-1}B_1A_{1}^{-1},$$
where $A_1$, $B_1$ are symmetric matrices with
\begin{eqnarray*}
\begin{array}{lll}
A_1(1,1)&=A_0(1,1)&\\
A_1(j,l)&=A_0(j,l)+\lambda_1L_j\trsp\Sigma L_l& \text{for}~ 2\leq j,l\leq p\\
A_1(1,j)&=A_0(1,j) &\text{for}~ 2\leq  j \leq p
\end{array}
\end{eqnarray*}
where $L_j$ is the $j$-th column of $L$, and
\begin{eqnarray*}
\begin{array}{lll}
B_1(1,1)&=B_0(1,1)&\\
B_1(j,l)&=B_0(j,l)+\lambda_1\mbb{E}[b_{0,j}b_{1,l}+b_{0,l}b_{1,j}+\lambda_1b_{1,j}b_{1,l}]&\text{for}~ 2\leq j,l\leq p\\
B_1(1,j)&=B_0(1,j) + \lambda_1\mbb{E}[b_{0,1}b_{1,j}] &\text{for}~2\leq j\leq p
\end{array}
\end{eqnarray*}
where for $2\leq j,l\leq p$,
\begin{eqnarray*}
\mbb{E}[b_{0,j}b_{1,l}]
&=& -\frac{2}{1-\beta}\mbb{E}[L_j\trsp X L_l\trsp(X-\mu)w\trsp(X-\mu)\mbb{I}(z_{\theta_0}(X)\geq0)] -2\lambda_0L_j\trsp\mu L_l\trsp \Sigma w\\
\mbb{E}[b_{1,j}b_{1,l}] &=& 4\mbb{E}[L_j\trsp(X-\mu)(X-\mu)\trsp L_lw\trsp(X-\mu)(X-\mu)\trsp w]\\
\mbb{E}[b_{0,1}b_{1,l}] &=& 2L_l\Sigma w-\frac{2}{1-\beta}\mbb{E}[ L_l\trsp(X-\mu)w\trsp(X-\mu)\mbb{I}(z_{\theta_0}(X)\geq0)].
\end{eqnarray*}
\end{lemma}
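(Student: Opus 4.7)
The plan is to invoke Corollary \ref{cor:asymptotic reg soln} specialized to $X\sim\mathcal{N}(\mu,\Sigma)$ and $\lambda_2=0$. The restriction to $\lambda_2=0$ is justified by the preceding lemma: under Gaussianity, $\mathrm{Var}[\max(z_{\theta}(X),0)]$ is a deterministic multiple of $w\trsp\Sigma w$, so the CVaR-variance penalty is redundant once one penalizes the mean-variance; the two penalties can be consolidated into a single $\lambda_1$ by absorbing a constant. It then remains only to track how $\lambda_1$ modifies the sandwich formula $\Sigma_{\theta_0}=A^{-1}BA^{-1}$ relative to the unpenalized lemma that produced $A_0$ and $B_0$.

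The key structural observation is that the first-order Hoeffding gradient displayed in Corollary \ref{cor:asymptotic reg soln} decomposes additively as
\[
\nabla_\theta m^1_{(\theta_0;\lambda_1,0)}(X) = b_0(X) + \lambda_1 b_1(X),
\]
where $b_0(X)$ is the gradient corresponding to $\lambda_1=\lambda_2=0$ and
\[
b_1(X) = \bigl(0,\; 2\,L\trsp(X-\mu)(X-\mu)\trsp w_0\bigr)\trsp
\]
has a vanishing $\alpha$-component and is supported on the $v$-block. This decomposition immediately isolates the $\lambda_1$-dependence of both $A$ and $B$, since $A_0,B_0$ are exactly the sandwich pieces produced by $b_0$ alone.

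For $A_1=\nabla_\theta^2\mathbb{E}[m^1]|_{\theta_0}$, only the $v$-$v$ block sees a new contribution: differentiating $2\lambda_1 L\trsp(X-\mu)(X-\mu)\trsp w$ in $v$ (using $\partial w/\partial v = L$) and taking expectations produces a term proportional to $L\trsp\Sigma L$, giving the stated entries $A_1(j,l)=A_0(j,l)+\lambda_1 L_j\trsp\Sigma L_l$ for $j,l\ge 2$, with the $\alpha$-row and $\alpha$-column unchanged. For $B_1=\mathbb{E}[\nabla m^1(\nabla m^1)\trsp]$, the additive decomposition yields
\[
B_1 = B_0 + \lambda_1\,\mathbb{E}[b_0 b_1\trsp + b_1 b_0\trsp] + \lambda_1^2\,\mathbb{E}[b_1 b_1\trsp],
\]
and reading off coordinates reproduces the displayed $\mathbb{E}[b_{0,j}b_{1,l}]$, $\mathbb{E}[b_{0,1}b_{1,l}]$, and $\mathbb{E}[b_{1,j}b_{1,l}]$. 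The purely polynomial quantity $\mathbb{E}[b_{1,j}b_{1,l}]$ is a standard quartic Gaussian moment; the cross terms involving $\mathbb{I}(z_{\theta_0}(X)\ge 0)$ reduce to half-space integrals in the direction $w_0$ and can be computed by conditioning on $w_0\trsp X$.

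The main obstacle is bookkeeping rather than any new analytical idea: one must carefully track the $\theta=(\alpha,v)$ block decomposition and the reparametrization $w=w_1+Lv$, avoid double-counting contributions already captured by $b_0$, and propagate sign conventions through each mixed Gaussian moment. Once the additive gradient decomposition $b_0+\lambda_1 b_1$ is in hand, the remaining work is the routine evaluation of a handful of Gaussian integrals.
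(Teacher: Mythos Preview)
Your proposal is correct and follows the same approach as the paper, which simply states that the lemma is a straightforward application of Corollary~\ref{cor:asymptotic reg soln} in the case $X\sim\mathcal{N}(\mu,\Sigma)$. Your additive decomposition $\nabla_\theta m^1=b_0+\lambda_1 b_1$ is exactly the mechanism by which that application goes through, and your entrywise computations reproduce the stated formulas; the paper provides no further detail beyond the one-line reference to the corollary.
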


\begin{proof}
This is a straight-forward application of Corollary \ref{cor:asymptotic reg soln} for the case $X\sim\mathcal{N}(\mu,\Sigma)$.
\end{proof}

\textbf{Remark -- Justification of the parametric PBR method.}\\
The nonparametric PBR method with only a penalty on the mean estimation is a linear combination of the empirical mean-CVaR problem
(\ref{eq:empricial problem}) and the empirical Markowitz problem (\ref{eq:Markowitz}) because the penalty is precisely
the portfolio variance estimate $w\trsp \hat{\Sigma}_n w$. In particular, this single-penalty problem
approaches (\ref{eq:Markowitz}) with increasing $\lambda_1$. In Figure~\ref{fig:TheoErrorReg}, we plot 1 std of
$w^v_n(\lambda_1,0)\trsp \mu$ and $CVaR(-w^v_n(\lambda_1,0)\trsp X;\beta)$
for the single-penalty problem as $\lambda_1$ is increased, for different values of $\lambda_0$, computed using Lemma~\ref{lem:pen ast}.
Observe that the asymptotic standard deviations for both portfolio mean and CVaR decrease with increasing
$\lambda_1$, uniformly in $\lambda_0$. Given that both solutions to (\ref{eq:empricial problem}) and (\ref{eq:Markowitz}) converge to the
population solution $w_0$, the asymptotic theory deems the empirical Markowitz solution superior.

\begin{figure}[t!]
\centering
\includegraphics[width=0.65\paperwidth]{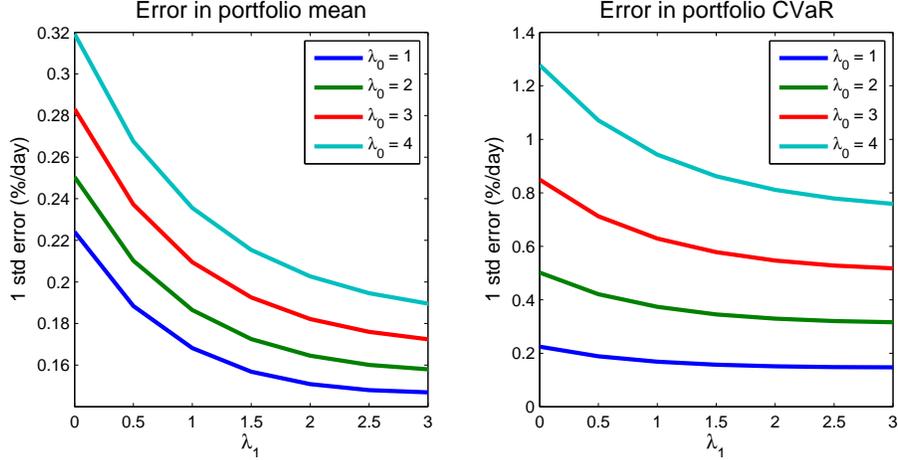}
 \caption{\small{1 asymptotic std of the portfolio mean and CVaR
for the single-penalty problem as $\lambda_1$ is increased when $X\sim\mathcal{N}(\mu,\Sigma)$,
for different values of $\lambda_0$.}}\label{fig:TheoErrorReg}
\end{figure}

\section{Numerical results}\label{sec:numerical}
In this section, we present simulation results to evaluate the nonparametric and parametric PBR methods presented in Sec.~\ref{sec:penalized problem}
against the straight-forward approach (\ref{eq:empricial problem}).
We consider $p = 10$ assets and three distributional models for the asset log-returns:
$X$ is multivariate Gaussian, elliptical and mixture of multivariate Gaussian and negative exponential.
For each model, we follow the procedure outlined in Section 2 to construct sample efficient frontiers
corresponding to (\ref{eq:empricial problem}), (\ref{eq:penalized problem-asympvar}) and (\ref{eq:Markowitz}).

One question that arises while solving (\ref{eq:penalized problem-asympvar}) is how one chooses the penalty terms $U_1$ and $U_2$ in the constraints
\begin{eqnarray*}
         \displaystyle\frac{1}{n}w\trsp \hat\Sigma_nw  &\leq& U_1 \\
         \displaystyle\frac{1}{n(1-\beta)^2}{z}\trsp \Omega_n{z} &\leq& U_2.
\end{eqnarray*}
If $U_1,U_2$ are too small, the problem becomes infeasible, whereas if they are too large, the penalization does not have any effect.
It is sensible to choose $U_1, U_2$ as a proportion of $\hat{w}_n\trsp \hat\Sigma_n\hat{w}_n/n$ and
$\hat{{z}}_n\trsp \Omega_n\hat{{z}}_n/(n(1-\beta)^2)$ respectively,
where $(\hat{w}_n,\hat{{z}}_n)$ is the solution to the unpenalized problem (\ref{eq:empricial problem}).
We denote the proportions $r_1$ and $r_2$ respectively. In practice, one would perform cross-validation to find
values of $(r_1,r_2)\in[0,1]\times[0,1]$ that maximize out-of-sample performance.

\subsection{Gaussian/elliptical models}
Here we consider
$$
X\sim \mu_{sim} +\lambda\mathcal{N}(0,\Sigma_{sim})
$$
where $\lambda$ is as in (\ref{eq:elliptical model}), with $\lambda=1$ for a Gaussian model and $\lambda \sim \Gamma(3,0.5)$ for an elliptical model. The parameters
$\mu_{sim}$ and $\Sigma_{sim}$ are the same as those used in Sec.~\ref{subsec:demo}.
We plot the histograms for $100,000$ sample returns for an equally-weighted portfolio $w = 1_p/p$
under the Gaussian and elliptical models in Fig. (\ref{fig:hist gauss ellip}).

We summarize the simulation results in Fig.~(\ref{fig:gauss_ellip}), where $(r_1,r_2) = (0.92,1)$ for
both the Gaussian and elliptical models [recall that the second penalty is redundant due to Lemma \ref{lem:equiv}].
Notice that for both models, the empirical Markowitz efficient frontier dominates the
penalized efficient frontier which in turn dominates the empirical mean-CVaR efficient frontier, in both \emph{position} of
the average of the simulated frontiers and \emph{variability}, as indicated by the vertical and horizontal error bars.

For the Gaussian case, $r_1=0.92$ was just feasible in that further reduction in this value led to most
instances of the problem being infeasible. From Fig.~(\ref{fig:gauss_ellip}b), we can see that this
is because the penalized solutions are approaching the empirical Markowitz solutions
with this choice of $r_1$ as the average simulated efficient frontiers of penalized (grey) and
empirical Markowitz (blue) solutions are close. For the elliptical model,
$r_1=0.92$ could be further reduced with the resulting penalized efficient frontier approaching the empirical Markowitz
efficient frontier. In summary, the empirical Markowitz solutions perform uniformly better than both the
original and penalized mean-CVaR solutions, with the penalized efficient frontier nearing the
empirical Markowitz efficient frontier with decreasing $r_1$.

\subsection{Mixture model}
Let us now consider returns being driven by a mixture of multivariate
normal and negative exponential distributions, such that with a small probability,
all assets undergo a perfectly correlated exponential-tail loss. Formally,
\begin{equation}
{X}\sim(1-I(q))N({\mu_{sim}},\Sigma_{sim})+I(q)(Y{1}_p+{f}),
\end{equation}
where $({\mu_{sim}},\Sigma_{sim})$ are parameters with the same value as in the Gaussian/elliptical models, $I(q)\sim Bernoulli(q)$, and ${f}=[f_{1},\ldots,f_{p}]\trsp $ is a $p\times1$ vector of constants, and
Y is a negative exponential random variable with density
\[
P(Y=y)=\begin{cases}
\lambda e^{\lambda y},~~ & \mathrm{if}~y\leq 0\\
0~~ & \mathrm{otherwise}.\end{cases}\]
In our simulations, we consider $q=0.05$, $f_{i}=\mu_{i}-\sqrt{\Sigma_{ii}}$ for $i=1,\ldots,p$ and $\lambda=1$.
The histogram for $100,000$ sample returns of an equally-weighted portfolio under this mixture model is shown in Fig. (4a).

We summarize the simulation results in Fig.~(4b), where $(r_1,r_2) = (0.5,0.5)$.
In this case, the penalized efficient frontiers perform better on average than
the efficient frontiers generated by the other two methods. The empirical Markowitz efficient frontiers do not seem to perform any
better than the original efficient frontiers on average, which is not surprising  because the empirical Markowitz solution is only
intended for $X$ having an elliptical distribution.

\section{Conclusion}\label{sec:conclusion}
We investigate Performance-Based Regularization as a method to reduce estimation risk in empirical
mean-CVaR portfolio optimization. The nonparametric PBR method solves the empirical mean-CVaR problem
with penalties on the uncertainties in mean and CVaR estimations. The parametric PBR method solves the empirical Markowitz
problem instead if the underlying model is elliptically distributed. Both theoretical analysis and simulation
experiments show the PBR methods improve upon the naive approach to data-driven mean-CVaR portfolio optimization.

From a larger perspective, the PBR approach is a new and promising way of
dealing with estimation risk and introducing robustness to data-driven optimization,
and is not restricted to the mean-CVaR problem. We leave investigating PBR in a general problem context for future work.

\section*{Acknowledgements}{This research was supported in part by the NSF CAREER Awards DMS-0847647 (El Karoui),
CMMI-0348746 (Lim), and NSF Grants CMMI-1031637 and CMMI-1201085 (Lim). The opinions, findings, conclusions or recommendations expressed
in this material are those of the authors and do not necessarily reflect the views of the National Science Foundation.
The authors also acknowledge support from an Alfred P. Sloan Research Fellowship (El Karoui), the Coleman Fung Chair in
Financial Modeling and the Coleman Fung Risk Management Center (Lim) and the Eleanor Sophia Wood Traveling Scholarship from
The University of Sydney (Vahn).}

\begin{figure}[htbp]
\centering
\includegraphics[width=0.7\paperwidth]{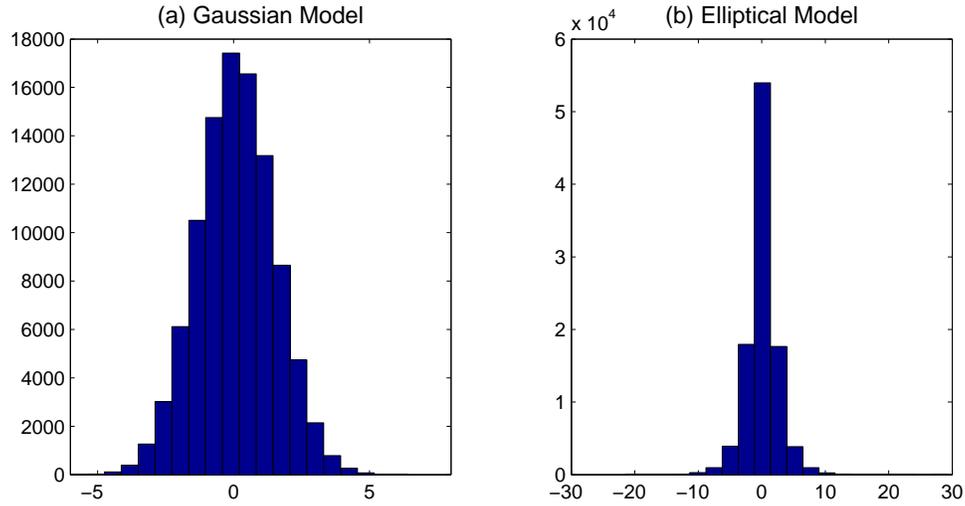}
\caption{\small Distribution of equally weighted portfolio under (a) Gaussian and (b) elliptical model}\label{fig:hist gauss ellip}
\end{figure}

\begin{figure}[h!]
\includegraphics[width=0.8\paperwidth]{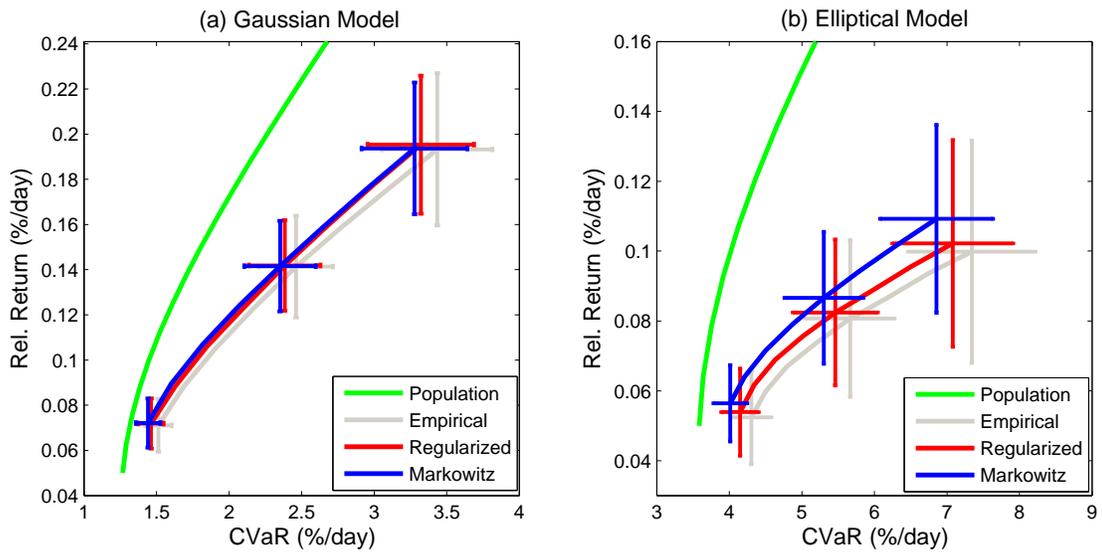}
\caption{\small Average of population risk vs return for solutions to
(\ref{eq:empricial problem}) in grey, (\ref{eq:penalized problem-asympvar}) in red and (\ref{eq:Markowitz}) in blue under
(a) Gaussian model and (b) elliptical model. Green curve denotes the population efficient frontier.
Horizontal and vertical lines show $\pm 1/2$ std error.}\label{fig:gauss_ellip}
\end{figure}

\begin{figure}[t!]
\centering
\includegraphics[width=0.8\paperwidth]{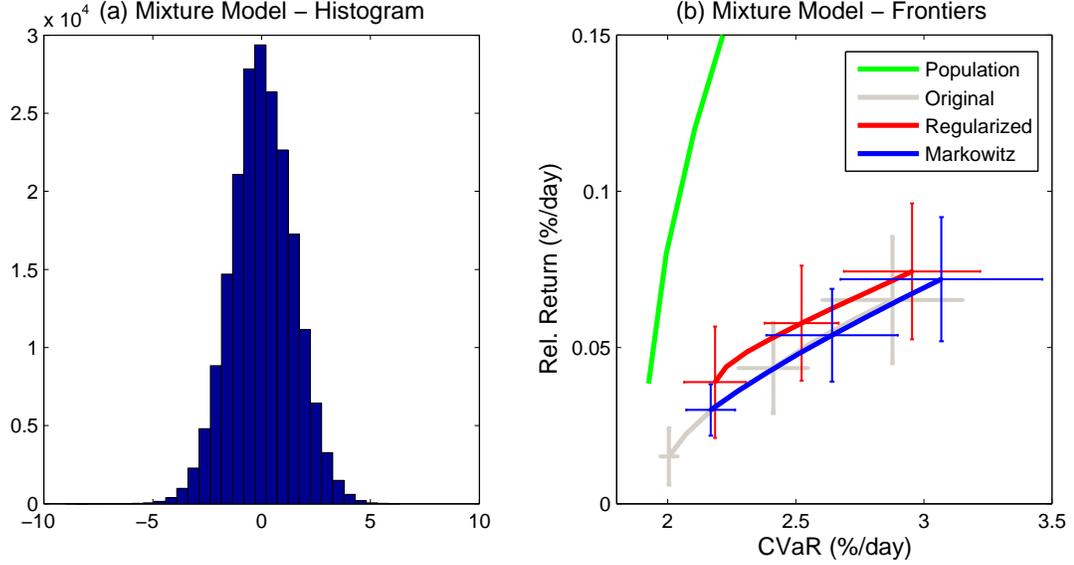}
\caption{\small (a) Distribution of returns for an equally weighted portfolio under the mixture model.
(b) Average of population risk vs return for solutions to (\ref{eq:empricial problem}) in grey,
(\ref{eq:penalized problem-asympvar}) in red and (\ref{eq:Markowitz}) in blue under the mixture model.
Green curve denotes the population efficient frontier. Horizontal and vertical lines show $\pm 1/2$ std error.}\label{fig:mixture}
\end{figure}

\newpage
\appendix
\section{Asymptotics of the CVaR estimator}\label{app:note on cvar}
\textbf{Setting.} Let $\mathsf{L}=[L_{1},\ldots,L_{n}]$ be $n$ iid observations (of portfolio losses) from a distribution $F$ which is absolutely continuous,
has a twice continuously differentiable pdf and a finite second moment.

In this section, we prove the asymptotic distribution of the estimator $\widehat{CVaR}_n(L;\beta)$ introduced in
Eq.~(\ref{eq:CVAR est}) of Sec.~2.1. First, we define a closely related CVaR estimator:
\begin{definition}[\textbf{Type 1 CVaR estimator}.]\label{def:type1 CVaR}
For $\beta\in(0.5,1)$, we define Type 1 CVaR estimator to be
\begin{equation*}
\widehat{CV1}_n(\msf{L};\beta) := \underset{\alpha\in\mathbb{R}}{\min}~~ (1-\varepsilon_n)\alpha+\displaystyle \frac{1}{n-\lceil n\beta \rceil +1}\sum_{i=1}^{n}(L_i-\alpha)^{+},
\end{equation*}
where $\varepsilon_n$ is some constant satisfying $0<\varepsilon_n< (n-\ord+1)^{-1}$, $\sqrt{n}\varepsilon_n\overset{P}{\rightarrow}0$.
\end{definition}

Now consider the following CVaR estimator, expressed without the minimization:
\begin{definition}[\textbf{Type 2 CVaR estimator}.]\label{def:type2 CVaR}
For $\beta\in(0.5,1)$, we define Type 2 CVaR estimator to be
\begin{equation*}
\widehat{CV2}_n(\mathsf{L};\beta) := \frac{1}{n-\lceil n\beta \rceil +1}\sum^{n}_{i=1} L_{i}1(L_{i}\geq \hat{\alpha}_n(\beta)),
\end{equation*}
where $\hat{\alpha}_n(\beta):=L_{(\lceil n\beta\rceil)},$
the $\lceil n\beta\rceil$-th order statistic of the sample $L_1,\ldots,L_n$.
\end{definition}
Type 2 CVaR estimator is asymptotically normally distributed [\citeasnoun{chen2008nee}]. In the remainder of this section,
we show that $\widehat{CV2}_n(\msf{L};\beta)$ is asymptotically equivalent to $\widehat{CV1}_n(\msf{L};\beta)$, which is in turn asymptotically equivalent
to $\widehat{CVaR}_n(\msf{L};\beta)$. We then conclude $\widehat{CVaR}_n(\msf{L};\beta)$ is also asymptotically normal, converging
to the same asymptotic distribution as $\widehat{CV2}_n(\msf{L};\beta)$.

\begin{proposition}\label{prop:alpha min}
The solution $\alpha^*=L_{(\ord)}$ is unique to the one-dimensional optimization problem
\begin{equation*}
\underset{\alpha\in\mathbb{R}}{\min}~~
\left\{G_n(\alpha) := (1-\varepsilon_n)\alpha+\displaystyle \frac{1}{n-\lceil n\beta \rceil +1}\sum_{i=1}^{n}(L_{i}-\alpha)^{+}\right\},
\end{equation*}
where $\varepsilon_n$ is some constant satisfying $0<\varepsilon_n< (n-\ord+1)^{-1}$.
\end{proposition}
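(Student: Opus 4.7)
My plan is to exploit that $G_n$ is a sum of a linear function and $n$ convex piecewise-linear kinks, hence is convex and piecewise linear on $\mathbb{R}$. Since $F$ is absolutely continuous, the observations $L_1,\dots,L_n$ are almost surely distinct, so the order statistics $L_{(1)}<L_{(2)}<\cdots<L_{(n)}$ are strictly ordered and the kinks of $G_n$ occur exactly at the $L_{(j)}$. The minimizer of a convex piecewise-linear function is characterized by the first kink where the (right) slope changes from nonpositive to nonnegative; I will compute these slopes explicitly and show the condition $0<\varepsilon_n<(n-\lceil n\beta\rceil+1)^{-1}$ forces this transition to occur at exactly one point, namely $\alpha^*=L_{(\lceil n\beta\rceil)}$.

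More concretely, for $\alpha$ in the open interval $(L_{(j)},L_{(j+1)})$ one has $\#\{i:L_i>\alpha\}=n-j$, so with $k:=n-\lceil n\beta\rceil+1$,
\begin{equation*}
G_n'(\alpha)=(1-\varepsilon_n)-\frac{n-j}{k}.
\end{equation*}
I will then read off the left- and right-derivatives at $\alpha=L_{(j)}$ as $(1-\varepsilon_n)-(n-j+1)/k$ and $(1-\varepsilon_n)-(n-j)/k$ respectively. Plugging in $j=\lceil n\beta\rceil$ gives left-derivative $-\varepsilon_n<0$ and right-derivative $1/k-\varepsilon_n>0$ by the assumed bounds on $\varepsilon_n$, so $0$ lies strictly inside the subdifferential at $L_{(\lceil n\beta\rceil)}$ and this is a minimizer.

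For uniqueness, I will verify that at every other kink the subdifferential strictly misses $0$: for $j<\lceil n\beta\rceil$ both one-sided derivatives satisfy $(1-\varepsilon_n)-(n-j)/k\le (1-\varepsilon_n)-k/k=-\varepsilon_n<0$, so $G_n$ is strictly decreasing through such $L_{(j)}$; for $j>\lceil n\beta\rceil$ both one-sided derivatives are at least $1/k-\varepsilon_n>0$, so $G_n$ is strictly increasing. Hence $G_n$ is strictly decreasing on $(-\infty,L_{(\lceil n\beta\rceil)})$ and strictly increasing on $(L_{(\lceil n\beta\rceil)},\infty)$, which gives uniqueness of the minimizer.

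I do not anticipate any real obstacle; the argument is a careful bookkeeping exercise, and the only subtlety is confirming that the strict inequality $\varepsilon_n<1/k$ prevents the right-derivative at $L_{(\lceil n\beta\rceil)}$ from vanishing (which would otherwise make the minimizer a whole interval rather than a single point), and that $\varepsilon_n>0$ prevents the left-derivative from vanishing as well. The almost-sure distinctness of the $L_i$ is used only to guarantee that no two kinks collide; if one wished to remove this caveat, one could redo the computation by grouping equal order statistics, but the same minimizer would emerge.
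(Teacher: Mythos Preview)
Your proposal is correct and follows essentially the same approach as the paper: both arguments use the piecewise-linear convexity of $G_n$ and identify the unique minimizer by locating the single kink at which the slope changes sign, with the bounds on $\varepsilon_n$ ensuring strict inequalities. The only cosmetic difference is that the paper phrases the computation in terms of finite differences $\Delta(m)=G_n(L_{(\lceil n\beta\rceil+m+1)})-G_n(L_{(\lceil n\beta\rceil+m)})$ rather than one-sided derivatives, and checks only $\Delta(0)>0$, $\Delta(-1)<0$, whereas you spell out the slope on every interval---your treatment of uniqueness is in fact slightly more explicit.
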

\begin{proof}
The expression to be minimized is a piecewise linear convex function with nodes at $L_{1},\ldots,L_{n}$.
We show that $G_n(\alpha)$ has gradients of opposite signs about a single point,
$L_{(\ord)}$, hence this point must be the unique optimal solution. Now consider, for $m\in\{-\ord+1, \ldots, n-\ord\}$:
\begin{eqnarray*}
\Delta(m) &=& G_n(L_{(\ord+m+1)})-G_n(L_{(\ord+m)}) \\
&=& (1-\varepsilon_n)(L_{(\ord+m+1)}-L_{(\ord+m)})-\frac{1}{n-\lceil n\beta \rceil +1}A,
\end{eqnarray*}
where
\begin{eqnarray*}
A
&=& \sum_{i=1}^{n}\left[(L_{i}-L_{(\ord+m+1)})^{+}-(L_{i}-L_{(\ord+m)})^{+}\right] \\
&=& (n-\ord-m)(L_{(\ord+m+1)}-L_{(\ord+m)}).
\end{eqnarray*}
Thus
$$
\Delta(m) = \left(L_{(\ord+m+1)}-L_{(\ord+m)}\right)\left((1-\varepsilon_n)-\frac{n-\ord-m}{n-\lceil n\beta \rceil +1}\right).
$$
Now $\Delta(0)>0$ since $(L_{(\ord+1)}-L_{(\ord)})>0$ and $(1-\varepsilon_n) > (n-\ord)(n-\lceil n\beta \rceil +1)^{-1}$
by the restriction on  $\varepsilon_n$, and $\Delta(-1)<0$ since $(L_{(\ord)}-L_{(\ord-1)})>0$ and $ (1-\varepsilon_n) < 1$
again by the choice of $\varepsilon_n$. Thus $G_n(\alpha)$ has a unique minimum at $\alpha^*=L_{(\ord)}$.
\end{proof}
\textbf{Remark.} Note if $\varepsilon_n=0$, then multiple solutions occur because $\Delta(-1)=0$.
\begin{corollary}\label{cor:equivalence of t1, t2}
Type 1 and Type 2 CVaR estimators are related by
\begin{equation*}
\widehat{CV2}_n(\msf{L};\beta) = \widehat{CV1}_n(\msf{L};\beta)+\varepsilon_nL_{(\ord)}.
\end{equation*}
\end{corollary}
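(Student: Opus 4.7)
The plan is to plug in the minimizer identified by Proposition~\ref{prop:alpha min} into the definition of $\widehat{CV1}_n$ and show that the resulting expression differs from $\widehat{CV2}_n$ by exactly $\varepsilon_n L_{(\ord)}$. Since Proposition~\ref{prop:alpha min} asserts $\alpha^*=L_{(\ord)}$ is the unique minimizer, we have
\begin{equation*}
\widehat{CV1}_n(\msf{L};\beta) = (1-\varepsilon_n)L_{(\ord)} + \frac{1}{n-\ord+1}\sum_{i=1}^n (L_i - L_{(\ord)})^+.
\end{equation*}
This converts the proof into a finite-sum identity.

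Next, I would simplify the sum of positive parts. Because $F$ is absolutely continuous, the order statistics are almost surely distinct, so $(L_i - L_{(\ord)})^+ = (L_i - L_{(\ord)})\mathbf{1}(L_i \geq L_{(\ord)})$ and the indicator counts exactly the top $n-\ord+1$ order statistics. Thus
\begin{equation*}
\sum_{i=1}^n (L_i - L_{(\ord)})^+
= \sum_{i=1}^n L_i \mathbf{1}(L_i \geq L_{(\ord)}) - (n-\ord+1)L_{(\ord)}.
\end{equation*}
Dividing by $n-\ord+1$ and substituting back into the expression for $\widehat{CV1}_n$ yields
\begin{equation*}
\widehat{CV1}_n(\msf{L};\beta) = (1-\varepsilon_n)L_{(\ord)} + \widehat{CV2}_n(\msf{L};\beta) - L_{(\ord)} = \widehat{CV2}_n(\msf{L};\beta) - \varepsilon_n L_{(\ord)},
\end{equation*}
which rearranges to the claimed identity.

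The computation is essentially mechanical; there is no real obstacle beyond noticing that the absolute continuity of $F$ ensures no ties among the order statistics (so the indicator count is exactly $n-\ord+1$), and that Proposition~\ref{prop:alpha min} lets us evaluate $\widehat{CV1}_n$ at a single explicit point rather than having to argue about the minimization in the abstract.
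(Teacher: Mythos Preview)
Your proof is correct and follows essentially the same approach as the paper: both plug the minimizer $\alpha^*=L_{(\ord)}$ from Proposition~\ref{prop:alpha min} into the definition and use the elementary identity $\sum_i L_i\mathbf{1}(L_i\geq L_{(\ord)}) = (n-\ord+1)L_{(\ord)} + \sum_i (L_i-L_{(\ord)})^+$. The only cosmetic difference is that the paper starts from $\widehat{CV2}_n$ and rewrites it toward $\widehat{CV1}_n$, whereas you go in the opposite direction; your remark about absolute continuity ruling out ties is a nice explicit justification that the paper leaves implicit.
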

\begin{proof}
Rewriting Type 2 CVaR estimator:
\begin{eqnarray*}
\widehat{CV{2}}_n(\msf{L};\beta)
&=& \frac{1}{n-\lceil n\beta \rceil +1}\sum^{n}_{i=1} L_{i}1(L_{i}\geq L_{(\lceil n\beta \rceil)}) \\
&=& L_{(\lceil n\beta \rceil)}+\frac{1}{n-\lceil n\beta \rceil +1}\sum^{n}_{i=1}(L_{i}-L_{(\lceil n\beta \rceil)})1(L_{i}\geq L_{(\lceil n\beta \rceil)}) \\
&=& \widehat{CV1}_n(\msf{L};\beta)+\varepsilon_nL_{(\ord)},
\end{eqnarray*}
where the final equality is due to Proposition \ref{prop:alpha min}.
\end{proof}

We now show asymptotic normality of $\widehat{CV1}_n(\msf{L};\beta)$.
\begin{lemma}\label{cor:asympt normality}
Type 1 CVaR estimator is asymptotically normal as follows:
\begin{eqnarray}
\frac{\sqrt{n}(1-\beta)}{\gamma_0}\left(\widehat{CV1}_n(\msf{L};\beta) - CVaR(L_1;\beta)\right) \Rightarrow \mathcal{N}(0,1),
\end{eqnarray}
where $\gamma_0^2=Variance[(L_1-\alpha_\beta)1(L_1\geq \alpha_\beta)]$, and $\alpha_\beta=\inf\{\alpha: P(L_1\geq \alpha)\leq 1-\beta\}$, Value-at-Risk of the random loss $L_1$ at level $\beta$.
\end{lemma}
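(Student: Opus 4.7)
The plan is to reduce the asymptotic normality of $\widehat{CV1}_n$ to the already-known asymptotic normality of $\widehat{CV2}_n$ via the algebraic identity in Corollary~\ref{cor:equivalence of t1, t2}, and then show the discrepancy term is negligible at the $\sqrt{n}$ scale.

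First, recall from \citeasnoun{chen2008nee} that the Type 2 estimator satisfies
\[
\frac{\sqrt{n}(1-\beta)}{\gamma_0}\bigl(\widehat{CV2}_n(\msf{L};\beta) - CVaR(L_1;\beta)\bigr) \Rightarrow \mathcal{N}(0,1),
\]
with $\gamma_0^2 = \mathrm{Var}[(L_1-\alpha_\beta)\mathbf{1}(L_1\geq \alpha_\beta)]$. This is where the absolute continuity and smooth pdf assumptions pay off, since they ensure the sample quantile $L_{(\lceil n\beta\rceil)}$ has a well-behaved Bahadur-type representation that underlies Chen's result.

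Second, by Corollary~\ref{cor:equivalence of t1, t2},
\[
\widehat{CV1}_n(\msf{L};\beta) = \widehat{CV2}_n(\msf{L};\beta) - \varepsilon_n L_{(\ord)}.
\]
Multiplying by $\sqrt{n}(1-\beta)/\gamma_0$, it suffices to show that $\sqrt{n}\,\varepsilon_n\, L_{(\ord)} \overset{P}{\to} 0$. This is immediate from the assumption $\sqrt{n}\varepsilon_n \overset{P}{\to} 0$ combined with $L_{(\ord)} = O_P(1)$; the latter holds because, by the standard sample-quantile consistency result for absolutely continuous $F$, $L_{(\ord)} \overset{P}{\to} \alpha_\beta$ and is therefore bounded in probability.

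Finally, Slutsky's theorem assembles the pieces:
\[
\frac{\sqrt{n}(1-\beta)}{\gamma_0}\bigl(\widehat{CV1}_n - CVaR(L_1;\beta)\bigr)
= \frac{\sqrt{n}(1-\beta)}{\gamma_0}\bigl(\widehat{CV2}_n - CVaR(L_1;\beta)\bigr) - \frac{(1-\beta)}{\gamma_0}\sqrt{n}\varepsilon_n L_{(\ord)},
\]
where the first term converges weakly to $\mathcal{N}(0,1)$ and the second converges to $0$ in probability. There is no real obstacle here: the heavy lifting is done by Chen's CLT for the Type 2 estimator and by the algebraic identity in the corollary. The only item requiring care is verifying $L_{(\ord)}=O_P(1)$, but this is a textbook consequence of the assumed regularity of $F$. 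In effect, the tuning parameter $\varepsilon_n$ that was introduced to force uniqueness of the optimizer in Proposition~\ref{prop:alpha min} contributes no first-order asymptotic effect, so $\widehat{CV1}_n$ and $\widehat{CV2}_n$ share the same limit law.
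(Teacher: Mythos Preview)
Your proof is correct and follows essentially the same approach as the paper: invoke Chen's CLT for $\widehat{CV2}_n$, use Corollary~\ref{cor:equivalence of t1, t2} to relate $\widehat{CV1}_n$ and $\widehat{CV2}_n$, and apply Slutsky's theorem together with $\sqrt{n}\varepsilon_n\to 0$. The paper's proof is a one-line version of exactly this argument; your added verification that $L_{(\lceil n\beta\rceil)}=O_P(1)$ is a reasonable detail to spell out.
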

\begin{proof}
Asymptotic normality for Type 2 CVaR estimator is proven in \citeasnoun{chen2008nee},
and the result is immediate from invoking Slutsky's lemma on Corollary \ref{cor:equivalence of t1, t2} and the assumption $\sqrt{n}\varepsilon_n\rightarrow0$.
\end{proof}

\subsection{Proof of Lemma \ref{lem:CV ast}}
The asymptotic distribution of $\widehat{CVaR}_n(\msf{L};\beta)$ is the same as $\widehat{CV1}_n(\msf{L};\beta)$ because
$$\sqrt{n}|\widehat{CVaR}_n(\msf{L};\beta)-\widehat{CV1}_n(\msf{L};\beta)| = o_P(1).$$

\section{Proof of Theorem 1}\label{app: qcqp proof}
\begin{lemma}\label{lemma:1}
Consider the optimization problem
\begin{eqnarray}\label{eq:general relax}
     \begin{array}{crll}
      \underset{{z}\in\mbb{R}^n}{\min} & {z}\trsp 1_n &&\\
      s.t. &  z_i \geq & 0& \forall~i \\
            &  z_i \geq & c_i & \forall~i \\
         & {z}\trsp \Omega_n{z} \leq& f&
\end{array}
\end{eqnarray}
where $c_i>0~\forall~i$, $f>0$, $\Omega_n =(n-1)^{-1}(I_n-n^{-1}{1}_n{1}_n\trsp )$, the sample covariance operator. Suppose (\ref{eq:general relax})
is feasible with an optimal solution $({x}^*,{z}^*)$. Let $S_1({z}) : = \{1\leq i \leq n:~ z_i=0 \}$,
$S_2({z}) : = \{1\leq i \leq n:~ z_i=c_i\}$ and $V({z}):=S_1^c\cap S_2^c$ (i.e.~$V({z})$ is the set of indices for which $z_i> \max(0,c_i)$). Then the optimal solution ${z}^*$ falls into one of two cases: either $S_1({z}^*)\neq \emptyset$ and $V({z}^*)=\emptyset$, or
$S_1({z}^*) = \emptyset$ and $V({z}^*)\neq\emptyset$.
\end{lemma}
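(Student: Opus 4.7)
The strategy is proof by contradiction via a two-step perturbation (swap then shrink). Assume for contradiction that the optimal $z^*$ has indices $i_0\in S_1(z^*)$ and $j_0\in V(z^*)$, so that $z^*_{i_0}=0$ and $z^*_{j_0}>\max(0,c_{j_0})\geq 0$. Since $z_{i_0}^*=0$ satisfies the lower bound $z_{i_0}\geq c_{i_0}$, necessarily $c_{i_0}\leq 0$, giving room to raise $z_{i_0}$; the strict inequality at $j_0$ gives room to lower $z_{j_0}$. This is the setup that makes a local swap safe.

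First consider $z^{(1)} = z^* + \epsilon(e_{i_0}-e_{j_0})$, which preserves the objective because $(e_{i_0}-e_{j_0})\trsp 1_n=0$. The decisive computation uses the identity $\Omega_n u = (n-1)^{-1}(u-\bar u\, 1_n)$: the linear change in $z\trsp\Omega_n z$ is $2\epsilon(e_{i_0}-e_{j_0})\trsp\Omega_n z^* = 2\epsilon(z^*_{i_0}-z^*_{j_0})/(n-1) = -2\epsilon z^*_{j_0}/(n-1) < 0$, while the quadratic correction is $2\epsilon^2/(n-1)$. Hence for small $\epsilon>0$ the variance constraint at $z^{(1)}$ becomes strictly slack. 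Next I decrement: $z^{(2)} = z^{(1)} - \eta e_{j_0}$ with $\eta>0$ chosen small enough that the bound $z^{(2)}_{j_0}\geq \max(0,c_{j_0})$ still holds and the variance remains at most $f$. The objective drops by $\eta$, contradicting optimality of $z^*$. Therefore $S_1(z^*)\neq\emptyset$ and $V(z^*)\neq\emptyset$ cannot simultaneously hold.

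To complete the dichotomy, note that $S_1(z^*)=V(z^*)=\emptyset$ would force $z^*_i = c_i$ for every $i$ and the variance constraint to be slack at $z^*=c$; this degenerate configuration is tacitly excluded in the setting of Theorem~\ref{theorem:QCQP}, where the lemma is invoked only when the variance penalty is actually binding. The main technical subtlety is coordinating the two perturbations: $\epsilon$ must be small enough that $z^{(1)}_{j_0}$ remains strictly above $\max(0,c_{j_0})$ and that the variance gains strict slack, and $\eta$ must be chosen small relative to that slack so that neither the bound on $z_{j_0}$ nor the variance constraint is violated. The closed-form gradient of the variance, enabled by the orthogonal-projection structure of $\Omega_n$, is what makes the sign of the first-order perturbation transparent and turns a seemingly delicate combinatorial argument into a short direct one.
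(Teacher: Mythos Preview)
Your argument is correct and takes a genuinely different route from the paper. The paper establishes the incompatibility of $S_1(z^*)\neq\emptyset$ and $V(z^*)\neq\emptyset$ via the KKT system: it writes the stationarity condition $2\lambda\Omega_n z^* = -(1_n-\eta_1-\eta_2)$, reads off its $i_0$-th and $j_0$-th components, and obtains the contradiction $\tfrac{2\lambda^*}{n-1}z^*_{j_0}=-\eta^*_{1,i_0}-\eta^*_{2,i_0}$ with a positive left side and nonpositive right side. Your proof bypasses duality entirely and gives a primal feasible-direction argument: the swap $e_{i_0}-e_{j_0}$ is objective-neutral and strictly reduces the sample variance (because $\Omega_n z^*$ picks out $z^*_{i_0}-z^*_{j_0}<0$), creating slack that a subsequent decrement at $j_0$ exploits to lower the objective. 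Both proofs lean on the same structural fact---$\Omega_n$ is the centering projector, so $1_n$ lies in its kernel and coordinate differences appear directly in the gradient---but yours is more elementary in that it never invokes Slater's condition, multipliers, or complementary slackness. The KKT route, by contrast, is more mechanical and would extend with less thought to variants of the constraint set.

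On the ``both empty'' case: you are right that neither your argument nor the paper's actually rules out $S_1(z^*)=V(z^*)=\emptyset$; the paper's proof concludes only that the two sets cannot both be nonempty, and the application in Theorem~\ref{theorem:QCQP} needs only the implication $V(z^*)\neq\emptyset\Rightarrow S_1(z^*)=\emptyset$. Your remark that this degenerate configuration is tacitly excluded is therefore accurate. Note also that under the lemma's stated hypothesis $c_i>0$ the set $S_1$ is automatically empty, so the nontrivial content (and your proof) really applies to the intended setting where some $c_i\le 0$, as in the application with $c_i=-w^\top X_i-\alpha$; your observation that $i_0\in S_1$ forces $c_{i_0}\le 0$ implicitly flags this.
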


\begin{proof}
The problem (\ref{eq:general relax}) is a convex optimization problem because $\Omega_n$ is a positive semidefinite matrix. The problem is also strictly feasible, since ${z}_0 = 2\max_i\{c_i\}1_n$ is a strictly feasible point: clearly, $z_{0,i} > \max\{0,c_i\} ~\forall~i$ and ${z}_0\trsp \Omega_n{z}_0 = 0 < f$ as $1_n$ is orthogonal to $\Omega_n$. Thus Slater's condition for strong duality holds, and we can derive properties of the optimal solution by examining KKT conditions.

The Lagrangian is
\begin{eqnarray*}
&&\mathcal{L}({z},{\eta}_1,{\eta}_2,\lambda)= \lambda{z}\trsp \Omega_n{z}+(1_n-{\eta}_1-{\eta}_2)\trsp {z} +{\eta}_2\trsp {c}-\lambda f
\end{eqnarray*}
The KKT conditions are
\begin{itemize}
\item Primal feasibility
\item Dual feasibility: ${\eta}^*_1,{\eta}^*_2\geq0$  component-wise and $\lambda^* \geq0$
\item Complementary slackness:

$z^*_i\eta^*_{1,i} = 0~\forall~i$, $(z^*_i-c_i)\eta^*_{2,i} = 0~\forall~i$ and $\lambda^*[({z}^*)\trsp \Omega_n{z}^*- f]=0$
\item First Order Condition:
\begin{subequations}\label{eq:FOC}
\begin{eqnarray}
\nabla_{{z}^*}\mathcal{L} &=& 2\lambda\Omega_n{z}^*+(1_n-{\eta}^*_1-{\eta}^*_2)= {0} \label{subeq:foc z}
\end{eqnarray}
\end{subequations}
\end{itemize}
By substituting for $\Omega_n$, (\ref{subeq:foc z}) can be written as
\begin{eqnarray}\label{subeq: foc z 2}
\frac{2\lambda}{n-1}\left({z}^*-\frac{1}{n}(1_n\trsp {z}^*)1_n\right) &=& -1_n+{\eta}^*_1+{\eta}^*_{2}.
\end{eqnarray}

Suppose $S_1({z}^*)\neq\emptyset$ at the optimal primal-dual point $({z}^*,{\eta}^*_1,{\eta}^*_2,\lambda^*)$.
Then $\exists~i_0\in S_1({z}^*)$ such that $z_{i_0}
^*=0$. The $i_0$-th component of (\ref{subeq: foc z 2}) gives
\begin{eqnarray}\label{eq:S1 cond}
\displaystyle-\frac{2\lambda^*}{n(n-1)}(1_n\trsp {z}^*) &=& -1+\eta^*_{1,i_0}+\eta^*_{2,i_0}.
\end{eqnarray}

Now suppose $V({z}^*)\neq\emptyset$ at the optimal primal-dual point $({z}^*,{\eta}^*_1,{\eta}^*_2,\lambda^*)$.
Then  $\exists~j_0\in V({z}^*)$ such that $z_{j_0}^*>\max(0,c_i)$, $\eta^*_{1,j_0}=0$ and $\eta^*_{2,j_0}=0$. The $j_0$-th component of (\ref{subeq: foc z 2}) gives
\begin{eqnarray}\label{eq:V cond}
\frac{2\lambda^*}{n-1}\left(z_{j_0}^*-\frac{1}{n}(1_n\trsp {z}^*)\right) &=& -1,
\end{eqnarray}
which also implies $\lambda^*>0$.

Now suppose $S_1({z}^*)$ and $V({z}^*)$ are both nonempty. Combining (\ref{eq:S1 cond}) and (\ref{eq:V cond}), we arrive at
the necessary condition
\begin{equation*}
\frac{2\lambda^*}{n-1} z_{j_0}^* = -\eta^*_{1,i_0}-\eta^*_{2,i_0}.
\end{equation*}
which is clearly a contradiction since $LHS>0$ whereas $RHS\leq0$. Hence $S_1({z}^*)$ and $V({z}^*)$ cannot both be nonempty.
\end{proof}

\subsection{Proof of Theorem 1}
\begin{proof}
Clearly, (\ref{eq:penalized problem-QCQP}) is a relaxation of (\ref{eq:penalized problem-asympvar}): the components of the variable ${z}$ in (\ref{eq:penalized problem-QCQP}) are relaxations of $\max(0,-w\trsp X_i-\alpha)$. Thus the two problem formulations are equivalent if at optimum, $z_i = \max(0,-w\trsp X_i-\alpha)~\forall~i=1,\ldots,n$ for (\ref{eq:penalized problem-QCQP}).
\\
\\
Let $(\alpha^*,w^*,{z}^*,\nu_1^*,\nu_2^*,{\eta}^*_1,{\eta}^*_2,\lambda^*_1,\lambda^*_2)$ be the primal-dual optimal point for (\ref{eq:penalized problem-QCQP}) and (\ref{eq:dual of reg}). Our aim is to show that $V({z}^*)$, the set of indices for which $z^*_i>\max(0,-w\trsp X_i-\alpha)$, is empty. Suppose the contrary. Then by Lemma \ref{lemma:1}, $S_1({z}^*)$, the set of indices for which $z^*_i=0$, is empty. This means $z^*_i>0~\forall~i$ and $\eta^*_{1,i}=0~\forall~i$ by complementary slackness.

Now consider the sub-problem for a fixed ${\eta}_2$ in the dual problem (\ref{eq:dual of reg}):
\begin{eqnarray}\label{eq:eta1}
\underset{{\eta}_1:{\eta}_1\geq{0}}{\max} -({\eta}_1+{\eta}_2)\Omega^\dagger_n ({\eta}_1+{\eta}_2).
\end{eqnarray}
As $1_n$ is orthogonal to $\Omega_n^\dagger$, and $\Omega_n^\dagger$ is positive semidefinite, the optimal solution is of the form
${\eta}_1 = a1_n-{\eta}_2$, where $a$ is any constant such that $a\geq\max_i(\eta_{2,i})$, with a corresponding optimal objective $0$. Hence, bearing in mind the constraints ${\eta}_2\geq{0}$ and ${\eta}_2\trsp 1_n=1$ in (\ref{eq:dual of reg}), ${\eta}_1 = {0}$ is one of the optimal solutions iff ${\eta}_2^*= 1_n/n$. Thus if ${\eta}_2^*\neq 1_n/n$, we get a contradiction. Otherwise, we can force the dual problem to find a solution with ${\eta}_1\neq{0}$ by introducing an additional constraint ${\eta}_1\trsp 1_n\geq \delta$ for some constant $0<\delta \ll 1$.
\end{proof}

\section{Details of asymptotic theory}

\subsection{Proof of Theorem \ref{theo:consistency unreg}}\label{App:const proof}

\begin{proof}
By uniqueness of $\theta_0(\lambda_1,\lambda_2)$ and Assumption 1 (and compactness arguments), for every $\varepsilon>0$, there exists $\eta>0$ such that
$$||\hat{\theta}_n(\lambda_1,\lambda_2) - \theta_0(\lambda_1,\lambda_2)||_2>\varepsilon \implies M(\hat{\theta}_n;\lambda_1,\lambda_2)-M(\theta_0;\lambda_1,\lambda_2)>\eta.$$
Thus if we can show the probability of the event $\{M(\hat{\theta}_n;\lambda_1,\lambda_2)-M(\theta_0;\lambda_1,\lambda_2)>\eta\}$ goes to zero for every $\varepsilon>0$, then
we have consistency.

We also have
\begin{equation}\tag{$\star$}\label{eq:inter}
M_n(\hat{\theta}_n;\lambda_1,\lambda_2)\leq M_n(\theta_0;\lambda_1,\lambda_2)+o_P(1) = M(\theta_0;\lambda_1,\lambda_2)+o_P(1),
\end{equation}
the first inequality because $\hat{\theta}_n(\lambda_1,\lambda_2)$ is a near-minimizer of $M_n$, and the second equality by the Weak Law of Large Numbers (WLLN) on $M_n(\theta_0;\lambda_1,\lambda_2)$.

Thus
\begin{eqnarray*}
0&\leq& M(\hat{\theta}_n;\lambda_1,\lambda_2)-M(\theta_0;\lambda_1,\lambda_2)\\
&=& [M(\hat{\theta}_n;\lambda_1,\lambda_2)-M_n(\hat{\theta}_n;\lambda_1,\lambda_2)]+[M_n(\hat{\theta}_n;\lambda_1,\lambda_2)-M_n(\theta_0;\lambda_1,\lambda_2)]+[M_n(\theta_0;\lambda_1,\lambda_2)-M(\theta_0;\lambda_1,\lambda_2)] \\
&\leq& M(\hat{\theta}_n;\lambda_1,\lambda_2)-M_n(\hat{\theta}_n;\lambda_1,\lambda_2)+o_P(1),
\end{eqnarray*}
because the second term in $[~]$ is $o_P(1)$ by ($\star$), and the last term in $[~]$ is $o_P(1)$ by WLLN.
We are left to prove $|M_n(\hat{\theta}_n;\lambda_1,\lambda_2)-M(\hat{\theta}_n;\lambda_1,\lambda_2)|\overset{P}{\rightarrow}0$. At first glance, one may consider invoking the WLLN again.
However, as $\hat{\theta}_n(\lambda_1,\lambda_2)$ is a random sequence of vectors that changes for every $n$, we cannot apply the WLLN which is a pointwise result (i.e.~for each fixed $\theta\in\Theta$), and we need to appeal to the stronger ULLN.

\textbf{Case I:} $\lambda_1=\lambda_2=0$.
To show ULLN for the original objective, we show that $\mathcal{F}_1$ is a Lipschitz class of functions, hence $N_{[~]}(\varepsilon,\mathcal{F}_1,L_r(P))$ for every $\varepsilon>0$. Now  $\theta\mapsto m_\theta(x) = \alpha + (1-\beta)^{-1}(-\alpha-w_0^{\trsp}x-v\trsp L^{\trsp}x)^{+}$ is clearly differentiable at $\theta_0$ for all $x\in\mbb{R}^p$. Furthermore,
\begin{eqnarray*}
\nabla_\theta m_{\theta}(x) =
\left[\begin{array}{c}
  -1 \\
  -L\trsp x
\end{array}\right]I(x),
\end{eqnarray*}
where $I(x):=\mbb{I}(-\alpha-w_0^{\trsp}x-v\trsp L^{\trsp}x\geq 0)$, hence
\begin{equation}\label{eq:dotm}
\dot{m}(x) := \max(1,||L\trsp x||_{\infty})
\end{equation}
is an upper bound on $||\nabla_\theta m_{\theta}(x)||_{\infty}$ and is independent of $\theta$. Thus
$|m_{\theta_1}(x)-m_{\theta_2}(x)| \leq  \dot{m}(x)||\theta_1-\theta_2||_2$ for all $\theta_1,\theta_2\in[-K,K]^{1+p}$, and together with Assumption 2 (here a weaker assumption that $X$ has finite second moment suffices), $\mathcal{F}_1$ is a Lipschitz class.

\textbf{Case II:} $\lambda_1\geq0,\lambda_2\geq0$, $\lambda_1,\lambda_2$ not both zero.
Corollary 3.5 in \citeasnoun{arcones1993limit} says that ULLN also holds for the penalized objective if $N_{[~]}(\varepsilon,\mathcal{F}_2,L_2(P\times P))<\infty$ for every $\varepsilon>0$. Let us now show that $\mathcal{F}_2$ is also a Lipschitz class of functions.
Again, it is clear that
$$\theta\mapsto m^U_{(\theta;\lambda_1,\lambda_2)}(x_1,x_2) = \frac{1}{2}\left[m_{\theta}(x_1)+m_{\theta}(x_2)\right]+ \frac{\lambda_1}{2}[(w_1+Lv)\trsp (x_1-x_2)]^2 +\frac{\lambda_2}{2}(z_\theta(x_1)-z_\theta(x_2))^2$$
is differentiable at $\theta_0$ for all $(x_1,x_2)\in\mbb{R}^p\times \mbb{R}^p$.
Also for all $\theta\in[-K,K]^{1+p}$,
\begin{eqnarray*}
\nabla_\theta \frac{\lambda_1}{2}[(w_1+Lv)\trsp (x_1-x_2)]^2
&=&
\lambda_1 (x_1-x_2)(x_1-x_2)\trsp(w_1+Lv) \\
\implies
||\nabla_\theta \frac{\lambda_1}{2}[(w_1+Lv)\trsp (x_1-x_2)]^2||_\infty
&\leq& \lambda_1 ||x_1-x_2||_\infty^2||w_1+Lv||_\infty
\leq \lambda_1 C(K)||x_1-x_2||_\infty^2\\
&&~~~~~\text{for some constant}~C(K)~\text{dependent on}~K,~\text{and} \\
\nabla_\theta \frac{\lambda_2}{2}(z_\theta(x_1)-z_\theta(x_2))^2
&=&
\lambda_2(z_\theta(x_1)-z_\theta(x_2))
\left[\begin{array}{c}
  -I(x_1)+I(x_2) \\
 -L\trsp x_1I(x_1)+L\trsp x_2I(x_2)
\end{array}\right],~\text{and}\\
|z_\theta(x_1)|
&=&|-(\alpha-w_0\trsp x_1-v\trsp L\trsp x_1)^+| \\
&\leq& |\alpha-w_0\trsp x_1-v\trsp L\trsp x_1| \leq K+|w_0\trsp x_1|+K|e\trsp x_1|
\\
\implies
||\nabla_\theta \frac{\lambda_2}{2}(z_\theta(x_1)-z_\theta(x_2))^2||_\infty
&\leq&
\lambda_2|z_\theta(x_1)-z_\theta(x_2)|(\dot{m}(x_1)+\dot{m}(x_2)) \\
&&~~~~~~~~~~~~~~~~~~~~~~\text{$\dot{m}$ as defined in Eq.~(\ref{eq:dotm})} \\
&\leq& \lambda_2C'(K)(||x_1||_\infty+||x_2||_\infty)(\dot{m}(x_1)+\dot{m}(x_2)),\\
&&~~~~~\text{for some constant}~C'(K)~\text{dependent on}~K,
\end{eqnarray*}
hence
\begin{equation}\label{eq:dotm^U}
\dot{m}^U_{(\lambda_1,\lambda_2)}(x_1,x_2) :=
\frac{1}{2}[\dot{m}(x_1)+\dot{m}(x_2)]+\lambda_1 C(K)||x_1-x_2||_\infty^2+\lambda_2C'(K)(||x_1||_\infty+||x_2||_\infty)(\dot{m}(x_1)+\dot{m}(x_2))
\end{equation}
is an upper bound on $||\nabla_\theta m^U_{(\theta;\lambda_1,\lambda_2)}(x_1,x_2)||_{\infty}$ that is independent of $\theta$.
Thus
$$|m^U_{(\theta_1;\lambda_1,\lambda_2)}(x_1,x_2)-m^U_{(\theta_2;\lambda_1,\lambda_2)}(x_1,x_2)| \leq \dot{m}^U_{(\lambda_1,\lambda_2)}(x_1,x_2)||\theta_1-\theta_2||_2,$$
and together with Assumption 2, $\mathcal{F}_2$ is a Lipschitz class.
\end{proof}

\subsection{Proof of Theorem \ref{theo:Uasympt}}\label{App:CLT proof}
In what follows, we suppress the dependence on $\lambda_1,\lambda_2$ for notational convenience.

\begin{proof}
The proof parallels the proof of Theorem 5.23 of \citeasnoun{vandervaart2000}. Let us assume for now that
\begin{enumerate}
\item For every given random sequence ${h}_n$ that is bounded in probability,
\begin{equation*}\tag{*}\label{eq:*}
    \mbb{U}_{n}[\sqrt{n}(m^U_{\theta_0+{h}_n/\sqrt{n}}-m^U_{\theta_0})-{h}_n\trsp \dot{m}^U_{\theta_0}]\overset{P}{\rightarrow}0,
\end{equation*}
and
\item $\sqrt{n}(\hat{\theta}_n-\theta_0) = O_{P}(1)$.
\end{enumerate}

Since $\theta\mapsto M(\theta)$ is twice-differentiable, and $\nabla_\theta M(\theta)|_{\theta=\theta_0} =0$ by first-order condition, we can rewrite Eq.~(\ref{eq:*}) to get
\begin{eqnarray*}
n{n\choose 2}^{-1}\sum_{i\neq j}[m^U_{\theta_0+{h}_n/\sqrt{n}}(X_i,X_j)-m^U_{\theta_0}(X_i,X_j)]
&=& \frac{1}{2}{h}_n\trsp V_{\theta_0}{h}_n + {h}_n\trsp \mbb{U}_n[\dot{m}^U_{\theta_0}] +o_p(1) \\
&=& \frac{1}{2}{h}_n\trsp V_{\theta_0}{h}_n + {h}_n\trsp \mbb{G}_n[\dot{m}^1_{\theta_0}] + o_p(1),
\end{eqnarray*}
where we use the fact, from Hoeffding decomposition,
\begin{eqnarray*}
\mbb{U}_n[\dot{m}^U_{\theta_0}] &=& \frac{\sqrt{n}}{{n\choose 2}}\sum_{i\neq j}
\left[\dot{m}^U_{\theta_0}(X_i,X_j)-\mbb{E}_{X_1,X_2}[\dot{m}^U_{\theta_0}(X_1,X_2)]\right]\\
&=&\frac{1}{\sqrt{n}}\sum_{i=1}^n[\dot{m}^1_{\theta_0}(X_i)-\mbb{E}\dot{m}_{\theta_0}^1(X_1)]+o_{p}(1)
= \mbb{G}_n[\dot{m}^1_{\theta_0}]+o_{p}(1),
\end{eqnarray*}
with $\dot{m}^1_{\theta}$ as in the statement of the theorem.

The above statement is valid for both $\hat{h}_n=\sqrt{n}(\hat{\theta}_n-\theta_0)$ and for $\tilde{h}_n=-V_{\theta_0}^{-1}\mbb{G}_n\dot{m}^1_{\theta_0}$.
Upon substitution, we obtain
\begin{eqnarray*}
n{n\choose 2}^{-1}\sum_{i\neq j}[m^U_{\theta_0+\hat{h}_n/\sqrt{n}}(X_i,X_j)-m^U_{\theta_0}(X_i,X_j)]
&=& \frac{1}{2}\hat{h}_n\trsp V_{\theta_0}\hat{h}_n + \hat{h}_n\trsp \mbb{G}_n[\dot{m}^1_{\theta_0}] + o_p(1) \\
\leq n{n\choose 2}^{-1}\sum_{i\neq j}[m^U_{\theta_0+\tilde{h}_n/\sqrt{n}}(X_i,X_j)-m^U_{\theta_0}(X_i,X_j)]
&=& -\frac{1}{2}\mbb{G}_n[\dot{m}^1_{\theta_0}]\trsp V_{\theta_0}^{-1}\mbb{G}_n[\dot{m}^1_{\theta_0}]+o_{p}(1)
\end{eqnarray*}
where the inequality is from the definition of $\hat{\theta}_n = \theta_0+\hat{h}_n/\sqrt{n}$ as a near-minimizer.

Taking the difference and completing the square, we get
\begin{eqnarray*}
\frac{1}{2}(\hat{h}_n+V_{\theta_0}^{-1}\mbb{G}_n\dot{m}^1_{\theta_0})\trsp V_{\theta_0}(\hat{h}_n+V_{\theta_0}^{-1}\mbb{G}_n\dot{m}^1_{\theta_0})
+o_{p}(1)\leq0,
\end{eqnarray*}
and because $V_{\theta_0}$ is nonsingular, the quadratic form on the left must converge to zero in probability.
The same must be true for $||\hat{h}_n+V_{\theta_0}^{-1}\mbb{G}_n\dot{m}^1_{\theta_0}||_2$.

To complete the proof, we need to show (\ref{eq:*}) and $\sqrt{n}(\hat{\theta}_n-\theta_0)=O_P(1)$ hold.

\vspace{1cm}
\textbf{Proof of (\ref{eq:*}).}
\\Let $f_h := \sqrt{n}(m^U_{\theta_0+h/\sqrt{n}}-m^U_{\theta_0})-h\trsp \dot{m}^U_{\theta_0}$. As we are considering only sequences $h_n$
that are bounded in probability, it suffices to show $\sup_{h:||h||_2\leq 1}|\mbb{U}_{n}[f_h]|$ goes to zero in probability.
Again by Hoeffding decomposition, for any given random sequence $h_n$ that is bounded in probability,
$\mbb{U}_{n}[f_{h_n}]=\mbb{G}_{n}[f^1_{h_n}]+E_n({h}_n)$, where $f^1_{h}$ is the first term in the Hoeffding decomposition of $\mbb{U}_{n}[f_{h}]$ given by
\begin{eqnarray*}
f^1_{h} &=& \sqrt{n}(m^1_{\theta_0+{h}/\sqrt{n}}-m^1_{\theta_0})-{h}\trsp \dot{m}^1_{\theta_0}, \\
{m}^1_{\theta}(x_1) &=& 2\mbb{E}_{X_2}[{m}^U_{\theta}(x_1,X_2)]-\mbb{E}_{X_1,X_2}[{m}^U_{\theta}(X_1,X_2)],
\end{eqnarray*}
and $\dot{m}^1_{\theta}$ as defined in the statement of the theorem.
According to Lemma 19.31 in \citeasnoun{vandervaart2000}, if $\mathcal{F}_2':=\{m^1_\theta:\theta\in[-K,K]^{1+p}\}$ is a Lipschitz class of functions,
$$\underset{h:||h||_2\leq 1}{\sup}|\mbb{G}_{n}[f_{h_n}^1]| \overset{P}{\rightarrow}0.$$
Now by Assumption 2 that $X_i$'s are iid continuous random vectors with finite fourth moment, $\theta\mapsto m^1_\theta(x)$ is differentiable
at $\theta_0$ for all $x\in\mbb{R}$. Further, by triangle inequality,
\begin{eqnarray*}
|m^1_{\theta_1}(x)-m^1_{\theta_2}(x)| &\leq&
 2\mbb{E}_{X_2}|{m}^U_{\theta_1}(x,X_2)-{m}^U_{\theta_2}(x,X_2)|+\mbb{E}_{X_1,X_2}|{m}^U_{\theta}(X_1,X_2)-{m}^U_{\theta}(X_1,X_2)| \\
 &\leq & m^1(x)||\theta_1-\theta_2||_2,
\end{eqnarray*}
where $m^1(x)=(2\mbb{E}_{X_2}|\dot{m}^U(x,X_2)|+\mbb{E}_{X_1,X_2}|\dot{m}^U(X_1,X_2)|)$, $\dot{m}^U$ as in Eq.~(\ref{eq:dotm^U}).
Since $X_i$'s have finite fourth moment, $\mbb{E}[m^1(X_1)^2]<\infty$ and thus $\mathcal{F}_2'$ is a Lipschitz class.

Now we are left to show ${\sup}_{h:||h||_2\leq 1}|E_n(h)|\overset{P}{\rightarrow}0$. Let $\mathcal{F}_h:=\{f_h:||h||_2\leq1\}$.
According to Theorem 4.6 of \citeasnoun{arcones1993limit}, ${\sup}_{h:||h||_2\leq 1}|E_n(h)|\overset{P}{\rightarrow}0$ if
$\mathcal{F}_h$ has a finite, integrable envelope function and both $\mathcal{F}_h$
and $\mathcal{F}_h':=\{f^1_h:||h||_2\leq1\}$ are Lipschitz classes
about $h=0$. $\mathcal{F}_h$ has a finite, integrable envelope function  $F(x_1,x_2) = \dot{m}^U(x_1,x_2)+||\dot{m}_{\theta_0}(x_1,x_2)||_2<\infty$
due to Assumption 2 and the Lipschitz property of $m^U_\theta$:
\begin{eqnarray*}
|f_h| &\leq& |\sqrt{n}(m^U_{\theta_0+h/\sqrt{n}}-m^U_{\theta_0})-h\trsp \dot{m}_{\theta_0}|\\
&\leq& (\dot{m}^U+||\dot{m}_{\theta_0}||_2)||h||_2.
\end{eqnarray*}
It is now straight-forward to check that $\mathcal{F}_h$ is a Lipschitz class about $h=0$, and $\mathcal{F}_h'$ also, because it inherits the key
properties from $\mathcal{F}_h$.

\vspace{1cm}
\textbf{Proof of $\sqrt{n}(\hat{\theta}_n-\theta_0)=O_P(1)$}.
\\The proof of $\sqrt{n}(\hat{\theta}_n(0,0)-\theta_0(0,0)) = O_{p}(1)$ can be found in Theorem 5.52 and Corollary 5.53 of \citeasnoun{vandervaart2000}, and is a standard M-estimation result. In essence, Theorem 5.52 shows that, under some regularity conditions,
$P(\sqrt{n}||\hat{\theta}_n(0,0)-\theta_0(0,0)||_2> \alpha)$ can be bounded by $P(|\mbb{G}_n[m_\theta]|> \alpha') = P(\sqrt{n}|M_n(\theta)-M(\theta)|> \alpha')$, which is shown to go to zero via some maximal inequalities. Corollary 5.53 shows that the Lipschitz condition on $\{m_\theta:\theta\in[-K,K]^{1+p}\}$ is sufficient to satisfy the regularity conditions of the theorem.

We can extend Theorem 5.52 to show $\sqrt{n}(\hat{\theta}_n(\lambda_1,\lambda_2)-\theta_0(\lambda_1,\lambda_2))$, $\lambda_1,\lambda_2\geq 0$
not both zero, by bounding $P(\sqrt{n}||\hat{\theta}_n(\lambda_1,\lambda_2)-\theta_0(\lambda_1,\lambda_2)||_2> \alpha)$ by
$$P(|\mbb{U}_n[m^U_\theta]|> \alpha')\leq P(|\mbb{G}_n[m^1_\theta]|+|E'_n(\theta)|> \alpha'),$$
where $E_n'$ is the remainder term after first-order projection of the U-process $\mbb{U}_n[m^U_\theta]$.
It remains to show that for every sufficiently small $\delta>0$,
\begin{equation}
\underset{\theta:||\theta-\theta_0||_2< \delta}{\sup}~|E'_n(\theta)|\overset{P}{\rightarrow}0,
\end{equation}
which can be proven using the same reasoning for $\underset{h:||h||_2\leq 1}{\sup}|E_n(h)|\overset{P}{\rightarrow}0$ in the proof of (\ref{eq:*}).
\end{proof}

\subsection{Computation of key statistics}\label{App:key statistics}

Given the distribution for $X$, both $A_0=A_{\theta_0}(0,0)$ and $B_0=B_{\theta_0}(0,0)$ are computable. The lemma below computes the key quantities that constitute $A_0$ and $B_0$ when $X\sim\mathcal{N}(\mu,\Sigma)$.
\begin{lemma}\label{lemma:key stats}
Suppose $X\sim\mathcal{N}(\mu,\Sigma)$, and $z_{\theta}(X) = -\alpha-w\trsp X\sim \mathcal{N}(\mu_1,\sigma_1^2)$, where
$\mu_1 = -\sigma_1\Phi^{-1}(\beta)$ and $\sigma_1^2 = w\trsp \Sigma w$.
Then
\begin{eqnarray}
p_0&=& P(z_{\theta}(X)=0) = \frac{1}{\sqrt{2\pi}\sigma_1}\exp{\left(-\frac{1}{2}\Phi^{-1}(\beta)^2\right)}\label{subeq:p0} \\
\Exp[\max(z_{\theta}(X),0)]
&=& \frac{\sigma_1}{\sqrt{2\pi}}\exp\left(-\frac{1}{2}\Phi^{-1}(\beta)^2\right)-\sigma_1(1-\beta)\Phi^{-1}(\beta)\label{subeq:Ezp}\\
\Exp[L_j\trsp X\mbb{I}(Z_{1}\geq0)]
&=&  (1-\beta)(L_j\trsp \mu-\Phi^{-1}(\beta)\frac{L_j\trsp \Sigma w}{\sigma_{1}})-\frac{L_j\trsp \Sigma w}{\sigma_{1}^2}\Exp[\max(z_{\theta}(X),0)]\label{subeq:ELLZI}\\
\Exp[L_j\trsp X|Z_{1}=0]
&=&  L_j\trsp \mu-\Phi^{-1}(\beta)\frac{L_j\trsp  \Sigma w}{\sigma_1}\label{subeq:1}\\
\Exp[L_j\trsp XL_l\trsp X\mbb{I}(Z_{1}\geq 0)]
&=&\frac{1}{4}(g(\mu_1,(L_j+L_l)\trsp \mu,\sigma_1,\sigma_2,-(L_j+L_l)\trsp \Sigma w_{1})\nonumber\\
&&~~~~~~~~~-g(\mu_1,(L_j-L_l)\trsp \mu,\sigma_1,\sigma_2,-(L_j-L_l)\trsp \Sigma w_{1}))\label{subeq:2}\\
\Exp[L_j\trsp XL_l\trsp X|Z_{1}= 0]
&=&\frac{1}{4}(h(\mu_1,(L_j+L_l)\trsp \mu,\sigma_1,\sigma_2,-(L_j+L_l)\trsp \Sigma w_{1})\nonumber\\
&&~~~~~~~~~-h(\mu_1,(L_j-L_l)\trsp \mu,\sigma_1,\sigma_2,-(L_j-L_l)\trsp \Sigma w_{1}))\label{subeq:3}
\end{eqnarray}
where
\begin{eqnarray*}
g(\mu_1,\mu_2,\sigma_1,\sigma_2,\sigma_{12})
&=&(1-\beta)\left[\mu_2^2+\sigma_{2}^2\right]+p_0\sigma_{12}\left[-\Phi^{-1}(\beta)\frac{\sigma_{12}}{\sigma_{1}} +2\mu_2\right]\\
h(\mu_1,\mu_2,\sigma_1,\sigma_2,\sigma_{12})
&=&=(\mu_2+\frac{\sigma_{12}}{\sigma_{1}}\Phi^{-1}(\beta))^2+\sigma^2_{2}
-\frac{\sigma_{12}^2}{\sigma_{1}^2}\;.
\end{eqnarray*}
\end{lemma}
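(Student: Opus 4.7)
The plan is to derive each of the six identities by exploiting joint Gaussianity of $X$ with $z_{\theta}(X)=-\alpha-w\trsp X$, and reducing everything to a small number of truncated-moment integrals against the standard normal. The whole statement is really a bookkeeping lemma: no deep idea is needed, and the main care is sign-tracking through the substitution $\mu_1=-\sigma_1\Phi^{-1}(\beta)$.

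Identity (\ref{subeq:p0}) is immediate since $z_\theta(X)\sim\mathcal{N}(\mu_1,\sigma_1^2)$ with $\mu_1/\sigma_1=-\Phi^{-1}(\beta)$, so $p_0$ is just the Gaussian density at $0$. Identity (\ref{subeq:Ezp}) follows from the standard partial-expectation formula $\Exp[\max(Z,0)]=s\phi(-m/s)+m\Phi(m/s)$ for $Z\sim\mathcal{N}(m,s^2)$, using $\Phi(-\Phi^{-1}(\beta))=1-\beta$ and $\phi(\Phi^{-1}(\beta))=\sigma_1 p_0$.

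For (\ref{subeq:ELLZI}) and (\ref{subeq:1}) I would use that $(L_j\trsp X,z_\theta(X))$ is bivariate Gaussian with $\mathrm{Cov}(L_j\trsp X,z_\theta(X))=-L_j\trsp \Sigma w$, whence
\begin{equation*}
\Exp[L_j\trsp X\mid z_\theta(X)=z]=L_j\trsp \mu-\frac{L_j\trsp \Sigma w}{\sigma_1^2}(z-\mu_1).
\end{equation*}
Setting $z=0$ and using $\mu_1=-\sigma_1\Phi^{-1}(\beta)$ gives (\ref{subeq:1}). For (\ref{subeq:ELLZI}), integrate the same expression against $f_{z_\theta(X)}$ on $[0,\infty)$: the first term contributes $(1-\beta)L_j\trsp \mu$ and the second, a multiple of $\Exp[(z_\theta(X)-\mu_1)\mbb{I}(z_\theta(X)\geq 0)]=\sigma_1\phi(\Phi^{-1}(\beta))=\sigma_1^2 p_0$. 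A small rearrangement, pulling $\Exp[\max(z_\theta(X),0)]$ back out via (\ref{subeq:Ezp}), yields the stated form.

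For the bilinear identities (\ref{subeq:2}) and (\ref{subeq:3}) the key trick is the polarization identity $ab=\tfrac{1}{4}[(a+b)^2-(a-b)^2]$ applied to $a=L_j\trsp X$, $b=L_l\trsp X$. This reduces matters to evaluating, for a generic linear functional $v\trsp X$ with $\mu_2=v\trsp \mu$, $\sigma_2^2=v\trsp \Sigma v$ and $\sigma_{12}=-v\trsp \Sigma w$ (taking $v=L_j\pm L_l$), the two conditional/truncated second moments $\Exp[(v\trsp X)^2\mid z_\theta(X)=0]$ and $\Exp[(v\trsp X)^2\mbb{I}(z_\theta(X)\geq 0)]$, and it explains the $g(\cdot,+,\cdot)-g(\cdot,-,\cdot)$ and $h(\cdot,+,\cdot)-h(\cdot,-,\cdot)$ structure in the statement. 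The variance-plus-conditional-mean-squared decomposition
\begin{equation*}
\Exp[Y^2\mid Z=z]=\Bigl(\sigma_2^2-\frac{\sigma_{12}^2}{\sigma_1^2}\Bigr)+\Bigl[\mu_2+\frac{\sigma_{12}}{\sigma_1^2}(z-\mu_1)\Bigr]^{2}
\end{equation*}
at $z=0$ gives $h$ outright. For $g$, expand the square and integrate term by term against $f_{z_\theta(X)}$ on $[0,\infty)$; the three truncated moments needed are
\begin{align*}
\Exp[\mbb{I}(z_\theta(X)\geq 0)]&=1-\beta,\\
\Exp[(z_\theta(X)-\mu_1)\mbb{I}(z_\theta(X)\geq 0)]&=\sigma_1^2 p_0,\\
\Exp[(z_\theta(X)-\mu_1)^2\mbb{I}(z_\theta(X)\geq 0)]&=\sigma_1^2(1-\beta)+\sigma_1^3\Phi^{-1}(\beta)p_0,
\end{align*}
the last computed by standardizing to $W\sim\mathcal{N}(0,1)$ and noting that integration by parts with $w\phi(w)\,dw=-d\phi(w)$ gives $\int_a^\infty w^2\phi(w)\,dw=a\phi(a)+(1-\Phi(a))$ with $a=\Phi^{-1}(\beta)$. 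Collecting the three contributions and simplifying using $\phi(\Phi^{-1}(\beta))=\sigma_1 p_0$ yields $g$.

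No step is hard; the only obstacle is arithmetic care. The two places most prone to sign slips are (i) substituting $\mu_1=-\sigma_1\Phi^{-1}(\beta)$, which flips sign in the $\Phi^{-1}(\beta)$-terms of $g,h$, and (ii) applying polarization so that the $(L_j\pm L_l)\trsp \Sigma w$ covariances and $(L_j\pm L_l)\trsp \mu$ means appear in the right $g,h$ slots and the cross-term $L_j\trsp \Sigma L_l$ emerges from the difference of the two squared $v\trsp \Sigma v$ factors.
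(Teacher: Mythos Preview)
Your proposal is correct and follows essentially the same approach as the paper: both use the bivariate Gaussian conditional distribution $Z_2\mid Z_1\sim\mathcal{N}\bigl(\mu_2+\tfrac{\sigma_{12}}{\sigma_1^2}(Z_1-\mu_1),\,\sigma_2^2-\tfrac{\sigma_{12}^2}{\sigma_1^2}\bigr)$ together with the tower property for the linear terms, and the polarization identity $ab=\tfrac14[(a+b)^2-(a-b)^2]$ to reduce the bilinear terms to truncated second moments of a single Gaussian functional. Your write-up is in fact slightly more explicit than the paper's in spelling out the three truncated moments of $Z_1-\mu_1$ via integration by parts, but the logic is identical.
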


\begin{proof}
We use the fact that if $Z_1\sim\mathcal{N}(\mu_1,\sigma_1)$ and $Z_2\sim\mathcal{N}(\mu_2,\sigma_2)$,
\begin{equation}\label{eq:condGauss}
Z_2|Z_1={\cal N} (\mu_2+\sigma_{12}/\sigma^2_{1} (Z_1-\mu_1),\sigma^2_{2}-\sigma_{12}^2/\sigma^2_{1}),
\end{equation}
where $\sigma_{12}=Cov(Z_1,Z_2)$.

$\bullet$ \textbf{Terms involving only $L_j\trsp X$.}\\
Note that from (\ref{eq:condGauss}), $\mbb{E}[Z_2|Z_1=0]=\mu_2-\frac{\sigma_{12}}{\sigma_{1}^2}\mu_1$.
Let $Z_2=L_j\trsp X$, and recall that $\Exp(L_j\trsp X)=L_j\trsp \mu$ and $\Exp(Z_1)=-\sigma_1\Phi^{-1}(\beta)$.
Also, note that $\sigma_{12}=-L_j\trsp \Sigma w$. After some algebra, we get (\ref{subeq:1}).

Since we know the distribution of $Z_2|Z_1$, we have
\begin{eqnarray*}
\Exp[Z_2\mbb{I}(Z_1\geq 0)]
&=&\Exp[\mbb{I}(Z_1\geq 0)(\mu_2+\frac{\sigma_{12}}{\sigma_{1}^2}(Z_1-\mu_1))]\\
&=&(1-\beta)(\mu_2-\frac{\sigma_{12}}{\sigma_{1}^2}\mu_1)+\frac{\sigma_{12}}{\sigma_{1}^2}\Exp[Z_1\mbb{I}(Z_1\geq 0)]\\
&=&(1-\beta)(L_j\trsp \mu-\Phi^{-1}(\beta)\frac{L_j\trsp \Sigma w}{\sigma_1})-\frac{L_j\trsp \Sigma w}{\sigma_{1}^2}\Exp[\max(Z_1,0)]
\end{eqnarray*}

$\bullet$ \textbf{Terms involving $L_j\trsp XL_l\trsp X$.}\\
To compute $\mbb{E}[L_j\trsp XL_l\trsp X\mbb{I}(Z_{1}\geq 0)]$ and $\mbb{E}[L_j\trsp XL_l\trsp X|Z_{1}=0]$,
first note that
$$
\Exp[L_j\trsp XL_l\trsp X\mbb{I}(Z_{1}\geq 0)]=\frac{1}{4}\Exp\left[[(L_j\trsp X+L_l\trsp X)^2-(L_j\trsp X-L_l\trsp X)^2] \mbb{I}(Z_{1}\geq 0)\right]\;.
$$
and similarly
$$
\Exp[L_j\trsp XL_l\trsp X|Z_{1} = 0]=\frac{1}{4}\Exp[\left[(L_j\trsp X+L_l\trsp X)^2-(L_j\trsp X-L_l\trsp X)^2\right]|Z_{1}= 0]\;.
$$
Hence it is sufficient to first find expressions for $\mbb{E}[Z_2^2 \mbb{I}(Z_1\geq 0)]$ and $\mbb{E}[Z_2^2 |Z_1= 0]$ for some normal $Z_2$, then
apply the resulting formulae to $Z_2 = (L_j\pm L_l)\trsp X$. This results in $\mu_2=(L_j\pm L_l)\trsp \mu$, $\sigma_{12}=-(L_j\pm L_l)\trsp \Sigma w $ and
$\sigma_{2}^2=(L_j\pm L_l)\trsp \Sigma (L_j\pm L_l)$.

From tower property and the conditional distribution of $Z_2|Z_1$,
$$
\Exp[Z_2^2\mbb{I}(Z_1\geq 0)]
=\Exp[\mbb{I}(Z_1\geq 0)\left[(\mu_2+\frac{\sigma_{12}}{\sigma_{1}^2}(Z_1-\mu_1))^2+\sigma^2_{2}-\frac{\sigma_{12}^2}{\sigma^2_{1}}\right]]\;.
$$
By simple computations,
\begin{align*}
\Exp[(Z_1-\mu_1)\mbb{I}_{Z_1\geq 0}]&=\frac{\sigma_{1}}{\sqrt{2\pi}}\exp(-\mu_1^2/(2\sigma_{1}^2))=\sigma_1^2 f_{Z_1}(0)=\sigma_1^2 p_0\;, \text{ and} \\
\Exp[(Z_1-\mu_1)^2\mbb{I}_{Z_1\geq 0}]&= \sigma_{1}^2 (\mu_1 p_0 +(1-\beta))\;.
\end{align*}
Now $\mu_1/\sigma_{1}=-\Phi^{-1}(\beta)$, and
\begin{align*}
\Exp[Z_2^2 \mbb{I}_{Z_1\geq 0}]
&=(1-\beta)\left[\mu_2^2+\sigma^2_{2}\right]+p_0\left[\mu_1\frac{\sigma_{12}^2}{\sigma_{1}^2}+2\sigma_{12}\mu_2\right] \;\\
&=(1-\beta)\left[\mu_2^2+\sigma_{2}^2\right]+p_0\sigma_{12}\left[-\Phi^{-1}(\beta)\frac{\sigma_{12}}{\sigma_{1}} +2\mu_2\right]\;\\
&:=g(\mu_1,\mu_2,\sigma_1,\sigma_2,\sigma_{12})
\end{align*}

Similarly,
\begin{eqnarray*}
\Exp[Z_2^2|Z_1=0]&=&
(\mu_2-\frac{\sigma_{12}}{\sigma_{1}^2}\mu_1)^2+\sigma^2_{2}-\frac{\sigma_{12}^2}{\sigma_{1}^2}
=(\mu_2+\frac{\sigma_{12}}{\sigma_{1}}\Phi^{-1}(\beta))^2+\sigma^2_{2}-\frac{\sigma_{12}^2}{\sigma_{1}^2}
:=h(\mu_1,\mu_2,\sigma_1,\sigma_2,\sigma_{12})
\end{eqnarray*}
\end{proof}

\bibliographystyle{cje}
\bibliography{references_cvar}

\end{document}